\newwrite\XTR
\newcommand{\defineXtrCommand}[2]{%
	\write\XTR{%
		\string\newcommand%
		{\csname myedit#1\endcsname}%
	}%
	\write\XTR{{#2}}%
}
\newcommand{\myedit}[2]{
	\defineXtrCommand{#1}{%
		``\expandafter\string#2''
		\null \hfill \mbox{(Page\string~\thepage)}%
	}%
	\defineXtrCommand{#1String}{%
		\expandafter\string#2%
	}%
	\defineXtrCommand{#1Page}{%
		\thepage%
	}%
	#2%
}
\newcommand{\myeditNoPrint}[2]{
	\defineXtrCommand{#1}{%
		``\expandafter\string#2''
		\null \hfill (Page\string~\thepage)%
	}%
	\defineXtrCommand{#1String}{%
		\expandafter\string#2%
	}%
	\defineXtrCommand{#1Page}{%
		\thepage%
	}%
}
\algrenewcommand\algorithmicindent{.75em} 
\newcommand{\commentMarker}{\#}
\algrenewcommand\algorithmiccomment[1]{\hfill \commentMarker\ #1}
\newcites{Supp}{References for Supplement}
\newtheorem{theorem}{Theorem}[section]
\newtheorem{lemma}[theorem]{Lemma}
\newtheorem{corollary}[theorem]{Corollary}
  \let\c@rownum\rownum
  \def\therownum{\@arabic\rownum}%
\definecolor{jhuBlue}{RGB}{0, 45, 114} 
\definecolor{jhuSpiritBlue}{RGB}{114, 172, 229} 
\definecolor{jhuSecondaryRed}{RGB}{118, 35, 47} 
\definecolor{jhuSecondaryOrange}{RGB}{207, 69, 32} 
\definecolor{jhuSecondaryGreen}{RGB}{0, 115, 119} 
\definecolor{jhuSecondaryYellow}{RGB}{241, 196, 0} 
\definecolor{jhuAccentBrown}{RGB}{79, 44, 29} 
\definecolor{jhuAccentBrownishYellow}{RGB}{203, 160, 82} 
\definecolor{turquoise}{rgb}{0.19, 0.84, 0.78}
\definecolor{mediumturquoise}{rgb}{0.28, 0.82, 0.8}
\definecolor{lava}{rgb}{0.81, 0.06, 0.13}
\colorlet{markovColor}{gray!6}
\colorlet{defaultNutsColor}{jhuAccentBrownishYellow!8} 
\colorlet{altNutsColor}{jhuSecondaryGreen!8} 
\colorlet{hzzManualColor}{jhuBlue!8}
\colorlet{harmonicHmcColor}{jhuSecondaryRed!8}
\colorlet{botevColor}{jhuAccentBrown!10}
\newcommand{\given}{\, | \,}
\newcommand{\diff}{\operatorname{\mathrm{d}}\!{}}
\newcommand{\sign}{\operatorname{sign}}
\newcommand{\diag}{\mathrm{diag}}
\DeclareMathOperator*{\argmin}{argmin}
\renewcommand{\complement}{{\raisebox{1.5pt}{\scriptsize $\mathsf{c}$}}}
\newcommand{\transpose}{\intercal} 
\newcommand{\lowerscript}[1]{\raisebox{-2pt}{\scriptsize $#1$}}
\newcommand{\yesnumber}{\addtocounter{equation}{1}\tag{\theequation}}
\newcommand{\thinnerspace}{\mskip.5\thinmuskip}
\newcommand{\spaceBeforePartial}{\mskip\thinmuskip}
\newcommand{\nParam}{d}
\newcommand{\principalComp}{\bm{u}}
\newcommand{\ball}{B}
\newcommand{\bmu}{\bm{\mu}}
\newcommand{\bPhi}{\bm{\Phi}}
\newcommand{\Id}{\bm{I}}
\newcommand{\observation}{y}
\newcommand{\bobservation}{\bm{\observation}}
\newcommand{\eigenValue}{\nu}
\newcommand{\position}{x}
\newcommand{\bposition}{\bm{\position}}
\newcommand{\momentum}{p}
\newcommand{\bmomentum}{\bm{\momentum}}
\newcommand{\velocity}{v}
\newcommand{\bvelocity}{\bm{\velocity}}
\newcommand{\positionMarginal}{\pi_{\scalebox{.65}{$X$}}} 
\newcommand{\momentumMarginal}{\pi_{\scalebox{.65}{$P$}}} 
\newcommand{\bPhiVelocity}[1][]{\bm{\varphi}_{\bvelocity #1}}
\newcommand{\bPhiPosition}[1][]{\bm{\varphi}_{\bposition #1}}
\newcommand{\switchRate}{\lambda}
\newcommand{\velocityFlipOperator}{F}
\newcommand{\potential}{U}
\newcommand{\kinetic}{K}
\newcommand{\problematicSet}{S}
\newcommand{\zeroMeasureSet}{\Omega}
\newcommand{\solutionOp}{\bm{\Psi}}
\newcommand{\momentumFlipOp}{\bm{R}}
\newcommand{\integrationTime}{T}
\newcommand{\baseIntegrationTime}{\Delta \integrationTime}
\newcommand{\baseIntegrationTimeMultiplier}{\baseIntegrationTime_{\textrm{rel}}}
\newcommand{\interEventTime}{t}
\newcommand{\dt}{{\Delta t}}
\newcommand{\superH}{{\scalebox{.65}{$\mathrm{H}$}}}
\newcommand{\superM}{{\scalebox{.65}{$\mathrm{M}$}}}
\newcommand{\eventIndicator}{I}
\newcommand{\gradBound}{L}
\newcommand{\hessianBound}{\kappa}
\newcommand{\positionPosition}[1][t]{%
	\dfrac{\partial \bposition_#1}{\partial \bposition}%
}
\newcommand{\positionMomentum}[1][t]{%
	\dfrac{\partial \bposition_#1}{\partial \bmomentum}%
}
\newcommand{\momentumPosition}{%
	\dfrac{\partial \bmomentum_t}{\partial \bposition}%
}
\newcommand{\momentumMomentum}{%
	\dfrac{\partial \bmomentum_t}{\partial \bmomentum}%
}
\newcommand{\spEventIndex}{i} 
\newcommand{\nLinearConstraint}{L}
\newcommand{\velocityFlipIndexSet}{I}
\newcommand{\probability}{\mathbb{P}}
\newcommand{\variance}{\mathrm{Var}}
\DeclareMathOperator*{\covariance}{Cov}
\newcommand{\indicator}{\mathds{1}}
\newcommand{\eqDistribution}{\mathrel{\raisebox{-.2ex}{$\overset{\scalebox{.6}{$\, d$}}{=}$}}}
\newcommand{\iidSim}{\mathrel{\raisebox{-.3ex}{$\overset{\text{i.i.d.}}{\sim}$}}}
\newcommand{\expDist}{\mathrm{Exp}}
\newcommand{\laplaceDist}{\mathrm{Laplace}}
\newcommand{\unifDist}{\mathrm{Unif}}
\newcommand{\normalDist}{\mathcal{N}}
\newcommand{\uniformRv}{u}
\newcommand{\buniformRv}{\bm{\uniformRv}}
\newcommand{\altUniformRv}{w}
\newcommand{\baltUniformRv}{\bm{\altUniformRv}}
\newcommand{\hmc}{\textsc{hmc}}
\newcommand{\Hmc}{\textsc{Hmc}}
\newcommand{\nuts}{\textsc{nuts}}
\newcommand{\Nuts}{\textsc{Nuts}}
\newcommand{\ess}{\textsc{ess}}
\newcommand{\Ess}{\textsc{Ess}}
\newcommand{\mcmc}{\textsc{mcmc}}
\newcommand{\pdmp}{\textsc{pdmp}}
\newcommand{\Pdmp}{\textsc{Pdmp}}
\newcommand{\bps}{the bouncy particle sampler}
\newcommand{\zigzag}{the zigzag sampler}
\newcommand{\hiv}{\textsc{hiv}}
\newcommand{\jhpce}{\textsc{jhpce}}
\newcommand{\phylogeny}{\mathcal{T}}
\newcommand{\latentData}{\bposition}
\newcommand{\traitCovariance}{\bm{\Gamma}}
\newcommand{\nTaxa}{n}
\newcommand{\nTraits}{m}
\newcommand{\blind}{1}
\begin{document}

\def\spacingset#1{\renewcommand{\baselinestretch}%
{#1}\small\normalsize} \spacingset{1}


\date{}
\newcommand{\titleString}{%
	Zigzag path connects two Monte Carlo samplers: Hamiltonian counterpart to a piecewise deterministic Markov process%
}
\if1\blind
{
  \title{\bf \titleString}
  \author{Akihiko Nishimura\\
    Department of Biostatistics, Bloomberg School of Public Health,\\ 
    Johns Hopkins University\\
    \null \\
    Zhenyu Zhang\\
    Department of Biostatistics, University of California, Los Angeles\\
    \null \\
    Marc A.\ Suchard\\
    Department of Biostatistics, Computational Medicine, and Human Genetics,\\
    University of California, Los Angeles}
  \maketitle
  \vspace*{-\baselineskip} 
} \fi

\if0\blind
{
  \bigskip
  \bigskip
  \bigskip
  \begin{center}
    {\LARGE\bf \titleString}
\end{center}
  \medskip
} \fi

\bigskip
\begin{abstract}
Zigzag and other piecewise deterministic Markov process samplers have attracted significant interest for their non-reversibility and other appealing properties for Bayesian posterior computation.
Hamiltonian Monte Carlo is another state-of-the-art sampler, exploiting fictitious momentum to guide Markov chains through complex target distributions.
We establish an important connection between the zigzag sampler and a variant of Hamiltonian Monte Carlo based on Laplace-distributed momentum.
The position and velocity component of the corresponding Hamiltonian dynamics travels along a zigzag path paralleling the Markovian zigzag process; 
however, the dynamics is non-Markovian in this position-velocity space as the momentum component encodes non-immediate pasts.
This information is partially lost during a momentum refreshment step, in which we preserve its direction but re-sample magnitude.
In the limit of increasingly frequent momentum refreshments, we prove that Hamiltonian zigzag converges strongly to its Markovian counterpart.
This theoretical insight suggests that, when retaining full momentum information, Hamiltonian zigzag can better explore target distributions with highly correlated parameters by suppressing the diffusive behavior of Markovian zigzag.
We corroborate this intuition by comparing performance of the two zigzag cousins on high-dimensional truncated multivariate Gaussians, including a 11,235-dimensional target arising from a Bayesian phylogenetic multivariate probit modeling of \hiv{} virus data. 
\end{abstract}

\noindent%
{\it Keywords:} Bayesian statistics, Hamiltonian Monte Carlo, Markov chain Monte Carlo, non-reversible, piecewise deterministic Markov process, truncated normal distribution
\vfill

\newpage
\spacingset{1.75} 

\section{Introduction}
\label{sec:intro}
Emergence of Monte Carlo methods based on continuous-time, non-reversible processes has been hailed as fundamentally new development \citep{fearnhead2018piecewise_deterministic_process} and has attracted an explosion of interest \citep{dunson2020hastings_algorithm}.
The two most prominent among such algorithms are the bouncy particle sampler \citep{bouchard2018bouncy_particle_sampler} and zigzag sampler \citep{bierkens2019zigzag_original}, both having roots in the computational physics literature \citep{peters2012rejection_free_monte_carlo, turitsyn2011irreversible_monte_carlo}.
These algorithms draw samples from target distributions by simulating piecewise deterministic Markov processes (\pdmp{}), which move along linear trajectories with instantaneous changes in their velocities occurring at random times according to inhomogeneous Poisson processes.

One known issue with \bps{} is its near-reducible behavior in the absence of frequent velocity refreshment \citep{bouchard2018bouncy_particle_sampler,  fearnhead2018piecewise_deterministic_process}.
In case of a high-dimensional i.i.d.\ Gaussian, \cite{bierkens2018scaling_of_pdmp_sampler} show that \bps{}'s optimal performance is achieved when refreshment accounts for as much as 78\% of all the velocity changes.
Such frequent velocity refreshment can lead to ``random-walk behavior,'' hurting computational efficiency of the sampler \citep{neal2010hmc, fearnhead2018piecewise_deterministic_process, andrieu2019peskun_ordering_beyond_reversible}.

\expandafter\MakeUppercase \zigzag{} on the other hand is provably ergodic without velocity refreshment \citep{bierkens2019ergodicity} and appears to have a competitive edge in high-dimensions \citep{bierkens2018scaling_of_pdmp_sampler}.
Much remains unknown, however, about how these samplers perform relative one another and against other classes of algorithms.
Early empirical results, while informative, remain limited by their focus on low-dimensional or synthetic examples. 
For example, the simulated logistic regression examples in \cite{bierkens2019zigzag_original} only have 16 regression coefficients and the most complex examples of \cite{bierkens2018scaling_of_pdmp_sampler} comprise 256-dimensional multi-variate Gaussians and spherically symmetric t-distributions.
Generating further theoretical and empirical insight on the practical performance of these algorithms thus stands out as one of the most critical research areas \citep{fearnhead2018piecewise_deterministic_process, bierkens2019zigzag_original, dunson2020hastings_algorithm}.

This article brings novel insight into \zigzag{}'s performance by revealing its intimate connection to a version of Hamiltonian Monte Carlo (\hmc{}) \citep{duane1987hmc, neal2010hmc}, another state-of-the-art paradigm for Bayesian computation.
\Hmc{} exploits an auxiliary momentum variable and simulates Hamiltonian dynamics to guide exploration of the original parameter space.
This notion of ``guidance'' provided by momentum has historically inspired the earliest examples of non-reversible Monte Carlo algorithms \citep{diaconis2000nonreversble_chain}.
Beyond this analogy and heuristic, however, the precise nature of relation between \hmc{} and non-reversible methods has never been explored.
In fact, the \pdmp{} sampler literature has so far mentioned \hmc{} only in passing or used it merely as a computational benchmark \citep{bierkens2019zigzag_original, bouchard2018bouncy_particle_sampler, sherlock2021discrete_bps}.
A faint link between the two paradigms appears in \cite{deligiannidis2021randomized}, who observe a univariate randomized \hmc{} to emerge as the first-coordinate marginal of \bps{} in a high-dimensional limit.
Their weak convergence result, however, only concerns the univariate marginal and offers little insight as to \textit{why} a randomized \hmc{} may appear as the limit.

We study a less explored version of \hmc{}, based on the momentum variable components $\momentum_i$ having independent Laplace distributions.
The corresponding Hamiltonian dynamics follows a zigzag path akin to that of the Markovian zigzag process. 
We thus call the resulting algorithm \emph{zigzag \hmc{}} and, to differentiate the dynamics underlying the two samplers, refer to them as \emph{Hamiltonian} and \emph{Markovian zigzag}.
In other words, except when we explicitly invoke partial momentum refreshment as a theoretical tool, Hamiltonian zigzag refers to the deterministic dynamics that constitutes a proposal generation mechanism for zigzag \hmc{}.
Hamiltonian zigzag in a sense is also ``Markovian,'' although this term is rarely applied to a deterministic process, in the position-momentum space as its future trajectory depends solely on the current state.
Hamiltonian zigzag becomes non-Markovian, however, when viewed as dynamics in the position-velocity space, velocity being the time derivative of position.

We establish that Markovian zigzag is essentially Hamiltonian zigzag with ``less momentum,'' thereby providing a unified perspective to compare the two Monte Carlo paradigms.
We consider a partial momentum refreshment for Hamiltonian zigzag,  in which $\momentum_i$'s retain their signs but have their magnitudes resampled from the exponential distributions.
With this refreshment step inserted at every $\dt > 0$ time interval, Hamiltonian zigzag converges strongly to its Markovian counterpart as $\dt \to 0$.

This result has significant implications in the relative performance of the two zigzag algorithms.
Markovian zigzag to some extent avoids random-walk behavior by retaining its direction from previous moments;
the inertia induced by retaining full momentum information, however, may allow Hamiltonian zigzag to better explore the space when parameters exhibit strong dependency.
The intuition is as follows.
Along each coordinate, the partial derivative of the log-density depends on other coordinates and  its sign can flip back and forth as the dynamics evolves.
Such fluctuation leads to inconsistent guidance from the derivative, ``pushing'' the dynamics to one direction at one moment and to the opposite at the next.
However, momentum can help each coordinate of the dynamics keeps traveling in an effective 
direction without being affected by small fluctuation in the derivative.

The advantage of having full momentum is visually illustrated in Figure~\ref{fig:zigzag_trajectory_on_ar_target}.
After comparable amounts of computation,
Hamiltonian zigzag has traversed a high-density region while Markovian zigzag is still slowly diffusing away from the initial position.
This difference in behavior also manifests itself in the overall distance traveled by the dynamics (Figure~\ref{fig:sq_traveled_dist_on_AR_target}).
Our observation here suggests that, with its ability to make larger transitions under comparable computational efforts, Hamiltonian zigzag constitutes a more effective transition kernel.

\begin{figure}
	\centering
	\hspace*{.005\textwidth}
	\begin{minipage}{0.49\textwidth}
	    \includegraphics[height=.33\textheight]{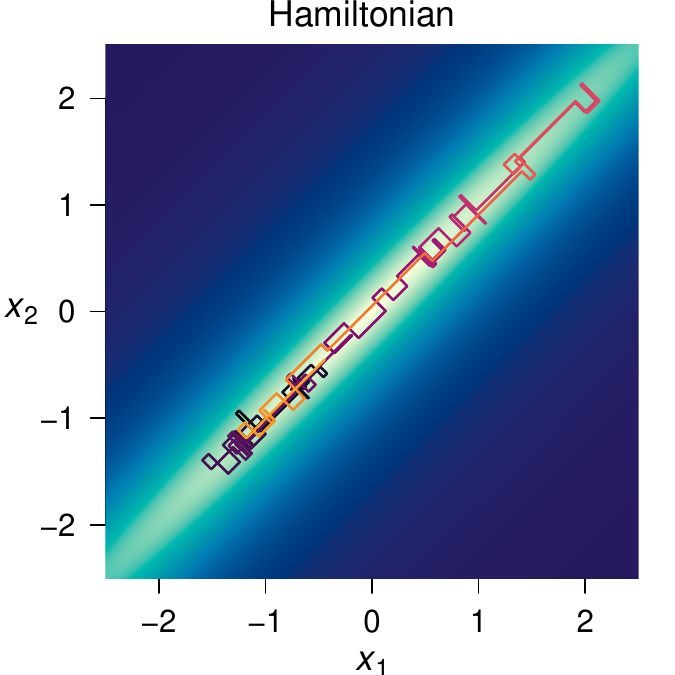}
	\end{minipage}
	\nobreak\hspace{.1em} 
	\begin{minipage}{0.47\textwidth}
		\includegraphics[height=.33\textheight, trim={.08\linewidth} 0 {.1\linewidth} 0, clip]{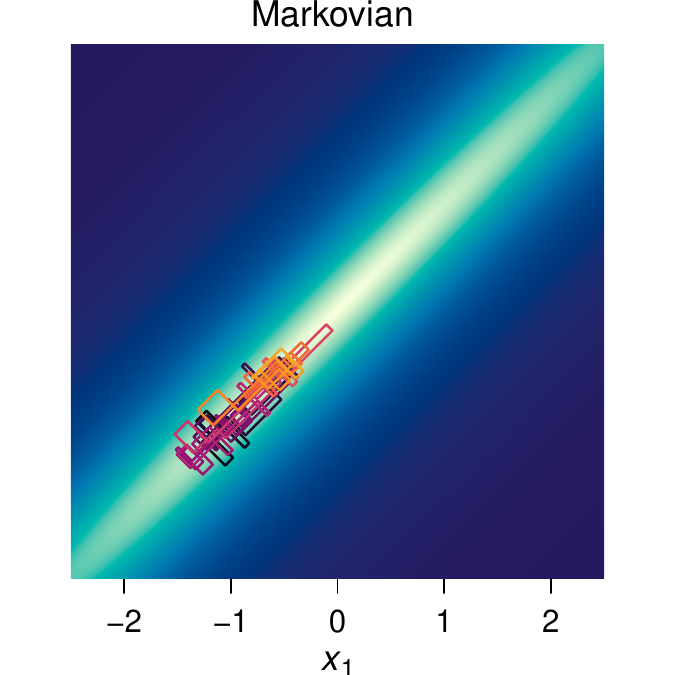}
	\end{minipage}
	\caption{%
		Trajectories of the first two position coordinates of Hamiltonian zigzag without momentum refreshment (left) and Markovian zigzag (right). 
		The target is a $2^{10} = 1{,}024$-dimensional Gaussian, corresponding to a stationary lag-one auto-regressive process with auto-correlation $0.99$ and unit marginal variances. 
		Both dynamics are simulated for $10^5$ linear segments, starting from the same position $\position_i = -1$ for all $i$ and same random velocity. 
		The line segment colors change from darkest to lightest as the dynamics evolve. 
	}
	\label{fig:zigzag_trajectory_on_ar_target}
\end{figure}

\begin{figure}
\begin{minipage}{.55\linewidth}
	\includegraphics[width=\linewidth]{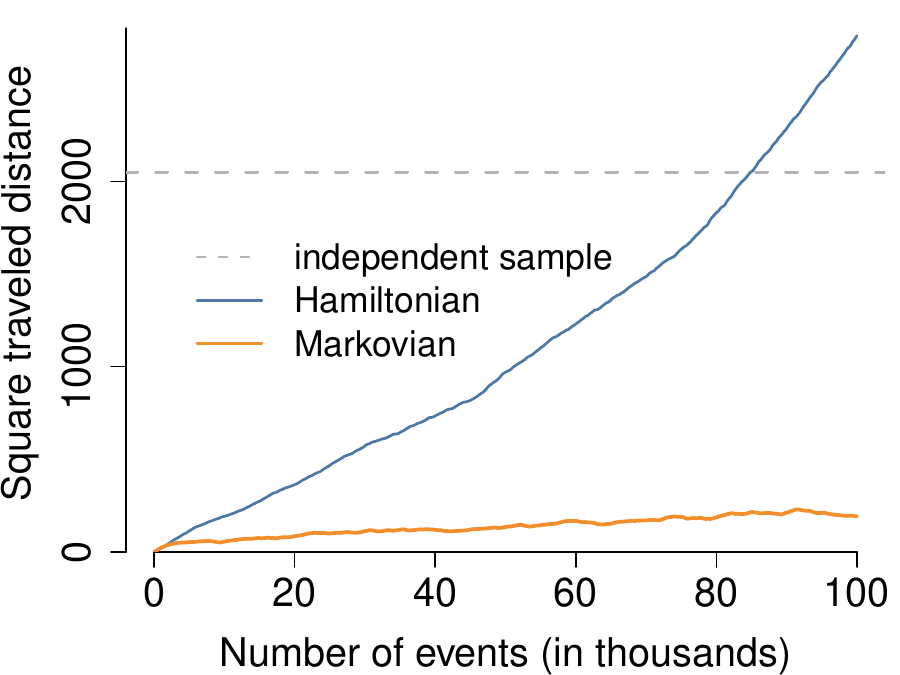}
\end{minipage}
~
\begin{minipage}{.43\linewidth}
	\caption{%
		Squared distance $\| \bposition(t) - \bposition_0 \|^2$ of the two zigzag dynamics from the initial position, plotted as function of the number of velocity change events.
		The experimental setup is identical to that of Figure~\ref{fig:zigzag_trajectory_on_ar_target}.
		The dashed line indicates the expected squared distance between the initial position and an independent sample from the target, as a benchmark of the distance traveled by an efficient transition kernel.
	}
	\label{fig:sq_traveled_dist_on_AR_target}
\end{minipage}
\end{figure}

We empirically quantify the superior performance of zigzag \hmc{} over Markovian zigzag on a range of truncated Gaussians, a special yet practically relevant class of targets on which we can efficiently simulate both zigzags. 
One of our examples arises from a Bayesian phylogenetic multivariate probit model for studying correlation structure among binary biological traits of \hiv{} viruses \citep{zhang2021phylo_multi_probit}.
This is a real-world high-dimensional problem in which the super-linear scaling property of Markovian zigzag established by \cite{bierkens2019zigzag_original} does not hold --- with the number of parameters growing proportional to that of observations, their sub-sampling and control variate techniques cannot be applied here. 
\myedit{LocalBpsForSubsampling}{%
	The same reasons hamper a use of the sub-sampling approach by \cite{bouchard2018bouncy_particle_sampler} based on the local bouncy particle sampler \citep{bardenet2017mcmc_for_tall_data}.%
}

The deterministic nature of its dynamics endows Hamiltonian zigzag with a couple of additional advantages over its Markovian counterpart as a building block for sampling algorithms for truncated Gaussians.
First, we can combine Hamiltonian zigzag with the no-U-turn algorithm to yield an effectively tuning-free sampler (Section~\ref{sec:zigzag_nuts}).
Hamiltonian zigzag is further applicable whenever a subset of parameters is conditionally distributed as truncated Gaussian; 
the split \hmc{} framework \citep{shahbaba2014split_hmc, nishimura2020discontinuous_hmc} allows us to combine Hamiltonian zigzag with other integrators to jointly update all the parameters.
The split \hmc{} extension is successfully deployed by \cite{zhang2022laplace_gause_hmc}, who additionally demonstrates zigzag \hmc{}'s advantage over the bouncy particle sampler.
These samplers based on Hamiltonian zigzag are provided as a part of the Bayesian phylogenetic software \textsc{beast} \citep{suchard2018beast} and are also available via the R package \texttt{hdtg} \citep{zhang2022hdtg}.
\myedit{ZigzagBeyondTruncatedGaussian}{%
	Finally, while Hamiltonian zigzag's use outside truncated Gaussians is beyond the scope of this article, we can in principle apply zigzag \hmc{} to any targets by approximating the dynamics through the coordinate-wise integrator of \citet{nishimura2020discontinuous_hmc} or through the mid-point integrator of \citet{chin2023bouncy_hmc} (Supplement Section~\ref{sec:midpoint_integrator}).%
}

While this work focuses on the two zigzags, their relationship as presented here seems to hint at \hmc{}'s more general connection to \pdmp{} and other non-reversible methods. 
We explore, and to some extent substantiate, this conjecture in the Discussion section.
In particular, we construct a discrete space analogue of Hamiltonian dynamics which, when combined with partial momentum refreshments, recovers the non-reversible algorithm of \cite{diaconis2000nonreversble_chain}.
Our result on the two zigzags, therefore, portends broader implications in advancing our understanding of \hmc{} and other non-reversible methods.

\FloatBarrier
\section{Zigzag Hamiltonian Monte Carlo}
\label{sec:hamiltonian_zigzag}

\subsection{Hamiltonian dynamics based on Laplace momentum}
\label{sec:laplace_momentum_based_hamiltonian_dynamics}
In order to sample from the parameter of interest $\bposition$, \hmc{} introduces an auxiliary \textit{momentum} variable $\bmomentum$ and targets the augmented distribution $\pi(\bposition, \bmomentum) := \positionMarginal(\bposition) \momentumMarginal(\bmomentum)$.
\Hmc{} explores the augmented space by simulating Hamiltonian dynamics, whose evolution is governed by the differential equation known as \textit{Hamilton's equation}:
\begin{equation}
\label{eq:hamilton}
\frac{\diff \bposition}{\diff t}
	= \nabla \kinetic(\bmomentum), \quad
\frac{\diff \bmomentum}{\diff t}
	= - \nabla \potential(\bposition),
\end{equation}
where $\potential(\bposition) = - \log \positionMarginal(\bposition)$ and $\kinetic(\bmomentum) = - \log \momentumMarginal(\bmomentum)$ are referred to as \textit{potential} and \textit{kinetic} energy.
The solution of \eqref{eq:hamilton} preserves the target $\pi(\bposition, \bmomentum)$ and, if it can be simulated exactly, can be deployed as a transition kernel.
Such exact simulation is infeasible in most applications, but a numerical approximation by a reversible integrator constitutes a valid Metropolis-Hastings proposal \citep{fang2014compressible_hmc}.
To this day, the original version of \hmc{} by \cite{duane1987hmc}, based on Gaussian-distributed momentum and the leapfrog integrator, dominates practice \citep{pymc2016, carpenter2017stan}.

For the purpose of dealing with discontinuous target densities, \cite{nishimura2020discontinuous_hmc} proposes the use of Laplace momentum $\momentumMarginal(\bmomentum) \propto \exp\!\left( - \sum_i |\momentum_i| \right)$ in \hmc{}.
The authors then proceed to develop a reversible integrator that qualitatively approximates the corresponding Hamiltonian dynamics, noting only in passing the similarity of its trajectories to those of Markovian zigzag.
Under Laplace momentum, Hamilton's equation becomes
\begin{equation}
\label{eq:hamilton_under_laplace_momentum}
\frac{\diff \bposition}{\diff t}
	= \sign(\bmomentum), \quad
\frac{\diff \bmomentum}{\diff t}
	= - \nabla U(\bposition),
\end{equation}
in which the velocity $\bvelocity := \diff \bposition / \diff t \in \{\pm 1\}^\nParam$ depends only on the sign of $\bmomentum$ and thus remains constant except when one of the $\momentum_i$'s undergoes a sign change.
This property yields a piecewise linear trajectory as follows.

We momentarily ignore a mathematical technicality that arises from the discontinuity of the $\sign$ function in \eqref{eq:hamilton_under_laplace_momentum}.
Starting from the state $(\bposition(\tau), \bmomentum(\tau))$ with $\momentum_i(\tau) \neq 0$ for all $i$ and $\bvelocity = \sign(\bmomentum)$, the dynamics according to \eqref{eq:hamilton_under_laplace_momentum} evolves as
\begin{align}
\bposition(\tau + t)
	&= \bposition(\tau) + t \bvelocity(\tau),
	\label{eq:hamiltonian_zigzag_position_dynamics}\\
\momentum_i(\tau + t)
	&= \momentum_i(\tau) - \int_{0}^{t} \partial_i \potential(\bposition(\tau + s)) \diff s
	 \ \text{ for all $i$},
	\label{eq:hamiltonian_zigzag_momentum_dynamics}
\end{align}
where \eqref{eq:hamiltonian_zigzag_position_dynamics} holds as long as the signs of $\momentum_i$, and hence $\bvelocity$, remain constant on $[\tau, \tau + t)$.
During this time, the momentum magnitude evolves as
\begin{equation}
\label{eq:hamiltonian_zigzag_momentum_magnitude_dynamics}
|\momentum_i|(\tau + t)
	= |\momentum_i|(\tau) - \int_{0}^{t} \velocity_i(\tau) \spaceBeforePartial \partial_i \potential(\bposition(\tau) + s \bvelocity(\tau)) \diff s
	 \ \text{ for all $i$}.
\end{equation}
From \eqref{eq:hamiltonian_zigzag_momentum_dynamics} and \eqref{eq:hamiltonian_zigzag_momentum_magnitude_dynamics}, we see that the sign change in $\momentum_i$ occurs at time $\tau + t_i$ where
\begin{equation}
\label{eq:hamiltonian_zigzag_event_time}
\interEventTime_i
	= \inf_{t > 0}	\left\{ |\momentum_i(\tau)| = \int_0^t \velocity_i(\tau) \spaceBeforePartial \partial_i \potential( \bposition(\tau) + s \bvelocity(\tau) ) \diff s \right\}.
\end{equation}
Let $i^* = \argmin_i \, \interEventTime_i$ denote the first coordinate to experience a sign change.
As $\momentum_{i^*}$ changes its sign at time $\tau + \interEventTime_{i^*}$, the velocity $\bvelocity = \diff \bposition / \diff t$ undergoes an instantaneous change
\begin{equation}
\label{eq:hamiltonian_zigzag_velocity_change}
\velocity_{i^*}(\tau + \interEventTime_{i^*}) = -\velocity_{i^*}(\tau), \ \
\velocity_{j}(\tau + \interEventTime_{i^*}) = \velocity_{j}(\tau)
\ \text{ for } \ j \neq i^*.
\end{equation}
Afterward, the position component proceeds along a new linear path
\begin{equation}
\label{eq:hamiltonian_zigzag_position_dynamics_after_bounce}
\bposition(\tau + t) = \bposition(\tau + \interEventTime_{i^*}) + (t - \interEventTime_{i^*}) \bvelocity(\tau + \interEventTime_{i^*})
\ \text{ for } \ t \geq t_{i^*}
\end{equation}
until the next sign change event.
All the while, the momentum component continues evolving according to \eqref{eq:hamiltonian_zigzag_momentum_dynamics}.
We summarize the properties \eqref{eq:hamiltonian_zigzag_position_dynamics}--\eqref{eq:hamiltonian_zigzag_position_dynamics_after_bounce} in the algorithmic description of Hamiltonian zigzag in Section~\ref{sec:how_to_simulate_hamiltonian_zigzag}.


Under mild conditions, which we quantify momentarily, the trajectory of $(\bposition(t), \bmomentum(t))$ as described satisfies the equation \eqref{eq:hamilton_under_laplace_momentum} except at the instantaneous moments of sign changes in $\momentum_i$'s.
For a differential equation with discontinuous right-hand side, what constitutes a solution and whether it is unique are delicate and complex questions. 
A study of existence and uniqueness typically starts by interpreting the equation as a differential inclusion problem \citep{filippov1988ode_with_discontinous_righthand}.
More precisely, given a differential equation $\diff \bm{z} / \diff t = \bm{f}(\bm{z})$ and discontinuity points of $\bm{f}$, the equality requirement is relaxed to an inclusion of the form $\diff \bm{z} / \diff t \in \bm{F}(\bm{z})$ for a suitable set $\bm{F}(\bm{z})$, such as a convex polytope whose extreme points consist of $\lim_{\bm{z}' \to \bm{z}} \bm{f}(\bm{z}')$.
The theory remains incomplete, however, despite years of research effort \citep{fetecau2003nonsmooth_mechanics, khulief2013modeling}.

As current theory falls short, we will directly establish an existence and uniqueness result for Hamilton's equation with Laplace momentum.
For continuously differentiable $\potential$, the process \eqref{eq:hamiltonian_zigzag_position_dynamics}--\eqref{eq:hamiltonian_zigzag_position_dynamics_after_bounce} defines a unique trajectory consistent with \eqref{eq:hamilton_under_laplace_momentum} as long as it stays away from the sets
\begin{equation}
\label{eq:problematic_set}
\problematicSet_i = \left\{
	(\bposition, \bmomentum) : \partial_i \potential(\bposition) = 0, \
	\momentum_i = 0 \thinnerspace
\right\}.
\end{equation}
If $\partial_{i} \potential(\bposition(\tau + t_{i})) \neq 0$ at the moment of sign change ${\momentum_i(\tau + \interEventTime_i)} = 0$,
the relation \eqref{eq:hamiltonian_zigzag_momentum_dynamics} dictates
\begin{equation*}
\lim_{t \to \interEventTime_{i}^+} \sign \big( \momentum_{i}( \tau + t ) \big)
	= - \lim_{t \to \interEventTime_{i}^-} \sign \big( \momentum_{i}(\tau + t ) \big),
\end{equation*}
where $\interEventTime_{i}^+$ and $\interEventTime_{i}^-$ indicate the right and left limit.
Hence, the velocity change and subsequent evolution according to \eqref{eq:hamiltonian_zigzag_velocity_change} and \eqref{eq:hamiltonian_zigzag_position_dynamics_after_bounce} define a unique trajectory that satisfies Equation \eqref{eq:hamilton_under_laplace_momentum} for almost every $t \geq 0$ and has its position components continuous in $t$.

\myedit{PossibilityOfWeakerAssumption}{%
	In Theorem~\ref{thm:existence_and_propeties_of_hamiltonian_zigzag} below, we make a convenient assumption to ensure that a trajectory avoids the problematic sets \eqref{eq:problematic_set} from almost every initial state. 
	While identifying more general conditions is beyond the scope of this work, we believe the dynamics to be well-defined under a much weaker assumption on $\potential(\bposition)$ satisfied by most practical situations.%
}
The theorem also establishes the time-reversibility and symplecticity of the dynamics, which together imply that Hamiltonian zigzag preserves the target distribution and thus constitutes a valid transition kernel \citep{neal2010hmc, fang2014compressible_hmc}.
\begin{theorem}
\label{thm:existence_and_propeties_of_hamiltonian_zigzag}
Suppose that $\potential(\bposition)$ is twice continuously differentiable and that the sets $\{ \bposition : \partial_i \potential(\bposition) = 0 \}$ comprise differentiable manifolds of dimension at most $\nParam - 1$.
\myedit{PartOfTheoremStatement}{%
	Then Eq~\eqref{eq:hamilton_under_laplace_momentum} defines a unique dynamics on $\mathbb{R}^{2\nParam}$ away from a set of Lebesgue measure zero.
	More precisely, there is a measure zero set $\zeroMeasureSet \subset \mathbb{R}^{2\nParam}$ such that, for all initial conditions $(\bposition(0), \bmomentum(0)) \in \mathbb{R}^{2\nParam} \setminus \zeroMeasureSet$, there exists a unique solution that satisfies Eq~\eqref{eq:hamilton_under_laplace_momentum} at almost every $t \geq 0$, that remains in $\mathbb{R}^{2\nParam} \setminus \zeroMeasureSet$ for all $t \geq 0$, and whose position component is continuous in $t$.%
}
Moreover, the dynamics is time-reversible and symplectic on $\mathbb{R}^{2\nParam} \setminus \zeroMeasureSet$.
\end{theorem}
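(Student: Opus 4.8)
The plan is to prove the three assertions --- a unique position-continuous solution of \eqref{eq:hamilton_under_laplace_momentum} off a null set $\zeroMeasureSet$, time-reversibility, and symplecticity --- in that order, since the first is the bulk of the work and hands us the other two. For \textbf{existence and uniqueness} I would build the trajectory one linear segment at a time: on any interval where $\sign(\bmomentum)$ equals a fixed pattern $\bvelocity$, Eq.~\eqref{eq:hamilton_under_laplace_momentum} is Hamilton's equation for the globally $C^{1}$ Hamiltonian $\potential(\bposition)+\langle\bvelocity,\bmomentum\rangle$, so (using $\potential\in C^{2}$) classical ODE theory gives a unique smooth flow $\bm{\Phi}^{\bvelocity}_{t}$ with the position advancing as in \eqref{eq:hamiltonian_zigzag_position_dynamics}; gluing these pieces at the zero-crossing times \eqref{eq:hamiltonian_zigzag_event_time} produces the candidate solution, and the discussion preceding the theorem already shows it is the \emph{only} position-continuous solution of \eqref{eq:hamilton_under_laplace_momentum} provided every crossing is transversal, i.e.\ the trajectory stays off $\bigcup_{i}\problematicSet_{i}$. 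The real task is thus to gather all remaining bad initial states into $\zeroMeasureSet$ and show $|\zeroMeasureSet|=0$. I would first note that $\potential(\bposition)+\sum_{i}|\momentum_{i}|$ is continuous and conserved (it is the local Hamiltonian and is unchanged at a crossing because $\momentum_{i}=0$ there), which keeps $\bmomentum$ bounded on finite horizons and shows a Zeno accumulation of crossings of one coordinate $i$ at a time $s_{\infty}$ would force $\partial_{i}\potential$ to change sign on every shrinking gap and hence to vanish at the limit position, placing the limit point in $\problematicSet_{i}$ --- so Zeno is subsumed. It then suffices to show the initial states whose trajectory ever meets $\bigcup_{i}\problematicSet_{i}$, or a two-coordinate crossing locus $\{\momentum_{i}=\momentum_{j}=0\}$, form a null set: each $\problematicSet_{i}$ lies in a manifold of codimension $\ge 2$ by the hypothesis on $\{\partial_{i}\potential=0\}$, the segment flows are diffeomorphisms, and each crossing time is a $C^{1}$ function of the initial state near a transversal crossing, so for each number $n$ of prior crossings and each sign-pattern history the set reaching $\problematicSet_{i}$ at the $n$-th crossing is, up to a single time parameter, the preimage of a codimension-$2$ set under a submersion and hence projects to dimension $\le 2\nParam-1$ in $\mathbb{R}^{2\nParam}$; a countable union over $n$ and the finitely many sign patterns finishes it.

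For \textbf{time-reversibility} I would check by direct substitution that $(\bposition(\integrationTime-t),-\bmomentum(\integrationTime-t))$ solves \eqref{eq:hamilton_under_laplace_momentum} whenever $(\bposition(t),\bmomentum(t))$ does, is again position-continuous, and --- since $\problematicSet_{i}$ is invariant under $\bmomentum\mapsto-\bmomentum$ and time reversal carries a $\problematicSet$-avoiding path to another one --- also avoids $\bigcup_{i}\problematicSet_{i}$. After enlarging $\zeroMeasureSet$ to be invariant under the momentum flip $\momentumFlipOp\colon(\bposition,\bmomentum)\mapsto(\bposition,-\bmomentum)$ and under the dynamics (still null), uniqueness gives $\solutionOp_{\integrationTime}\circ\momentumFlipOp\circ\solutionOp_{\integrationTime}=\momentumFlipOp$.

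For \textbf{symplecticity}, between crossings $\solutionOp_{\integrationTime}$ coincides with a smooth Hamiltonian flow and is symplectic, so --- writing a trajectory with $n$ crossings as a composition of smooth flows and maps each spanning a single crossing --- it suffices to treat one crossing. Near such a point $\solutionOp_{\integrationTime}=\bm{\Phi}^{\bvelocity_{1}}_{\integrationTime-\interEventTime(\cdot)}\circ\bm{\Phi}^{\bvelocity_{0}}_{\interEventTime(\cdot)}$, where $\bvelocity_{1}$ flips the $i^{*}$-th entry of $\bvelocity_{0}$ and $\interEventTime(\cdot)$ is the hitting time of $\{\momentum_{i^{*}}=0\}$; differentiating, and using that a flow pushes its own generator forward to itself, I get $D\solutionOp_{\integrationTime}=\bm{M}+\bm{u}\,(\nabla\interEventTime)^{\transpose}$ with $\bm{M}=D\bm{\Phi}^{\bvelocity_{1}}_{\integrationTime-\interEventTime}\,D\bm{\Phi}^{\bvelocity_{0}}_{\interEventTime}$ symplectic and $\bm{u}=2\velocity_{0,i^{*}}\,D\bm{\Phi}^{\bvelocity_{1}}_{\integrationTime-\interEventTime}\,\bm{e}$, $\bm{e}$ the coordinate vector along $\position_{i^{*}}$. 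Implicit differentiation of $[\bm{\Phi}^{\bvelocity_{0}}_{\interEventTime(\cdot)}(\cdot)]_{\momentum,i^{*}}\equiv 0$ shows $\nabla\interEventTime$ is parallel to the $(\momentum,i^{*})$ row of $D\bm{\Phi}^{\bvelocity_{0}}_{\interEventTime}$; feeding this through the symplectic identity $\bm{J}\,D\bm{\Phi}^{\bvelocity_{1}}_{\integrationTime-\interEventTime}=(D\bm{\Phi}^{\bvelocity_{1}}_{\integrationTime-\interEventTime})^{-\transpose}\bm{J}$, with $\bm{J}=\bigl(\begin{smallmatrix}\bm{0}&\Id\\-\Id&\bm{0}\end{smallmatrix}\bigr)$, together with $\bm{J}\bm{e}=-\bm{e}'$ ($\bm{e}'$ the coordinate vector along $\momentum_{i^{*}}$), gives $\bm{M}^{\transpose}\bm{J}\bm{u}=\mu\,\nabla\interEventTime$ for a scalar $\mu$; since also $\bm{u}^{\transpose}\bm{J}\bm{u}=0$, expanding $(D\solutionOp_{\integrationTime})^{\transpose}\bm{J}\,D\solutionOp_{\integrationTime}$ the rank-one cross terms cancel and it reduces to $\bm{M}^{\transpose}\bm{J}\bm{M}=\bm{J}$. \textbf{I expect this symplecticity step to be the main obstacle}: because $\interEventTime(\cdot)$ depends on the state, $\solutionOp_{\integrationTime}$ is \emph{not} a composition of fixed symplectic maps, and the rank-one correction $\bm{u}(\nabla\interEventTime)^{\transpose}$ would destroy symplecticity were it not for the exact alignment $\bm{M}^{\transpose}\bm{J}\bm{u}\parallel\nabla\interEventTime$ supplied by the hitting-time identity; a lesser nuisance is arranging $\zeroMeasureSet$ to work for all $\integrationTime$ at once while also absorbing the Zeno and simultaneous-crossing configurations.
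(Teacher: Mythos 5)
Your proposal is sound and shares the paper's global strategy --- build the trajectory segment by segment between transversal sign changes of the $\momentum_i$'s, collect the bad initial states into a null set $\zeroMeasureSet$ generated by the codimension-two sets $\problematicSet_i$, reduce reversibility and symplecticity to a single switch event, and differentiate the event time implicitly --- but two steps are executed genuinely differently. For the null set, the paper works backward: it defines $\zeroMeasureSet^{(1)}_{i,\bvelocity}=\bigcup_{t\ge0}\solutionOp_{\bvelocity,-t}(\problematicSet_i)$ and handles later events inductively by proving that $\zeroMeasureSet^{(1)}_{i,\bvelocity}\cap\{\momentum_j=0\}$ is a transversal (hence codimension-two) intersection away from $\problematicSet_j$, using $\partial_j\potential\neq0$; your forward version --- preimages of $\problematicSet_i$ under the $n$-th-crossing map, indexed by sign-pattern histories --- is essentially equivalent, with your unproved ``submersion'' assertion playing the role of that transversality computation. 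You also flag Zeno accumulation of events, which the paper's finite-event induction silently ignores; your energy-conservation argument correctly places the accumulation point in $\problematicSet_i$, but such an initial state reaches $\problematicSet_i$ only after infinitely many crossings, so it is not literally covered by your countable union over finite $n$ and would need one more (backtracking-type) step. For symplecticity your route is cleaner than the paper's: instead of computing all four Jacobian blocks and checking \eqref{eq:symplecticity_definition_simplified} entrywise, you use that each fixed-velocity piece is the flow of the smooth Hamiltonian $\potential(\bposition)+\langle\bvelocity,\bmomentum\rangle$ and write the one-crossing differential as a rank-one update $\bm{M}+\bm{u}(\nabla\interEventTime^*)^{\transpose}$ of a symplectic matrix; the required alignment $\bm{M}^{\transpose}\bm{J}\bm{u}\parallel\nabla\interEventTime^*$ follows from implicit differentiation of the hitting-time identity (the content of Lemma~\ref{lem:differentiability_of_event_time}), and I verified the cross terms do cancel, so this alternative argument goes through and buys a shorter, more structural proof at the price of slightly heavier flow-map notation.
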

\noindent%
\myedit{CorollaryIntro}{%
	The assumption of Theorem~\ref{thm:existence_and_propeties_of_hamiltonian_zigzag} in particular holds for strongly convex $\potential(\bposition)$ since it would imply that $\nabla \partial_i \potential = (\partial_1 \partial_i \potential, \ldots, \partial_\nParam \partial_i \potential)^\transpose \neq \bm{0}$ and, by the implicit function theorem, that $\{ \bposition : \partial_i \potential(\bposition) = 0 \}$ is a differentiable manifold of dimension $d - 1$ \citep{spivak1965calculus_on_manifold}.
	We thus have the following corollary: 
}%
\begin{corollary}
\label{cor:well_definedness_of_hamiltonian_zigzag_for_strongly_logconcave}
\myedit{corollaryStatement}{%
	For a twice continuously differentiable, strongly log-concave target $\positionMarginal(\bposition)$ and the corresponding potential energy $\potential(\bposition) = - \log \positionMarginal(\bposition)$, the Hamiltonian dynamics based on Laplace momentum is well-defined, time-reversible, and symplectic away from a set of Lebesgue measure zero.%
}%
\end{corollary}

We note that, the exact nature of Hamiltonian zigzag on smooth targets being auxiliary to their work, \cite{nishimura2020discontinuous_hmc} establish corresponding results only under a piecewise constant $\potential$ with piecewise linear discontinuity set.
Given the incomplete theory behind discontinuous differential equations and non-smooth Hamiltonian mechanics, Theorem~\ref{thm:existence_and_propeties_of_hamiltonian_zigzag} is significant on its own and of independent interest.
We defer the proof to Supplement Section~\ref{sec:proof_of_hamiltonian_zigzag_properties}, however, to keep the article's focus on the connection between Hamiltonian and Markovian zigzag and its implication on Monte Carlo simulation.
W
\myedit{CommentOnTheoryWithConstraints}{%
	e similarly defer to Supplement Section~\ref{sec:hamiltonian_zigzag_on_constrained_domain} a discussion of the theory's extension to accommodate constraints on the parameter space, which justifies our application of Hamiltonian zigzag to truncated Gaussian targets (Section~\ref{sec:simulation}).%
}

\subsection{Simulation of Hamiltonian zigzag dynamics}
\label{sec:how_to_simulate_hamiltonian_zigzag}
To summarize the above discussion and prepare for the subsequent discussion of Hamiltonian zigzag's connection to Markovian one, we describe the evolution of Hamiltonian zigzag in the space $(\bposition, \bvelocity, |\bmomentum|)$ with $\momentum_i = \velocity_i |\momentum_i|$ as follows.
We denote the $k$-th event time by $\tau^{(k)}$ for $k \geq 1$ and the corresponding state at the moment by $\left(\bposition^{(k)}, \bvelocity^{(k)}, |\bmomentum|^{(k)}\right) = \left( \bposition\!\left(\tau^{(k)}\right), \bvelocity\!\left(\tau^{(k)}\right), |\bmomentum|\!\left(\tau^{(k)}\right) \right)$.
From a given initial condition at time $\tau^{(0)}$, the position coordinate follows a piecewise linear path segmented by the event times $\tau^{(k)}$, in-between which the dynamics evolves according to
\begin{gather*}
\begin{aligned}
\bposition(\tau^{(k)} \! + t) &= \bposition^{(k)} \! + t \bvelocity^{(k)}, \ \
\bvelocity(\tau^{(k)} \! + t) = \bvelocity^{(k)}, \text{ and}\\
| \momentum_i |(\tau^{(k)} \! + t) &= |\momentum_i|^{(k)} - \int_0^t \velocity_i^{(k)} \partial_i \potential\big( \bposition^{(k)} \! + s \bvelocity^{(k)} \big) \diff s.
\end{aligned}
\end{gather*}
The $(k + 1)$-th event occurs at time
\begin{equation*}
\tau^{(k + 1)} = \tau^{(k)} + \min_i \interEventTime_i^{(k)}
	\ \text{ where } \
	\interEventTime_i^{(k)} = \inf_{t > 0} \left\{
		|\momentum_i|^{(k)} = \int_0^t \velocity_i^{(k)} \partial_i \potential\big( \bposition^{(k)} \! + s \bvelocity^{(k)} \big) \diff s
	\right\},
\end{equation*}
resulting in an instantaneous change in the $i^*$-th component of velocity for $i^*(k) = \mathrm{argmin}_i \interEventTime_i^{(k)}$:
\begin{equation*}
\velocity_{i^*}^{(k + 1)} = - \velocity_{i^*}^{(k)}
\ \text{ and } \
\velocity_{j}^{(k + 1)} = \velocity_{j}^{(k)}
\ \text{ for } \
j \neq i^*.
\end{equation*}
The position and momentum magnitude at time $\tau^{(k + 1)}$ are given by
\begin{gather*}
\begin{aligned}
\bposition^{(k + 1)} &= \bposition^{(k)} \! + \big( \tau^{(k + 1)} - \tau^{(k)} \big) \bvelocity^{(k)} \text{ and}\\
|\momentum_i |^{(k + 1)} &= |\momentum_i|^{(k)} - \int_0^{\tau^{(k + 1)} - \tau^{(k)}} \hspace*{-1ex} \velocity_i^{(k)} \partial_i \potential\big( \bposition^{(k)} \! + s \bvelocity^{(k)} \big) \diff s.
\end{aligned}
\end{gather*}
The dynamics then continues in the same manner for the next interval $\left[ \tau^{(k + 1)}, \tau^{(k + 2)} \right)$.

Algorithm~\ref{alg:hamiltonian_zigzag_simulation} summarizes the trajectory simulation process as pseudo-code.
For the moment, we do not concern ourselves with either how we would solve for $\interEventTime_i$ in Line~\ref{line:coord_wise_event_time} or how to evaluate the integrals in Line~\ref{line:integral_1} and \ref{line:integral_2}.
As we demonstrate in Section~\ref{sec:simulation}, we can exploit the analytical solutions available under (truncated) multivariate Gaussians for a highly efficient implementation (Supplement Section~\ref{sec:zigzags_on_truncated_gaussians}).

{\spacingset{1.1}
\vspace*{\baselineskip}
\hspace*{-.04\linewidth}
\begin{minipage}{.5\linewidth}
\begin{algorithm}[H]
	    \caption[caption]{Hamiltonian zigzag\\ \hspace*{2.5em}trajectory simulation for $t \in [0, \integrationTime]$}
	    \label{alg:hamiltonian_zigzag_simulation}
	    \begin{algorithmic}[1]
	        \Function{HamiltonianZigzag}{$\bposition, \hspace*{-.12em} \bmomentum, \hspace*{-.12em} \integrationTime$}
	        	\State $\tau \gets 0$
	            \State $\bvelocity \gets \textrm{sign}(\bmomentum)$
	            \While{$\tau < \integrationTime$}
	            	\For{$i = 1, \ldots, \nParam$}
	            		\State \vphantom{$\uniformRv_i \sim \unifDist(0, 1)$} 
	            		\State \begin{varwidth}[t]{\linewidth}
			            			$\interEventTime_i = \inf\limits_{t > 0} \Big\{
			            				|\momentum_i| =$ \\
			            			\hspace*{4.5em} $\int_0^t
			            				\velocity_i \spaceBeforePartial \partial_i \potential\big( \bposition \! + s \bvelocity \big) \diff s
			            			\Big\}$
	            			\end{varwidth}
	            			\label{line:coord_wise_event_time}
	            	\EndFor
	            	\State $\interEventTime^* \gets \min_i \interEventTime_i$
	            	\If{$\tau + \interEventTime^* > \integrationTime$} \\ \Comment{No further event occurred} \hspace*{.02\linewidth}
	            		\State $\bposition \gets \bposition + (\integrationTime - \tau) \bvelocity$
	            		\State $\bmomentum \gets \bmomentum - \int_0^{\integrationTime - \tau} \nabla \potential\big( \bposition \! + s \bvelocity \big) \diff s.$
	            			\label{line:integral_1}
	            		\State $\tau \gets \integrationTime$
	            	\Else
	            		\State $\bposition \gets \bposition + \interEventTime^* \bvelocity$
	            		\State $\bmomentum \gets \bmomentum - \int_0^{ \interEventTime^*} \nabla \potential\big( \bposition \! + s \bvelocity \big) \diff s.$
	            			\label{line:integral_2}
	            		\State $i^* \gets \argmin_i \interEventTime_i$
	            		\State $\velocity_{i^*} \gets - \velocity_{i^*}$
	            		\State $\tau \gets \tau + \interEventTime^*$
	            	\EndIf
	            \EndWhile
	            \State \textbf{return} $(\bposition, \bmomentum)$
	        \EndFunction
	    \end{algorithmic}
	\end{algorithm}
\end{minipage}
\nobreak\hspace{.1em} 
\begin{minipage}{.485\linewidth}
	\begin{algorithm}[H]
	    \caption[caption]{Markovian zigzag\\ \hspace*{2.5em}trajectory simulation for $t \in [0, \protect \integrationTime]$}
	    \label{alg:markovian_zigzag_simulation}
	    \begin{algorithmic}[1]
	        \Function{MarkovianZigzag}{$\bposition, \hspace*{-.12em}  \bvelocity, \hspace*{-.12em}  \integrationTime$}
	        	\State $\tau \gets 0$
	            \State
	            \While{$\tau < \integrationTime$}
	            	\For{$i = 1, \ldots, \nParam$}
	            		\State $\uniformRv_i \sim \unifDist(0, 1)$
	            			\label{line:uniform_rv_for_markovian_zigzag_simulation}
	            		\State \begin{varwidth}[t]{\linewidth}
		            			$\interEventTime_i = \inf\limits_{t > 0} \Big\{
		            				{-} \log \uniformRv_i =$ \\
		            			\hspace*{3.8em} $\int_0^t \left[
		            					\velocity_i \spaceBeforePartial \partial_i \potential\big( \bposition \! + s \bvelocity \big)
		            				\right]^+ \! \diff s
		            			\Big\}$
		            		\end{varwidth}
	            	\EndFor
	            	\State $\interEventTime^* \gets \min_i \interEventTime_i$
	            	\If{$\tau + \interEventTime^* > \integrationTime$} \\ \Comment{No further event occurred}
	            		\State $\bposition \gets \bposition + (\integrationTime - \tau) \bvelocity$
	            		\State \vphantom{$\bmomentum \gets \bmomentum - \int_0^{\integrationTime - \tau} \nabla \potential\big( \bposition \! + s \bvelocity \big) \diff s.$} 
	            		\State $\tau \gets \integrationTime$
	            	\Else
	            		\State $\bposition \gets \bposition + \interEventTime^* \bvelocity$
	            		\State
	            		\State $i^* \gets \argmin_i \interEventTime_i$
	            		\State $\velocity_{i^*} \gets - \velocity_{i^*}$
	            		\State $\tau \gets \tau + \interEventTime^*$
	            	\EndIf
	            \EndWhile
	            \State \textbf{return} $(\bposition, \bvelocity)$
	        \EndFunction
	    \end{algorithmic}
	\end{algorithm}
\end{minipage}
}

\section{Link between Hamiltonian and Markovian zigzags}

\subsection{Hamiltonian zigzag's apparent similarity to Markovian zigzag}
\label{sec:comparison_of_two_zigzags}
The Markovian zigzag process by \cite{bierkens2019zigzag_original} follows a piecewise linear trajectory similar to Hamiltonian zigzag, but without any apparent concept of momentum.
Starting from the state $(\bposition(\tau), \bvelocity(\tau))$ with $\bvelocity \in \{\pm 1\}^\nParam$, Markovian zigzag follows a linear path
\begin{equation*}
\bposition(\tau + t)
	= \bposition(\tau) + t \bvelocity(\tau), \quad
\bvelocity(\tau + t)
	= \bvelocity(\tau)
\end{equation*}
for $t \geq 0$, until the next velocity switch event $\velocity_i \gets - \velocity_i$ that occurs with Poisson rate
\begin{equation}
\label{eq:markovian_zigzag_event_rate}
\switchRate_i(\bposition, \bvelocity)
	= \left[ \velocity_i \partial_i \potential(\bposition) \right]^+
	:= \max \{ 0, \velocity_i \partial_i \potential(\bposition) \}.\footnote{%
		Markovian zigzag based on the rate \eqref{eq:markovian_zigzag_event_rate} is referred to as the \textit{canonical zigzag} by \cite{bierkens2019zigzag_original} and is the predominant version in the literature.
		More generally, however, any Poission rate satisfying
 		$\switchRate_i(\bposition, \bvelocity) - \switchRate_i(\bposition, \velocityFlipOperator_i(\bvelocity))
			= \velocity_i \partial_i \potential(\bposition)$
		can be used, where
		$\left[ \velocityFlipOperator_i(\bvelocity) \right]_i = - \velocity_i$ and $\left[ \velocityFlipOperator_i(\bvelocity) \right]_j = \velocity_j$ for $j \neq i$.
	}
\end{equation}
In particular, the next event time $\tau + \interEventTime^*$ can be simulated by setting $\interEventTime^* = \min_i \interEventTime_i$ where
\begin{equation}
\label{eq:markovian_zigzag_event_time}
\interEventTime_i
	= \inf_{t > 0}	\left\{ - \log \uniformRv_i = \int_0^t \left[ \velocity_i(\tau) \spaceBeforePartial \partial_i \potential( \bposition(\tau) + s \bvelocity(\tau) ) \right]^+ \diff s \right\}
	\ \text{ for } \
	\uniformRv_i \sim \unifDist(0, 1).
\end{equation}
At time $\tau + \interEventTime^*$, the velocity undergoes an instantaneous change
\begin{equation*}
\velocity_{i^*}(\tau + \interEventTime_{i^*}) = -\velocity_{i^*}(\tau), \ \
\velocity_{j}(\tau + \interEventTime_{i^*}) = \velocity_{j}(\tau)
\ \text{ for } \ j \neq i^* = \argmin_i \, \interEventTime_i.
\end{equation*}
The position component then proceeds along a new linear path
\begin{equation*}
\bposition(\tau + t) = \bposition(\tau + \interEventTime_{i^*}) + (t - \interEventTime_{i^*}) \bvelocity(\tau + \interEventTime_{i^*})
	\ \text{ for } \ t \geq t_{i^*}
\end{equation*}
until the next sign change event. 
The Markovian zigzag process as described has the stationary distribution $\pi(\bposition, \bvelocity) := 2^{-\nParam} \, \positionMarginal(\bposition)$, with $\bvelocity$ distributed uniformly on $ \{\pm 1\}^\nParam$.

Algorithm~\ref{alg:markovian_zigzag_simulation} describes the dynamics of Markovian zigzag in pseudo-code with empty lines inserted as appropriate to facilitate comparison with the dynamics of Hamiltonian zigzag.
The similarity between the two zigzags is striking.
The main difference lies in Lines~\ref{line:uniform_rv_for_markovian_zigzag_simulation} and \ref{line:coord_wise_event_time} of the algorithms, reflecting the formulae \eqref{eq:markovian_zigzag_event_time} and \eqref{eq:hamiltonian_zigzag_event_time} for their respective velocity switch event times.
For one thing, Markovian zigzag's event time depends on the random quantity $\uniformRv_i$ while Hamiltonian zigzag's is deterministic.
On the other hand, when combining Hamiltonian zigzag with momentum refreshment, the distributional equality $- \log \uniformRv_i \eqDistribution | \momentum_i | \sim \expDist(\textrm{scale} = 1)$ makes the quantities $- \log \uniformRv_i$ and $| \momentum_i |$ comparable in a sense and, as we will show in Section~\ref{sec:markovian_zigzag_as_limit_of_hamiltonian_zigzag}, is a key element connecting the two zigzags.

The quantities $- \log \uniformRv_i$ and $| \momentum_i |$ being comparable, the only remaining difference between \eqref{eq:markovian_zigzag_event_time} and \eqref{eq:hamiltonian_zigzag_event_time} is the presence and absence of the positive part operator $[\, \cdot \, ]^+$ in the integrands.
These presence and absence of $[\, \cdot \, ]^+$ are manifestations of the fact that Markovian zigzag is memory-less while Hamiltonian zigzag transfers energy between the potential and kinetic parts and encodes this information in momentum.
The etiology and consequence of this difference is most easily seen in the case of a one-dimensional unimodal target $\potential(\position)$, as visually illustrated in Figure~\ref{fig:markovian_and_hamiltonian_balls_comparison}.
Before a velocity switch event, a zigzag trajectory from the initial position $\position_0$ and velocity $\velocity_0$ satisfies
{\renewcommand{\arraystretch}{1.3}
\begin{align*}
\int_0^t \left[ \velocity_0 \spaceBeforePartial \partial_i \potential( \position_0 + s \velocity_0) \right]^+ \diff s
	&= \left\{
		\begin{array}{l}
		0
		\ \, \text{ for } \ 0 \leq t \leq t_{\min} \\
		\potential(\position_0 + t \velocity_0) - \potential_{\min}
		\  \, \text{ for } \ t > t_{\min},
		\end{array}
	\right.
	\\
\int_0^t \velocity_0 \spaceBeforePartial \partial_i \potential( \position_0 + s \velocity_0) \diff s
	&= \potential(\position_0 + t \velocity_0) - \potential(\position_0)
	\\
	&= \potential(\position_0 + t \velocity_0) - \potential_{\min}
		- \left[  \potential(\position_0) - \potential_{\min} \right],
	\\
	&\hspace{-4em} \ \text{ where } \
		t_{\min} := \argmin_{t \, \geq \, 0} \potential(\position_0 + t \velocity_0)
		\ \text{ and } \
		\potential_{\min}:= \potential(\position_0 + t_{\min} \velocity_0).
\end{align*}
}
The velocity switch event formulae \eqref{eq:markovian_zigzag_event_time} and \eqref{eq:hamiltonian_zigzag_event_time} therefore simplify to
{\renewcommand{\arraystretch}{1.3}
\begin{equation}
\label{eq:comparison_of_markovian_and_hamiltonian_event_time}
\arraycolsep=0pt
	\begin{array}{rrl}
	\interEventTime^\superM 
		= \inf\limits_{t \geq t_{\min}} \big\{ & \,
			- \log \uniformRv &{}
			= \potential(\position_0 + t \velocity_0) - \potential_{\min} 
		\big\}, \\
	\interEventTime^\superH 
		= \inf\limits_{t \geq t_{\min}} \big\{ & \,
			\potential(\position_0) - \potential_{\min} + |\momentum_0| &{}
			=  \potential(\position_0 + t \velocity_0) - \potential_{\min}
		\big\}.
	\end{array}
\end{equation}
}%
From \eqref{eq:comparison_of_markovian_and_hamiltonian_event_time}, we see that the Markovian event necessarily precedes the Hamiltonian one (i.e.\ $\interEventTime^\superM \leq \interEventTime^\superH$) when $|\momentum_0| = - \log u$ and hence $| \momentum(t_{\min}) | = \potential(\position_0) - \potential_{\min} + |\momentum_0| \geq - \log u$.

In higher dimensions, the same reasoning applies to the relative behavior of the two zigzags along each coordinate.
On average, the memory-less property as manifested by the presence of $[\, \cdot \, ]^+$ in \eqref{eq:markovian_zigzag_event_time} causes Markovian zigzag to experience more frequent velocity switch events and travel shorter distances along each linear segment.
In contrast, when a coordinate of Hamiltonian zigzag is going down potential energy hills (i.e. $\velocity_i \partial_i \potential(\bposition) < 0$), the decrease in potential energy causes an equivalent increase in kinetic energy and in momentum magnitude as dictated by the relation \eqref{eq:hamiltonian_zigzag_momentum_magnitude_dynamics}.
Hamiltonian zigzag can then use this stored kinetic energy to continue traveling in the same direction for longer distances.

\begin{figure}[htb]
\centering

\usetikzlibrary{calc}
\usetikzlibrary{arrows.meta}
\newcommand{\parabolaCurvature}{.6}
\newcommand{\xLeft}{-1.75}
\newcommand{\xRight}{2.5}
\newcommand{\gapBetAxisParabola}{.2}
\newcommand{\axisOverflow}{.0}
\newcommand{\ballRadius}{.2}
\newcommand{\ballCenterX}{.875 * \xLeft + .7 * \ballRadius}
\newcommand{\ballCenterY}{\parabolaCurvature * (.875 * \xLeft)^2 + \ballRadius}
\newcommand{\markovianBallX}{ - .8 * (\ballCenterX) - .7 * \ballRadius}
\newcommand{\markovianBallY}{\parabolaCurvature * (.8 * \ballCenterX)^2 + \ballRadius}
\newcommand{\hamiltonianBallY}{\markovianBallX + \ballCenterY}
\newcommand{\hamiltonianBallX}{sqrt((\hamiltonianBallY) / \parabolaCurvature) - 1.0 * \ballRadius}

\vspace*{\baselineskip}
\hspace*{-.05\linewidth}
	\begin{minipage}{.45\linewidth}
		\beginpgfgraphicnamed{zigzag_hmc_markovian_ball}
		\begin{tikzpicture}[scale=1.08, every node/.style={scale=1.08}] 
		\coordinate (ballCenter) at (\ballCenterX, {\ballCenterY});
		\draw[thick, color=jhuSpiritBlue] (0, 1.05 * \ballRadius) parabola (ballCenter);
		\draw[thick, color=jhuSpiritBlue] [-Latex] (0, 1.05 * \ballRadius) parabola ({\markovianBallX}, {\markovianBallY});
		\fill[white] (ballCenter) circle (1.4 * \ballRadius); 
		\draw[semithick, fill={rgb:black,1;white,4}] (ballCenter) circle (\ballRadius);
		\draw[semithick] (0, 0) parabola (\xLeft, - \parabolaCurvature * \xLeft^2);
		\draw[semithick]  (0, 0) parabola (\xRight, \parabolaCurvature * \xRight^2);
		\draw [semithick] [-Latex]
			(1.2 * \xLeft - \axisOverflow, - \gapBetAxisParabola) --
			(1.2 * \xRight + \axisOverflow, - \gapBetAxisParabola);
		\draw [semithick]  (\ballCenterX, - .125- \gapBetAxisParabola) -- (\ballCenterX, - \gapBetAxisParabola);
		\node [anchor=base] at (\ballCenterX, - .4 - \gapBetAxisParabola) {$\position_0$};
		\draw [semithick] (0, - .125 - \gapBetAxisParabola) -- (0, - \gapBetAxisParabola);
		\node [anchor=base] at (0, - .4 - \gapBetAxisParabola) {\hspace*{1.5ex} $\position_0 + t_{\min} \velocity_0$};
		\draw [Latex-Latex] ({\markovianBallX + 3 * \ballRadius}, 0) -- ({\markovianBallX + 3 * \ballRadius}, {\markovianBallY});
		\node[right] at ({\markovianBallX + 3 * \ballRadius}, {.4 * \markovianBallY}) {$- \log \uniformRv$};
		\end{tikzpicture}
	\endpgfgraphicnamed
	\end{minipage}
	\hspace*{-.05\linewidth}
	\begin{minipage}{.48\linewidth}
		\beginpgfgraphicnamed{zigzag_hmc_hamiltonian_ball}
		\begin{tikzpicture}[scale=1.08, every node/.style={scale=1.08}]
		\coordinate (ballCenter) at (\ballCenterX, {\ballCenterY});
		\draw[thick, color=jhuSpiritBlue] (0, 1.05 * \ballRadius) parabola (ballCenter);
		\draw[thick, color=jhuSpiritBlue] [-Latex] (0, 1.05 * \ballRadius) parabola ({\hamiltonianBallX}, {\hamiltonianBallY});
		\fill[white] (ballCenter) circle (1.4 * \ballRadius); 
		\draw[semithick, fill={rgb:black,1;white,4}] (ballCenter) circle (\ballRadius);
		\draw[semithick] (0, 0) parabola (\xLeft, - \parabolaCurvature * \xLeft^2);
		\draw[semithick]  (0, 0) parabola (\xRight, \parabolaCurvature * \xRight^2);
		\draw [semithick] [-Latex]
			(1.2 * \xLeft - \axisOverflow, - \gapBetAxisParabola) --
			(1.2 * \xRight + \axisOverflow, - \gapBetAxisParabola);
		\draw [semithick]  (\ballCenterX, - .125 - \gapBetAxisParabola) -- (\ballCenterX, - \gapBetAxisParabola);
		\node [anchor=base] at (\ballCenterX, - .4 - \gapBetAxisParabola) {$\position_0$};
		\draw [semithick] (0, - .125 - \gapBetAxisParabola) -- (0, - \gapBetAxisParabola);
		\node [anchor=base] at (0, - .4 - \gapBetAxisParabola) {\hspace*{1.5ex} $\position_0 + t_{\min} \velocity_0$};
		\newcommand{\verticalLineX}{\xLeft - .5 * \ballRadius}
		\draw[color=jhuSecondaryOrange] [|-|] (\verticalLineX, 0) -- (\verticalLineX, {\ballCenterY});
		\node[left] [jhuSecondaryOrange] at (\verticalLineX, \parabolaCurvature * \xLeft * \xLeft - \ballRadius) {$\potential(\position_0)$};
		\node[left] [jhuSecondaryOrange] at ($(\verticalLineX, .5ex) - (.75ex, 0)$) {$\potential_{\min}$};
		\draw [Latex-Latex] ({(\hamiltonianBallX) + 2 * \ballRadius}, 0) -- ({(\hamiltonianBallX) + 2 * \ballRadius}, {\markovianBallY});
		\node[right] at ({(\hamiltonianBallX) + 2 * \ballRadius}, {.4 * \markovianBallY}) {$\ |\momentum_0| \, \eqDistribution - \log \uniformRv$};
		\draw[jhuSecondaryOrange] [Latex-Latex] ({(\hamiltonianBallX) + 2 * \ballRadius}, {\markovianBallY}) -- ({(\hamiltonianBallX) + 2 * \ballRadius}, {\hamiltonianBallY});
		\node[right] at ({(\hamiltonianBallX) + 2 * \ballRadius}, {\markovianBallY) + .4 * (\hamiltonianBallY - \markovianBallY)}) {\textcolor{jhuSecondaryOrange}{$\ \potential(\position_0) - \potential_{\min}$}};
	\end{tikzpicture}
		\endpgfgraphicnamed
	\end{minipage}
	\caption{%
		Comparison of Markovian (left) and Hamiltonian (right) zigzag trajectories under the one-dimensional potential $\potential(\position)$.
		Neither zigzag is affected by velocity switch events while going down the potential energy hill, 
		during which the velocity and gradient point in the opposite directions and the relation $\velocity(t) \partial \potential(\position(t)) < 0$ holds.
		During this time, Hamiltonian zigzag stores up kinetic energy converted from potential energy, while Markovian zigzag remains memory-less.
		Once the trajectories reaches the potential energy minimum $\potential_{\min}$ at time $t_{\min} := \argmin_{t \, \geq \, 0} \potential(\position_0 + t \velocity_0)$ and start going ``uphill,'' the accumulated momentum $| \momentum(t_{\min}) | = | \momentum_0 | +  \potential(\position_0) - \potential_{\min} $ keeps Hamiltonian zigzag traveling in the same direction longer than Markovian zigzag.
		The last statement technically holds only ``on average'' due to randomness in the realized values of $| \momentum_0 | \protect \eqDistribution - \log u$.
	}
	\label{fig:markovian_and_hamiltonian_balls_comparison}
\end{figure}

\subsection{Markovian zigzag as an infinite momentum refreshment limit}
\label{sec:markovian_zigzag_as_limit_of_hamiltonian_zigzag}
We now consider a version of Hamiltonian zigzag in which we periodically refresh the momentum by resampling their magnitudes $|\momentum_i(\tau)| \sim \expDist(1)$ while keeping their signs.
This process follows a zigzag path as before, but its inter-event times are now random.
We see from the formula of \eqref{eq:hamiltonian_zigzag_event_time} that, following a momentum magnitude refreshment at time $\tau$, the velocity flip $\velocity_i \gets - \velocity_i$ occurs during the interval $[\tau, \tau + \dt]$ if and only if
\begin{equation*}
|\momentum_i(\tau)|
	\leq \max_{0 \leq t \leq \dt} \left[
		\int_0^{t} \velocity_i(\tau) \spaceBeforePartial \partial_i \potential( \bposition(\tau + s) ) \diff s
	\right].
\end{equation*}
Provided that $\partial_i \potential$ is continuous and $\partial_i \potential( \position(\tau) ) \neq 0$, the sign of $\partial_i \potential( \bposition(\tau + s) )$ stays constant on the interval $s \in [\tau, \tau + \dt]$ for sufficiently small $\dt$, so that
\begin{equation*}
\max_{0 \leq t \leq \dt} \left[
	\int_0^{t} \velocity_i(\tau) \spaceBeforePartial \partial_i \potential( \bposition(\tau + s) ) \diff s
\right] < 0
	\ \text{ or }
	= \int_0^{\dt} \big[ \velocity_i(\tau) \spaceBeforePartial \partial_i \potential( \bposition(\tau + s) ) \big]^+ \diff s.
\end{equation*}
Under these conditions, the probability of the $i$-th velocity flip is therefore
\begin{align*}
&\probability\left\{
	|\momentum_i(\tau)| \leq \max_{0 \leq t \leq \dt} \left[
		\int_0^{t} \velocity_i(\tau) \spaceBeforePartial \partial_i \potential( \bposition(\tau + s) ) \diff s
	\right]
\right\} \\
&\hspace{4em}
	= \probability\left\{
		|\momentum_i(\tau)| \leq \int_0^{\dt} \big[ \velocity_i(\tau) \spaceBeforePartial \partial_i \potential( \bposition(\tau + s) ) \big]^+ \diff s
	\right\} \\
&\hspace{4em}
	= 1 - \exp\left\{
		- \int_0^{\dt} \big[ \velocity_i(\tau) \spaceBeforePartial \partial_i \potential( \bposition(\tau + s) ) \big]^+ \diff s
	\right\}\\
&\hspace{4em}
	= \big[ \velocity_i(\tau) \spaceBeforePartial \partial_i \potential( \bposition(\tau) ) \big]^+ \dt + O(\dt^2).
	\yesnumber
	\label{eq:event_prob_for_hamiltonian_zigzag_over_small interval}
\end{align*}
Equation \eqref{eq:event_prob_for_hamiltonian_zigzag_over_small interval} shows that, immediately following the momentum magnitude refreshment, an $i$-th velocity switch event for Hamiltonian zigzag happens at a rate essentially identical to that of Markovian zigzag as given in \eqref{eq:markovian_zigzag_event_rate}.

Now consider resampling the momentum magnitude after every time interval of size $\dt$ and letting $\dt \to 0$.
Our analysis above suggests that, under this limit, the rate of coordinate-wise velocity switch events for Hamiltonian zigzag converges to $\switchRate_i(\bposition, \bvelocity) = \left[ \velocity_i \spaceBeforePartial \partial_i \potential(\bposition) \right]^+$.
That is, Hamiltonian zigzag becomes \textit{equivalent} to Markovian zigzag under this infinite momentum refreshment limit.

We now turn the above intuition into a rigorous argument.
In Theorem~\ref{thm:convergence_of_hamiltonian_zigzag_to_markovian} below, $D[0, \infty)$ denotes the space of right-continuous-with-left-limit functions from $[0, \infty)$ to $\mathbb{R}^\nParam \times \mathbb{R}^\nParam$ endowed with Skorokhod topology, the canonical space to study convergence of stochastic processes with jumps \citep{billingsley1999convergence_of_prob_measures, ethier2005markov_processes}.
In particular, the convergence $(\bposition_{\dt}, \bvelocity_{\dt}) \to (\bposition, \bvelocity)$ in this space implies the convergence of the ergodic average $\integrationTime^{-1} \int_{0}^{\integrationTime} f(\bposition_{\dt}(t)) \diff t \to \integrationTime^{-1} \int_{0}^{\integrationTime} f(\bposition(t)) \diff t$ for any continuous real-valued function $f$.

\begin{theorem}[Weak convergence]
\label{thm:convergence_of_hamiltonian_zigzag_to_markovian}
Given an initial position $\bposition(0)$ and velocity $\bvelocity(0) \in \{\pm 1\}^\nParam$, consider Hamiltonian zigzag dynamics with $\momentum_i(0) = \velocity_i(0) |\momentum_i(0)|$ with $|\momentum_i(0)| \sim \expDist(1)$ and with momentum magnitude resampling at every $\dt$ interval, i.e.\ $|\momentum_i(n \dt)| \sim \expDist(1)$ for \mbox{$n \in \mathbb{Z}^+$}.
For each $\dt$, let $(\bposition_{\dt}, \bmomentum_{\dt})$ denote the corresponding dynamics and $\bvelocity_{\dt}$ the right-continuous modification of velocity $\diff \bposition_{\dt} / \diff t = \sign(\bmomentum_{\dt})$.
Then, as $\dt \to 0$, the dynamics $(\bposition_{\dt}, \bvelocity_{\dt})$ converges weakly to the Markovian zigzag process in $D[0, \infty)$.
\end{theorem}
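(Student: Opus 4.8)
The plan is to leverage the distributional identity $-\log\uniformRv_i \eqDistribution |\momentum_i| \sim \expDist(1)$ emphasized in Section~\ref{sec:comparison_of_two_zigzags} to construct a coupling under which the two zigzags \emph{coincide} on any bounded horizon $[0,\integrationTime]$ with probability tending to one. I would represent the Markovian zigzag through its time-change (thinning) construction: draw i.i.d.\ $E^i_1, E^i_2, \dots \sim \expDist(1)$ for each coordinate $i$ and let the $j$-th velocity switch of coordinate $i$ occur when the operational clock $\Lambda_i(\interEventTime) = \int_0^\interEventTime \switchRate_i(\bposition(s), \bvelocity(s))\,\diff s$ first reaches $E^i_1 + \dots + E^i_j$. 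By the memoryless property of the rate-one Poisson process, at each deterministic refreshment time $n\dt$ the residual clock $r_i(n\dt)$ to the next switch of coordinate $i$ is, conditionally on the past, independent $\expDist(1)$ across $i$. I then \emph{define} the coupled Hamiltonian zigzag by resetting its momentum magnitudes at these times to $|\momentum_i(n\dt)| := r_i(n\dt)$ (with $r_i(0) = E^i_1$); being conditionally i.i.d.\ $\expDist(1)$, these reproduce exactly the refreshment scheme of the theorem, so the coupled process has the prescribed law.

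The heart of the argument is then a window-by-window induction: on a high-probability event $G_\dt$, the two processes carry identical $(\bposition, \bvelocity)$ throughout $[0, \integrationTime]$. Suppose the window $[n\dt, (n+1)\dt)$ begins with both in the same state $(\bposition, \bvelocity)$. While positions agree and no switch has occurred, the sole difference between the coordinate-wise event clocks is the positive-part operator distinguishing \eqref{eq:markovian_zigzag_event_time} from \eqref{eq:hamiltonian_zigzag_event_time}, both fed the common threshold $r_i(n\dt) = |\momentum_i(n\dt)|$; if $|\partial_i \potential(\bposition)| > C\dt$, with $C$ chosen larger than $\sqrt\nParam$ times a bound on $\|\nabla^2 \potential\|$ over the reachable set --- compact because $|\bvelocity| \equiv \sqrt\nParam$ confines $\bposition_\dt(\cdot)$ to a fixed ball on $[0, \integrationTime]$ --- then $\velocity_i \partial_i \potential$ retains its sign along the $O(\dt)$-long segment, so the two integrands are either identical (``uphill,'' and any switch is triggered at a common time, after which the coordinate is ``downhill'' for both and cannot switch back within the window) or both inert (``downhill,'' no switch for either). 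Letting $G_\dt$ be the event that every window up to time $\integrationTime$ contains at most one velocity switch and that no window both contains a switch and begins within $C\dt$ of $\bigcup_i \{\partial_i \potential = 0\}$, the induction closes. Finally $\probability(G_\dt^{\complement}) = O(\dt) \to 0$: each of the $\integrationTime/\dt$ windows contains two or more switches with probability $O(\dt^2)$ (by \eqref{eq:event_prob_for_hamiltonian_zigzag_over_small interval} and its Markovian analogue), and, since the piecewise-linear trajectory crosses each lower-dimensional manifold $\{\partial_i \potential = 0\}$ transversally for a.e.\ initial condition, it spends only $O_{\probability}(\dt)$ total time within distance $C\dt$ of their union, so only $O_{\probability}(1)$ windows begin there, each containing a switch with probability $O(\dt)$. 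On $G_\dt$ the coupled Hamiltonian zigzag equals the Markovian zigzag on $[0, \integrationTime]$; as $\integrationTime$ is arbitrary, $(\bposition_\dt, \bvelocity_\dt)$ converges to Markovian zigzag in probability in $D[0,\infty)$, hence weakly.

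As an alternative I would note that $(\bposition_\dt(n\dt), \bvelocity_\dt(n\dt))$ is a Markov chain whose one-step transition operator $P_\dt$ obeys $(P_\dt - I)/\dt \to \mathcal{L}$ uniformly on compacts, where $\mathcal{L} f(\bposition, \bvelocity) = \bvelocity \cdot \nabla_\bposition f + \sum_i \switchRate_i(\bposition, \bvelocity)\,[f(\bposition, \velocityFlipOperator_i(\bvelocity)) - f(\bposition, \bvelocity)]$ is precisely the Markovian zigzag generator --- this being \eqref{eq:event_prob_for_hamiltonian_zigzag_over_small interval} plus a Taylor expansion of $f$ --- and then invoke the standard generator-convergence theorem for Markov processes together with tightness (immediate for the $\sqrt\nParam$-Lipschitz position component and for the velocity component, whose expected number of jumps on $[0,\integrationTime]$ is bounded uniformly in $\dt$) and well-posedness of the limiting martingale problem \citep{bierkens2019ergodicity, bierkens2019zigzag_original, ethier2005markov_processes, billingsley1999convergence_of_prob_measures}. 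Either way, the genuinely delicate ingredient --- and the step I expect to be the main obstacle --- is exactly what the sets $\problematicSet_i$ capture in Theorem~\ref{thm:existence_and_propeties_of_hamiltonian_zigzag}: ensuring the small-$\dt$ estimates remain uniform near the zero sets of the $\partial_i \potential$, and that a just-switched coordinate does not switch back within the same $\dt$-window. The saving grace is that $\switchRate_i$ is itself $O(\dt)$ on the $C\dt$-tube around $\{\partial_i \potential = 0\}$, so these events contribute only $O(\dt^2)$ per window; executing this localization carefully, under the regularity assumptions of Theorem~\ref{thm:existence_and_propeties_of_hamiltonian_zigzag}, is where the bulk of the rigor lies.
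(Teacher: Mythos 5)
Your overall strategy --- an exact coupling under which the two zigzags coincide on $[0,\integrationTime]$ with probability $1-O(\dt)$, followed by the passage from convergence in probability in the Skorokhod metric to weak convergence in $D[0,\infty)$ --- is the same as the paper's, and your per-window case analysis (sign constancy of $\velocity_i\partial_i\potential$ along a window that starts away from $\{\partial_i\potential=0\}$, an $O(\dt^2)$ bound when it does not because the switch intensity is itself $O(\dt)$ there, and an $O(\dt^2)$ bound for two or more switches) is essentially the paper's Lemma~\ref{LEM:COUPLING_ON_EACH_INTERVAL}. However, your coupling construction has a genuine flaw: the process you call the coupled Hamiltonian zigzag does not have the law prescribed in the theorem. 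You build a single Markovian zigzag from fixed exponentials $E^i_1, E^i_2,\dots$ and then set $|\momentum_i(n\dt)| := r_i(n\dt)$, the residual operational clock. Memorylessness does give that $r_i(n\dt)$ is $\expDist(1)$ conditionally on the Markovian path up to time $n\dt$, but the theorem requires the refreshed magnitudes to be independent across $n$ and independent of the coupled Hamiltonian past, and they are not: on the event that coordinate $i$ experiences no Markovian switch in $[(n-1)\dt, n\dt]$ --- the overwhelmingly likely event for small $\dt$ --- the same threshold $E^i_{J+1}$ is still in force, so $r_i(n\dt) = r_i((n-1)\dt) - \int_{(n-1)\dt}^{n\dt}\switchRate_i \diff s$ is a deterministic function of the previous ``refreshment'' and the path. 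In dimension one, for instance, on the no-switch event your construction forces $|\momentum(\dt)| = |\momentum(0)| - a$ exactly for a path-determined constant $a$, an event of probability zero under the prescribed dynamics, whose refreshments are fresh independent $\expDist(1)$ draws. In effect your coupled process simply fails to refresh on no-switch windows, so showing that it sticks to the Markovian zigzag does not prove the theorem.

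The fix is to run the coupling in the opposite direction, which is what the paper does: draw fresh uniforms $\buniformRv_n$ (equivalently fresh $\expDist(1)$ variables) at every time $n\dt$, let them define the Hamiltonian refreshments $|\bmomentum^\superH_{\dt}(n\dt)| = -\log\buniformRv_n$ --- so the Hamiltonian law is correct by construction --- and use the same values as the Markovian zigzag's \emph{first}-event thresholds on $[n\dt,(n+1)\dt]$, with independent thresholds for any later events in the window; re-drawing the Markovian thresholds at each $n\dt$ is legitimate because the Markovian zigzag is memoryless (Markov property plus Chapman--Kolmogorov), whereas reading Hamiltonian refreshments off a fixed Markovian realization is not. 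With the coupling corrected, your window analysis goes through and matches the paper's; one further simplification is available: the global transversality and occupation-time argument you invoke for windows starting near $\bigcup_i\{\partial_i\potential=0\}$ is unnecessary, since the observation you make at the end --- that the relevant integrands are $O(\dt)$ there, so a switch in such a window has probability $O(\dt^2)$ --- already gives a per-window, worst-case bound that sums to $O(\dt)$ over the $\lceil\integrationTime/\dt\rceil$ windows, which is exactly how the paper's Lemma~\ref{LEM:COUPLING_ON_EACH_INTERVAL} handles the small-gradient case.
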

This weak convergence result characterizes Markovian zigzag as a special case, albeit only in the limit, of Hamiltonian zigzag. 
As such, it \emph{almost} guarantees that Hamiltonian zigzag, if combined with optimal momentum refreshment schedule, will outperform Markovian zigzag.
At a more practical level, the interpretation of Markovian zigzag as Hamiltonian zigzag with less momentum pinpoints the cause of dramatic differences in their efficiency observed in our numerical examples (Section~\ref{sec:simulation}).

En route to the weak convergence result, we in fact establish a stronger convergence in probability via explicit coupling of Hamiltonian and Markovian zigzag processes:
\begin{theorem}[Strong convergence]
\label{thm:strong_convergence_of_hamiltonian_zigzag_to_markovian}
The Hamiltonian zigzags with momentum magnitude refreshments, as described in Theorem~\ref{thm:convergence_of_hamiltonian_zigzag_to_markovian}, can be constructed so that their position-velocity components $(\bposition_{\dt}, \bvelocity_{\dt})$ converge strongly to the Markovian zigzag in $D[0, T]$ for all $T < \infty$.
More precisely, there exists a family of Hamiltonian zigzags $(\bposition_{\dt}^\superH, \bmomentum_{\dt}^\superH)$ and Markovian zigzags $(\bposition_{\dt}^\superM, \bvelocity_{\dt}^\superM)$ on the same probability space such that, for any $\epsilon > 0$ and $T > 0$,
\begin{equation}
\label{eq:conv_in_probability_definition}
\lim_{\dt \to 0} \probability\!\left\{
	\rho_\integrationTime\!\left[
		(\bposition_{\dt}^\superH, \bvelocity_{\dt}^\superH),
		(\bposition_{\dt}^\superM, \bvelocity_{\dt}^\superM)
	\right] > \epsilon
\right\} = 0,
\end{equation}
where $\bvelocity_{\dt}^\superH$ is the right-continuous modification of $\sign(\bmomentum_{\dt}^\superH)$ and $\rho_\integrationTime(\cdot, \cdot)$ is the Skorokhod metric on $[0, T]$. 
In fact, the two zigzags can be constructed so that
\begin{equation*}
\lim_{\dt \to 0} \probability\!\left\{
		(\bposition_{\dt}^\superH, \bvelocity_{\dt}^\superH) \equiv (\bposition_{\dt}^\superM, \bvelocity_{\dt}^\superM)
		\, \text{ on } \,
		[ 0, \integrationTime ]
	\right\} = 1.
\end{equation*}
\end{theorem}
In the statement above, the distributions of the Hamiltonian zigzags $(\bposition_{\dt}^\superH, \bvelocity_{\dt}^\superH)$ depend on $\dt$ due to momentum magnitude refreshments, but their Markovian counterparts all have the same distribution $(\bposition_{\dt}^\superM, \bvelocity_{\dt}^\superM) \eqDistribution (\bposition^\superM, \bvelocity^\superM)$.
By Theorem~3.1 of \cite{billingsley1999convergence_of_prob_measures}, convergence in the sense of \eqref{eq:conv_in_probability_definition} implies weak convergence $(\bposition_{\dt}^\superH, \bvelocity_{\dt}^\superH) \to (\bposition^\superM, \bvelocity^\superM)$ in $D[0, T]$ for any $T > 0$ and hence, by Theorem~16.7 of \cite{billingsley1999convergence_of_prob_measures}, in $D[0, \infty)$.
In particular, our Theorem~\ref{thm:convergence_of_hamiltonian_zigzag_to_markovian} follows from our Theorem~\ref{thm:strong_convergence_of_hamiltonian_zigzag_to_markovian}, whose proof is in Supplement Section \ref{sec:proof_of_strong_convergence_of_hamiltonian_zigzag_to_markovian}.

\section{Numerical study: two zigzags duel over truncated multivariate Gaussians}
\label{sec:simulation}
Our theoretical result of Section~\ref{sec:markovian_zigzag_as_limit_of_hamiltonian_zigzag} shows Markovian zigzag as essentially equivalent to Hamiltonian zigzag with constant refreshment of momentum magnitude.
As we heuristically argue in Section~\ref{sec:intro} and \ref{sec:comparison_of_two_zigzags}, the loss of full momentum information can make Markovian zigzag more prone to random-walk behavior in the presence of strong dependency among parameters.
We validate this intuition empirically in this section.

We have so far put aside the issue of numerically simulating zigzag trajectories in practice.
The coordinate-wise integrator of \cite{nishimura2020discontinuous_hmc} provides one way to qualitatively approximate Hamiltonian zigzag.
With a suitable modification (Supplement Section~\ref{sec:midpoint_integrator}), the mid-point integrator of \citet{chin2023bouncy_hmc} provides another option.
For exact simulations, however, both zigzags require computing the times of velocity switch events (Line~\ref{line:coord_wise_event_time} in Algorithm~\ref{alg:hamiltonian_zigzag_simulation} and \ref{alg:markovian_zigzag_simulation}).
Hamiltonian zigzag additionally requires computing the integrals of Line~\ref{line:integral_1} and \ref{line:integral_2} for updating momentum.
Being a Markovian process, Markovian zigzag allows the use of Poisson thinning in determining event times \citep{bierkens2018scaling_of_pdmp_sampler}.
This fact makes it somewhat easier to simulate Markovian zigzag, while an efficient implementation remains challenging except for a limited class of models \citep{vanetti2017piecewise_deterministic_mcmc}.

Here we focus on sampling from a truncated multivariate Gaussian, a special yet practically relevant class of targets, on which we can efficiently simulate both zigzags.
Besides simple element-wise multiplications and additions, simulating each linear segment of the zigzags only requires solving $\nParam$ quadratic equations and extracting a column of the Gaussian precision matrix $\bPhi$ (Supplement Section~\ref{sec:zigzags_on_truncated_gaussians}).
This in particular gives the zigzags a major potential advantage, depending on the structure of $\bPhi$, over other state-of-the-art algorithms for truncated Gaussians that require computationally expensive pre-processing operations involving $\bPhi$ \citep{pakman2014truncated_normal_hmc, botev2017truncated_normal}.
In fact, the numerical results of \cite{zhang2022hdtg} indicate zigzag \hmc{} as a preferred choice over the algorithms of \cite{pakman2014truncated_normal_hmc} and \cite{botev2017truncated_normal} in many high-dimensional applications.
\myedit{PointerToMoreComparisions}{%
	Supplement Section~\ref{sec:existing_sampler_from_truncated_gaussians} provides more detailed discussion of how these algorithms compare in their algorithmic complexities and complement the benchmark of \cite{zhang2022hdtg} with additional numerical results.%
}

We compare performances of the two zigzags on a range of truncated Gaussians, consisting of both synthetic and real-data posteriors.
As predicted, Hamiltonian zigzag emerges as a clear winner as dependency among parameters increases.

\subsection{Zigzag-N{\small UTS}: Hamiltonian zigzag with no-U-turn algorithm}
\label{sec:zigzag_nuts}
Given the availability of analytical solutions in simulating zigzag trajectories, Markovian zigzag is completely tuning-free in the truncated Gaussian case.
Hamiltonian zigzag requires periodic momentum refreshments $\momentum_i \sim \laplaceDist(\textrm{scale} = 1)$ for ergodicity, so the integration time $\integrationTime$ in-between refreshments remains a user-specified input.
On the other hand, being a reversible dynamics, Hamiltonian zigzag can take advantage of the no-U-turn algorithm (\nuts{}) of \cite{hoffman2014nuts} to automatically determine an effective integration time.
This way, we only need to supply a base integration time $\baseIntegrationTime$ to Hamiltonian zigzag --- the no-U-turn algorithm will then identify an appropriate integration time $\integrationTime = 2^k \baseIntegrationTime$, where $k \geq 0$ is the height of a binary trajectory tree at which the trajectory exhibits a U-turn behavior for the first time.
We provide in Supplement Section~\ref{sec:nuts_with_reversible_dynamics} the details of how to combine the no-U-turn algorithm with reversible dynamics in general.

With the automatic multiplicative adjustment of the total integration time, the combined Zigzag-\Nuts{} algorithm only requires us to set $\baseIntegrationTime$ as a reasonable underestimate of an optimal integration time.
Based on the intuition that the integration time should be proportional to a width of the target in the least constrained direction \citep{neal2010hmc}, we choose $\baseIntegrationTime$ for Zigzag-\Nuts{} as follows.
In the absence of truncation, this width of the target is proportional to $\eigenValue_{\max}^{1/2}\left( \bPhi^{-1} \right) = \eigenValue_{\min}^{-1/2}(\bPhi)$ where $\eigenValue_{\max}$ and $\eigenValue_{\min}$ denote the largest and smallest eigenvalues, both of which can be computed quickly via a small number of matrix-vector operations $\bm{w} \to \bPhi^{-1} \bm{w}$ or $\bm{w} \to \bPhi \bm{w}$ using the Lanzcos algorithm \citep{meurant2006lancoz-and-cg}. 
For the standard \hmc{} based on Gaussian momentum, an optimal integration time on multivariate Gaussian targets is $\integrationTime = \varpi \eigenValue_{\min}^{-1/2}(\bPhi) / 2$ \citep{bou2018hmc}, where $\varpi \approx 3.14$ denotes Archimedes's constant.
This suggests that $\baseIntegrationTime \approx \eigenValue_{\min}^{-1/2}(\bPhi)$ should be close to the upper end of reasonable base integration times. 
We hence propose a choice $\baseIntegrationTime = \eigenValue_{\min}^{-1/2}(\bPhi) \baseIntegrationTimeMultiplier $ for  $\baseIntegrationTimeMultiplier \leq 1$, where $\baseIntegrationTimeMultiplier$ represents a base integration time relative to the target's width $\eigenValue_{\min}^{-1/2}(\bPhi)$.

In our numerical results, we find that $\baseIntegrationTimeMultiplier = 0.1$ works well in a broad range of problems.
We observe further performance gains from a larger value, i.e.\ $\baseIntegrationTimeMultiplier > 0.1$, if the target is highly constrained.
We use $\baseIntegrationTimeMultiplier = 0.1$ in this section for simplicity's sake, but additional numerical results are available in Supplement Section~\ref{sec:base_integration_time_for_nuts}.

\subsection{Study set-up and efficiency metrics}
\label{sec:simulation_setup_and_efficiency_metric}
The existing empirical evaluations of Markovian zigzag rely on simple low-dimensional target distributions;
consequently, there is great interest in having its performance tested on more challenging higher-dimensional problems \citep{dunson2020hastings_algorithm}.
We start from where \cite{bierkens2018scaling_of_pdmp_sampler} left off --- $256$-dimensional correlated Gaussians (without truncation) --- and first test the two zigzags on synthetic truncated Gaussians of dimension up to $4{,}096$. 
We then proceed to a real-world application, comparing the performances of the two zigzags on a $11{,}235$-dimensional truncated Gaussian posterior.

We compare the two zigzags' performances using effective sample sizes (\ess{}), a well-established metric for quantifying efficiency of Markov chain Monte Carlo (\mcmc{}) algorithms \citep{geyer2011intro_to_mcmc}. 
When assessing a relative performance of two \mcmc{} algorithms, we also need to account for their per-iteration computational costs.
We therefore report \ess{} per unit time, as is commonly done in the literature, where ``time'' refers to the actual time it takes for our code to run and is not to be confused with the time scales of zigzag dynamics.
We note that relative computational speed can vary significantly from one computing environment to another due to various performance optimization strategies used by modern hardware, such as instruction-level parallelism and multi-tiered memory cache \citep{guntheroth2016optimizedCpp, nishimura2022cg_accelerated_gibbs_supp, holbrook2020massive_parallelization}.
For this reason, while \ess{} per time is arguably the most practically relevant metric, we consider an alternative platform-independent performance metric in Supplement Section~\ref{sec:ess_per_event}.

We run Zigzag-\Nuts{} for $25{,}000$ iterations on each synthetic posterior of Section~\ref{sec:synthetic_examples}.
Each iteration of Zigzag-\Nuts{} is more computationally intensive on the real-data posterior of Section~\ref{sec:phylogenetic_probit}, but also mixes more efficiently than on the hardest synthetic posterior.
We hence use a shorter chain of $1{,}500$ iterations on the real-data one.
With these chain lengths, we obtain at least $100$ \ess{} along each coordinate in all our examples.

\myedit{EssCalcExplanataion}{%
For each Markovian zigzag simulation, we collect \mcmc{} samples spaced at time intervals of size $\baseIntegrationTime$, the base integration time for Zigzag-\Nuts{}.
This way, we sample Markovian zigzag at least as frequently as Hamiltonian zigzag along their respective trajectories.
We thus ensure a fair comparison between the two zigzags and, if any, tilt the comparison in favor of Markovian zigzag.
To obtain at least $100$ \ess{} along each coordinate, we simulate Markovian zigzag for $T = 250{,}000 \times \baseIntegrationTime$ on each synthetic posterior and $T = 1{,}500 \times \baseIntegrationTime$ on the real-data posterior, generating $250{,}000$ and $1{,}500$ samples respectively.

Note that, while both zigzags can in theory utilize entire trajectories to estimate posterior quantities of interest \citep{bierkens2019zigzag_original, nishimura2020recycled_hmc}, such  approaches are often impractical in high-dimensional settings due to memory constraints.
In the $11{,}235$-dimensional example of Section~\ref{sec:phylogenetic_probit}, for example, $1{,}500$ iterations of Markovian (Hamiltonian) zigzag undergoes 
$1.6 \times 10^8$ ($2.1 \times 10^8$) 
velocity switch events.
Storing all these event locations would require $1.8$ ($2.4$) \textsc{tb} in 64-bit double precision, while providing little practical benefit because of their high auto-correlations.

We implement both zigzags (Algorithm~\ref{alg:hamiltonian_zigzag_for_truncated_normal} and \ref{alg:markovian_zigzag_for_truncated_normal} in Supplement Section~\ref{sec:zigzags_on_truncated_gaussians}) in the Java programming language as part of the Bayesian phylogenetic software \textsc{beast};
the code and instruction to reproduce the results are available at \url{https://github.com/aki-nishimura/code-for-hamiltonian-zigzag-2024}.
We run each \mcmc{} on a c5.2xlarge instance in Amazon Elastic Compute Cloud, equipped with $4$ Intel Xeon Platinum 8124M processors and $16$ \textsc{gb} of memory. 
For each target, we repeat the simulation $5$ times with different seeds and report \ess{} averaged over these $5$ independent replicates.
\Ess{}'s are computed using the R \textsc{coda} package \citep{plummer2006}.%
}

\subsection{Threshold model posteriors under correlated Gaussian priors}
\label{sec:synthetic_examples}
\cite{bierkens2018scaling_of_pdmp_sampler} consider Gaussian targets with compound symmetric covariance 
\begin{equation}
\label{eq:compound_symmetry_covariance}
\variance(\position_i) = 1, \quad
\covariance(\position_i, \position_j) = \rho \in [0, 1)
\ \text{ for } \, i \neq j.
\end{equation}
Such a distribution can be interpreted as a prior induced by a model $\position_i = \rho^{1 / 2} z + (1 - \rho)^{1 / 2} \epsilon_i$, with shared latent factor $z \sim \normalDist(0, 1)$ and individual variations $\epsilon_i \sim \normalDist(0, 1)$.
We construct truncated Gaussian posteriors by assuming a threshold model
\begin{equation*}
\observation_i = \indicator\{ \position_i > 0 \} - \indicator\{ \position_i \leq 0 \}.
\end{equation*}
An arbitrary thresholding would make it difficult to get any feel of the geometric structure behind the resulting truncated Gaussian posterior.
We hence assume a simple thresholding with $\observation_i = 1$ for all $i$, inducing posteriors constrained to the positive orthant $\{ \position_i > 0 \}$.
We investigate the effect of degree of correlation on zigzags' performance by varying the correlation coefficient, using the values $\rho = 1 - (0.1)^k$ for $k = 0, 1, 2$.


The numerical result for the compound symmetric posteriors, including the i.i.d.\ case $\rho = 0$, is summarized in Table~\ref{tab:ess_per_time_compound_symmetric}.
Since $\position_i$ are exchangeable, we calculate \ess{} only along the first coordinate.
We also calculate \ess{} along the principal eigenvector of $\bPhi^{-1}$ since \hmc{} typically struggles most in sampling from the least constrained direction \citep{neal2010hmc}.
Markovian zigzag face a similar challenge as evidenced by the visual comparison of the two zigzag samples projected onto the principal component (Figure~\ref{fig:zigzag_traceplot}).

\begin{figure}
\centering
\includegraphics[width=\linewidth]{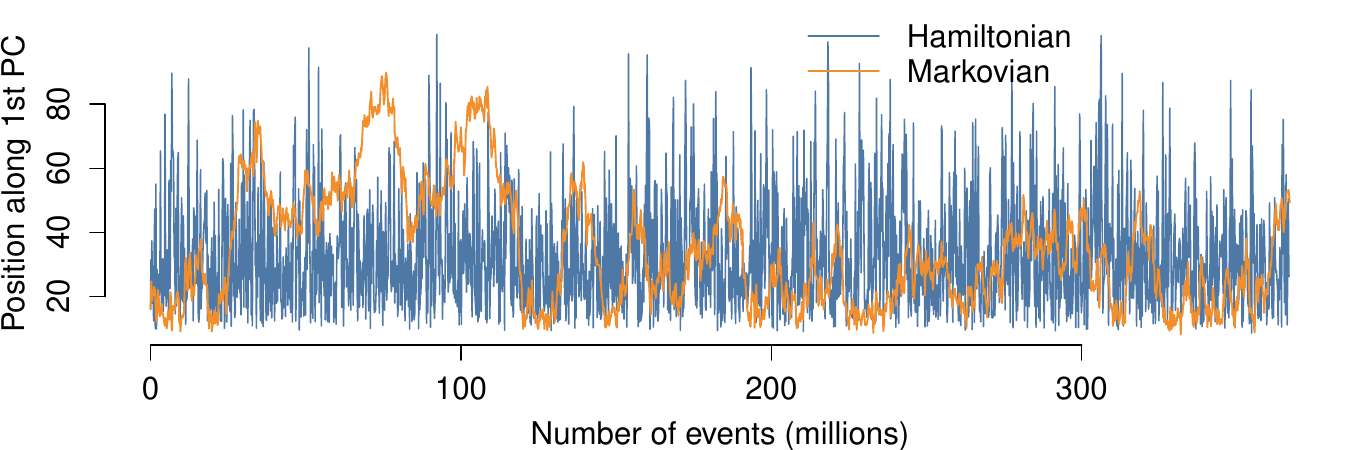}
\caption{%
	Traceplot of the zigzag samples from the $1{,}024$-dimensional compound symmetric posterior \eqref{eq:compound_symmetry_covariance} with $\rho = 0.99$, projected onto the principal component $\principalComp = (1, \ldots, 1) / \sqrt{\nParam}$ via a map $\bposition \to \langle \bposition, \principalComp \rangle$.
	The horizontal axis is scaled to represent the number of velocity switch events.
}
\label{fig:zigzag_traceplot}
\end{figure}

As predicted, Hamiltonian zigzag demonstrates increasingly superior performance over its Markovian counterpart as the correlation increases, delivering $4.5$ to $4.7$-fold gains in relative \ess{} at $\rho = 0.9$ and $40$ to $54$-fold gains at $\rho = 0.99$.
The efficiency gain is generally greater at the higher dimension $\nParam = 1{,}024$.
For the i.i.d.\ case, Hamiltonian zigzag seems to have no advantage.
This is in a sense expected since, on an i.i.d.\ target, both zigzags become equivalent to running $d$ independent one-dimensional dynamics and have no interactions among the coordinates.
In other words, Hamiltonian zigzag's additional momentum plays no role when parameters are independent. 
Also, for a univariate Gaussian target, Markovian zigzag has been shown to induce negative auto-correlations and thus achieve sampling efficiency above that of independent Monte Carlo \citep{bierkens2017limit_theorems_for_zigzag}.

\begin{table}
\begin{minipage}{.35\linewidth}
\caption{%
	\Ess{} per computing time --- relative to that of Markovian zigzag --- under the compound symmetric posteriors.
	We test the algorithms under three correlation parameter values ($\rho = 0, 0.9$, and $0.99$) and two varying dimensions ($\nParam = 256$ and $1{,}024$). 
	\Ess{}'s are calculated along the first coordinate and along the principal eigenvector of $\bPhi^{-1}$, each shown under the labels ``$\position_1$'' and ``PC.''
	\label{tab:ess_per_time_compound_symmetric}
}
\end{minipage}
~
\begin{minipage}{.6\linewidth}
\vspace*{-1.2\baselineskip}
\begin{tabular}[t]{lccccccc}
\toprule
& \multicolumn{5}{c}{Relative \Ess{} per time}\\
\cmidrule(l{3pt}r{3pt}){2-6}
\multicolumn{1}{c}{Compound symmetric} 
& \multicolumn{1}{c}{$\rho = 0$} & \multicolumn{2}{c}{$\rho = 0.9$} & \multicolumn{2}{c}{$\rho = 0.99$} \\
\cmidrule(l{3pt}r{3pt}){2-2} \cmidrule(l{3pt}r{3pt}){3-4} \cmidrule(l{3pt}r{3pt}){5-6}
\multicolumn{1}{c}{} & $x_1$  & $x_1$ & PC & $x_1$ & PC \\
\midrule
\multicolumn{1}{c}{Case: $d = 256$} & \multicolumn{5}{c}{} \\
\rowcolor{markovColor} Markovian & 1 & 1 & 1 & 1 & 1\\
\rowcolor{defaultNutsColor} Zigzag-\Nuts{}
& 0.64 & 4.5 & 4.6 & 41 & 40 \\ 
\rowcolor{hzzManualColor} Zigzag-\Hmc{} $\left( \integrationTime_\textrm{rel} = \sqrt{2} \right)$\hspace*{-1.5ex}
& 5.5 & 46 & 66 & 180 & 180 \\
\rowcolor{white} 
\multicolumn{1}{c}{Case: $d = 1{,}024$} & \multicolumn{5}{c}{} \rule{0pt}{12pt}\\
\rowcolor{defaultNutsColor} Zigzag-\Nuts{} 
& 0.57 & 4.7 & 4.5 & 54 & 54 \\
\rowcolor{hzzManualColor} Zigzag-\Hmc{} $\left( \integrationTime_\textrm{rel}  = \sqrt{2} \right)$\hspace*{-1.5ex}
& 5.6 & 56 & 85 & 300 & 300 \\
\bottomrule
\end{tabular}
\end{minipage}
\end{table}

The compound symmetric targets here have a particularly simple correlation structure;
the covariance matrix can be written as $\bPhi^{-1} = (1 - \rho) \Id + \rho \bm{1} \mathbf{1}^\transpose$ for $\mathbf{1} = (1, \ldots, 1)$, meaning that the probability is tightly concentrated along the principal component and is otherwise distributed symmetrically in all the other directions.
This simple structure in particular allows us to manually identify an effective integration time $\integrationTime$ for Hamiltonian zigzag without too much troubles.
We therefore use this synthetic example to investigate how Zigzag-\Nuts{} perform relative to manually-tuned Hamiltonian zigzag, which we denote as ``Zigzag-\Hmc{}'' in Table~\ref{tab:ess_per_time_compound_symmetric}.

For each of the three targets, we try $\integrationTime = \eigenValue_{\min}^{-1/2}(\bPhi) \integrationTime_\textrm{rel}$ with $\integrationTime_\textrm{rel} = 2^{k / 2}$ for $k = -2, -1, 0, 1, 2$.
We report the \ess{}'s based on $\integrationTime_\textrm{rel} = 2^{1/2}$ in Table~\ref{tab:ess_per_time_compound_symmetric} as we find this choice to yield the optimal \ess{} in the majority of cases. 
We see that manually-optimized Hamiltonian zigzag delivers substantial increases in \ess{} compared to Zigzag-\Nuts{}.
Such efficiency gains are also observed by the authors who proposed alternative methods for tuning \hmc{} \citep{wang2013adaptive_hmc, wu2018faster_hmc}.
The results here indicate that their tuning approaches may be worthy alternatives to Zigzag-\Nuts{} and may further reinforce Hamiltonian zigzag's advantage over Markovian zigzag.

\myedit{ZigzagUnderRotatedCS}{%
	Finally, given that the zigzags's motions are restricted to the discrete set of directions $\bvelocity \in \{\pm 1\}^\nParam$ and that the compound symmetric posteriors happen to be concentrated along one of these directions, one may wonder whether  this coincidental structure affect the above numerical results.
	To answer this question, we conduct additional simulations with rotated versions of the compound symmetric posterior, corresponding to the covariance matrices $\bPhi^{-1} = (1 - \rho) \Id + \rho \thinnerspace \principalComp \principalComp^\transpose$ with the principal components $\principalComp \in \mathbb{R}^\nParam$ drawn uniformly from the $(\nParam - 1)$-dimensional sphere.
	These additional simulations indicate that essentially the same pattern holds in the relative performance of two zigzags and that absolute \ess{} per time changes little across different rotations of the target (Supplement Section~\ref{sec:rotated_compound_symmetric}).
	The latter finding reinforces the theoretical results of \cite{bierkens2023anisotropic} who show that, under a class of anistropic Gaussian targets, any deviation from diagonal covariance results in diffusive behavior in Markovian zigzag.
	On the other hand, the inertia provided by full momentum information appears to endow Hamiltonian zigzag with fundamentally different behavior.%
}

\subsection{Posterior from phylogenetic multivariate probit model}
\label{sec:phylogenetic_probit}
We now consider a 11{,}235-dimensional target arising from the phylogenetic multivariate probit model of  \cite{zhang2021phylo_multi_probit}.
For simplicity's sake, here we describe the model with some simplifications and refer interested readers to the original work for full details.

The goal of \cite{zhang2021phylo_multi_probit} is to learn correlation structure among $\nTraits = 21$ binary biological traits across $ \nTaxa = 535$ \hiv{} viruses while accounting for their shared evolutionary history.
Their model assumes that, conditional on the bifurcating phylogenetic tree $\phylogeny$ informed by the pathogen genome sequences, latent continuous biological traits $\latentData_1, \ldots, \latentData_\nTaxa \in \mathbb{R}^\nTraits$ follows Brownian motion along the tree with an unknown $\nTraits \times \nTraits$ diffusion covariance $\traitCovariance$ (Figure~\ref{fig:phylogenetic_brownian_diffusion}).
The latent traits $\latentData = [\latentData_1, \ldots, \latentData_n]$ map to the binary observation $\bobservation \in \mathbb{R}^{\nTraits \times \nTaxa}$ via the threshold model $\observation_{ij} = \indicator\{ \position_{ij} > 0 \} - \indicator\{ \position_{ij} \leq 0 \}$.

\begin{figure}
\begin{minipage}{.6\linewidth}
\includegraphics[width=\linewidth]{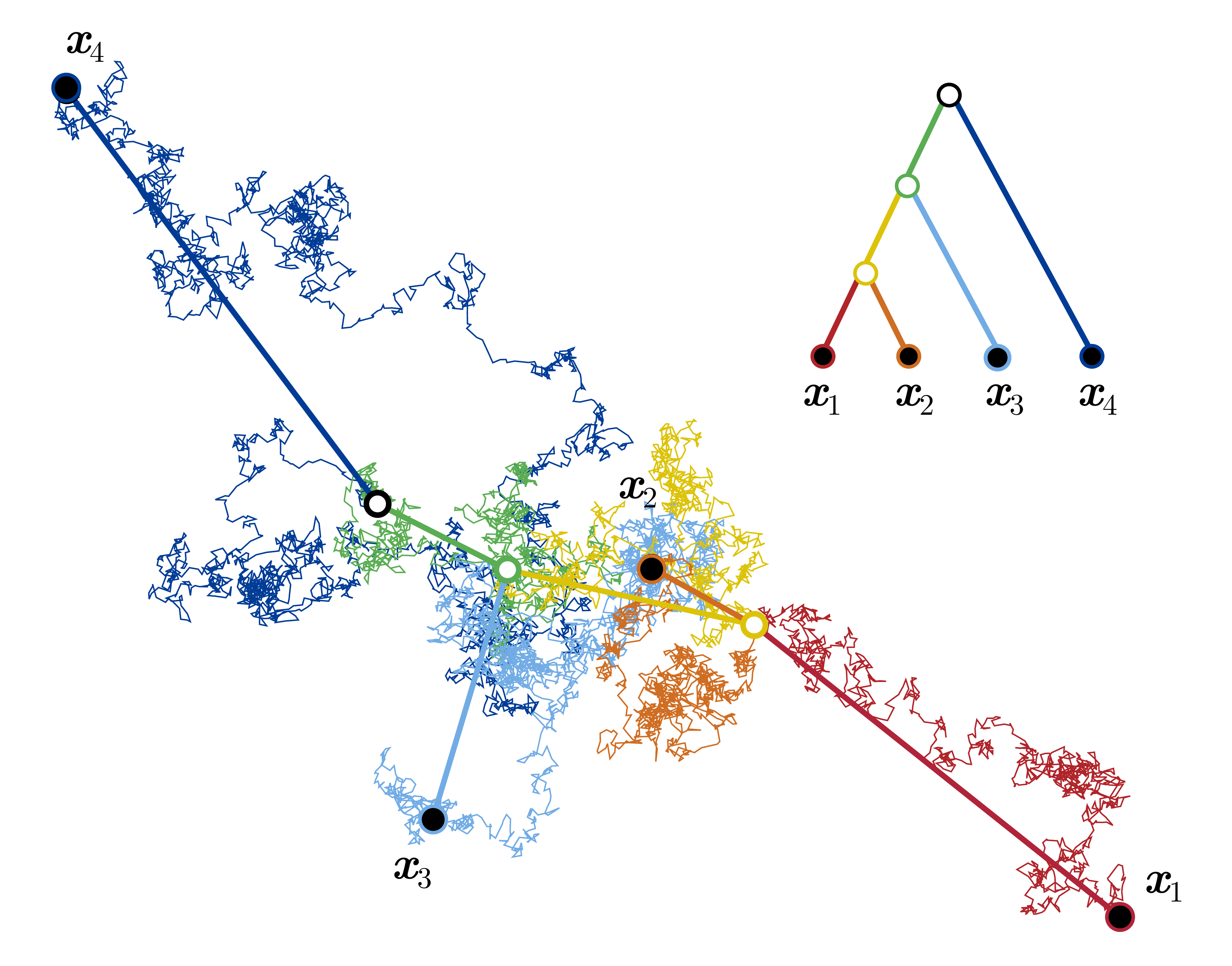}
\end{minipage}
\begin{minipage}{.38\linewidth}
\caption{%
	Example paths of latent biological traits following the phylogenetic diffusion.
	The traits of two distinct organisms evolve together until a branching event.
	Beyond that point, the traits evolve independently but with the same diffusion covariance induced by a shared bio-molecular mechanism.
}
\label{fig:phylogenetic_brownian_diffusion}
\end{minipage}
\end{figure}

The model gives rise to a posterior distribution on the joint space $(\latentData, \traitCovariance, \phylogeny)$ of latent biological traits, diffusion covariance, and phylogenetic tree.
To deal with this complex space, \cite{zhang2021phylo_multi_probit} deploys a Gibbs sampler, the computational bottleneck of which is updating $\latentData$ from its full conditional --- a $11{,}235 = 21 \times 535$ dimensional truncated multivariate Gaussian.
The $\nTraits \nTaxa \times \nTraits \nTaxa$ precision matrix $\bPhi = \bPhi(\traitCovariance, \phylogeny)$ of $\latentData \given \traitCovariance, \phylogeny$ induced by the phylogenetic Brownian diffusion model changes at each Gibbs iteration, precluding use of sampling algorithms that require expensive pre-processings of $\bPhi$.
For the purpose of our simulation, we fix $\traitCovariance$ at the highest posterior probability sample and $\phylogeny$ at the maximum clade credibility tree 
 obtained from the prior analysis by \cite{zhang2021phylo_multi_probit}.
Our target then is the distribution $\latentData \given \traitCovariance, \phylogeny \sim \normalDist(\bmu(\traitCovariance, \phylogeny), \bPhi^{-1}(\traitCovariance, \phylogeny))$ truncated to $\{ \sign(\bposition) = \bobservation \}$.
There are $404$ ($3.6\%$) missing entries in $\bobservation$ and the target remains unconstrained along these coordinates.
\myedit{PreconditioningForHivApplication}{%
	The model is parametrized in such a way that the marginal variances of $\bposition \given \traitCovariance, \phylogeny$ are comparable across the coordinates;
	the diagonal preconditioning technique for \hmc{} \citep{stan_manual, nishimura2020discontinuous_hmc}, therefore, plays little role in this application (Supplement Section~\ref{sec:preconditioning}).%
}
The target distribution parameters $\bmu$, $\bPhi$, and $\bobservation$ are available on a Zenodo repository at \url{http://doi.org/10.5281/zenodo.4679720}.

On this real-world posterior, Hamiltonian zigzag again outperforms its Markovian counterpart, delivering $6.5$-fold increase in the minimum \ess{} across the coordinates and $19$-fold increase in the \ess{} along the principal eigenvector of $\bPhi^{-1}(\traitCovariance, \phylogeny)$ (Table~\ref{tab:ess_per_time_phylo_probit}). 
There is no simple way to characterize the underlying correlation structure for this non-synthetic posterior, so we provide a histogram of the pairwise correlations in Figure~\ref{fig:pairwise_corr_hist_for_phylo_probit} as a crude descriptive summary.
We in particular find that only $0.00156\%$ of correlations have magnitudes above $0.9$. 
However, the joint structure, truncation, and high-dimensionality apparently make for a complex target, which Hamiltonian zigzag can explore more efficiently by virtue of its full momentum.
 
\begin{table}
\centering
\begin{minipage}{.55\linewidth}
\begin{tabular}[t]{lcccc}
\toprule
\multirow{2}{*}{Phylogenetic probit \hspace*{-1ex} \rule{0pt}{12pt}} 
& \multicolumn{2}{c}{Relative \Ess{} per time} \\
\cmidrule(l{3pt}r{3pt}){2-3} 
\cmidrule(l{3pt}r{3pt}){4-5}
& \hspace{2ex} min \hspace{2ex} & PC\\
\midrule
\rowcolor{markovColor} Markovian & 1 & 1\\
\rowcolor{defaultNutsColor} Zigzag-\Nuts{}
& 6.5 & 19\\
\bottomrule
\end{tabular}
\end{minipage}
~
\begin{minipage}{.4\linewidth}
\vspace*{.2\baselineskip}
\caption{%
	Relative \ess{} per computing time under the phylogenetic probit posterior ($\nParam = 11{,}235$).
	The ``min'' label indicates the minimum \ess{} across all the coordinates.
	\label{tab:ess_per_time_phylo_probit}
}
\end{minipage}
\end{table}

\begin{figure}
\centering
\hspace*{-.015\linewidth}
\begin{minipage}{.5\linewidth}
\includegraphics[width=\linewidth]{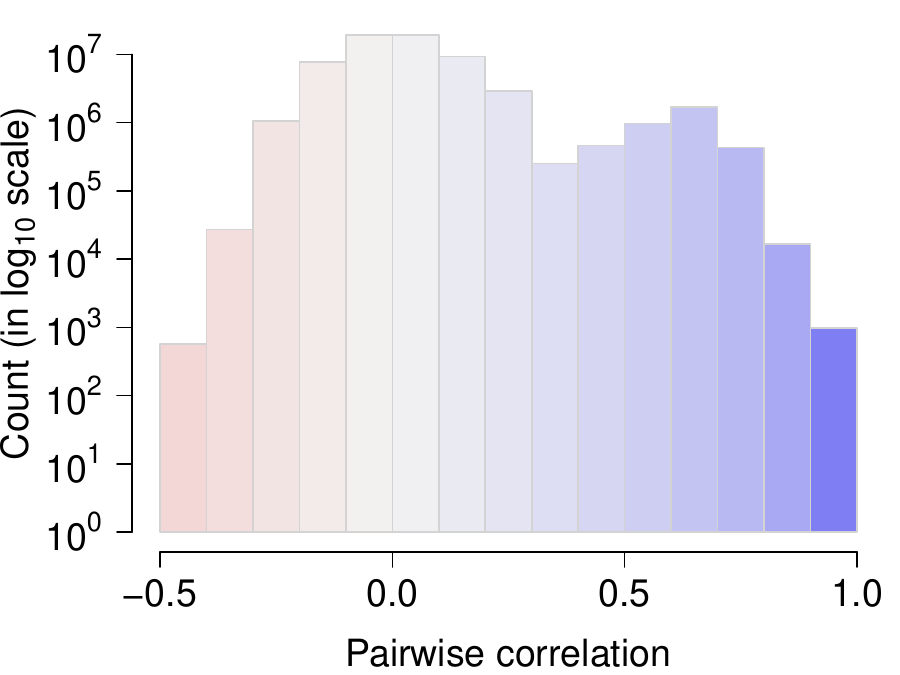}
\end{minipage}
\nobreak\hspace{.1em} 
\begin{minipage}{.49\linewidth}
\caption{%
	Histogram of the pairwise correlations, i.e. the upper off-diagonal entries of $\bPhi^{-1}(\traitCovariance, \phylogeny)$, in the phylogenetic probit posterior.
	The vertical axis is in $\log_{10}$ scale.
	No correlation falls outside the horizontal plot range $[-0.5, 1.0]$, with the smallest and largest correlation being $-0.416$ and $0.989$.
	Out of $\nTraits \nTaxa (\nTraits \nTaxa - 1) / 2 \approx 63.1 \times 10^6$ correlations,  
	$55.3 \times 10^6$ ($87.6\%$) lie within $[-0.2, 0.2]$ and $987$ ($0.00156\%$) within $[0.9, 1]$.
}
\label{fig:pairwise_corr_hist_for_phylo_probit}
\end{minipage}
\end{figure}


\FloatBarrier
\newcommand{\nState}{n}
\newcommand{\nFlow}{m}
\newcommand{\cyclicGraph}{G}
\newcommand{\discreteSpace}{D}
\newcommand{\flow}{\mathcal{F}}
\section{Discussion}
\label{sec:discussion}
In recent years, both piecewise deterministic Markov process (\pdmp{}) samplers and \hmc{} have garnered intense research interests as potential game changers for Bayesian computation.
In this article, we established that one of the most prominent \pdmp{} is actually a close cousin of \hmc{}, differing only in the amount of ``momentum'' they are born with.
This revelation provided novel insights into relative performance of the two samplers, demonstrated via the practically relevant case of truncated multivariate Gaussian targets.

The uncovered kinship between two zigzags begs the question: \emph{is there a more general relationship between \pdmp{} and \hmc{}}?
Searching for an affirmative answer to this question would require us to go beyond the current \hmc{} framework based on classical Hamiltonian dynamics.
Discontinuous changes in velocity seen in \pdmp{} cannot be imitated by smooth Hamiltonian dynamics; 
such behavior is possible under Hamiltonian zigzag only because of its non-differentiable momentum distribution.

In a sense, our result shows that \hmc{}'s momentum is really made of two components: direction and magnitude of inertia.
\Pdmp{}'s velocity on the other hand consists only of direction.
We suspect it is possible to introduce a notion of inertia magnitudes to \pdmp{} in the form of auxiliary parameters --- these parameters can interact with the main parameters so as to emulate the behavior of \hmc{}, preserving the total log-density and storing inertia obtained from potential energy downhills for later use.

While we leave more thorough investigations for future research, we now demonstrate how the above idea translates into a novel non-reversible algorithm based on Hamiltonian-like dynamics in a discrete space. 
We illustrate the approach on a cyclic graph $\cyclicGraph$ with $\nState$ vertices, i.e.\ $\nState$ discrete states placed along a circle.
For $\position \in \cyclicGraph$, we use $\position + 1$ to denote the neighbor in the clockwise direction and $\position - 1$ in the counter-clockwise direction.

\myedit{DiscreteHamiltonianDynamicsPartOne}{%
	We first augment the position space $\{ \position \in \cyclicGraph \}$ with ``direction'' $\velocity \in \{-1, +1\}$ and ``inertia'' $|\momentum| \geq 0$ ---
	there is no native notion of ``momentum $\momentum$'' in this setting, but we use the notation and terminology to draw parallels with Hamiltonian zigzag.
	On this augmented space, we define \textit{discrete Hamiltonian dynamics} with associated ``potential energy'' $\potential(\position)$, whose trajectory $\{ \position(t), \velocity(t), |\momentum|(t) \}$ evolves as follows.
	At each time $t \in \mathbb{Z}$, the next state is given by%
}
\begin{equation*}
\begin{gathered}
\position(t + 1) = \position(t) + v(t), \ \
\velocity(t + 1) = \velocity(t), \ \ \\
|\momentum|(t + 1) =  |\momentum|(t) - \potential(\position(t) + \velocity(t)) + \potential(\position(t))
\end{gathered}
\end{equation*}
if $|\momentum|(t) \geq \potential(\position(t) + \velocity(t)) - \potential(\position(t))$; otherwise,
\begin{equation*}
\position(t + 1) = \position(t), \ \
\velocity(t + 1) = - \velocity(t), \ \
|\momentum|(t + 1) =  |\momentum|(t).
\end{equation*}
\myedit{DiscreteHamiltonianDynamicsPartTwo}{%
	Figure~\ref{fig:hamiltonian_dynamics_on_graph} visually illustrates behavior of this discrete dynamics, which has much conceptual similarity to one-dimensional Hamiltonian zigzag as studies in Section~\ref{sec:comparison_of_two_zigzags} and illustrated in Figure~\ref{fig:markovian_and_hamiltonian_balls_comparison}. 
	In particular, the dynamics by design conserves the sum $\potential(\position) + | \momentum | $, with the ``inertia'' $|\momentum|$ playing the role analogous to that of the kinetic energy $\kinetic(\momentum)$ in the continuous case.
	It is straightforward to verify that the dynamics is reversible and volume-preserving, admits $\pi(\position, \velocity, |\momentum|) = \frac{1}{2} \positionMarginal(\position) \exp(-|\momentum|)$ for $\positionMarginal(\position) \propto \exp\!\left(- \potential(\position) \right)$ as an invariant distribution, and thus constitutes a valid transition kernel \citep{fang2014compressible_hmc, nishimura2020discontinuous_hmc}.%
}

\begin{figure}
\centering
\begin{minipage}{.39\linewidth}
\includegraphics[page=1, width=\linewidth]{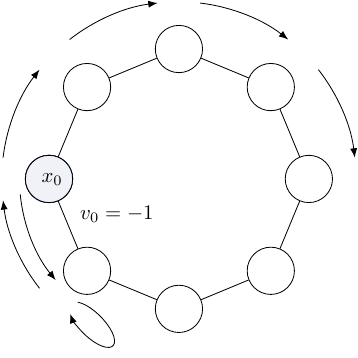}
\end{minipage}
~~
\begin{minipage}{.57\linewidth}
\includegraphics[page=2, width=\linewidth]{Figure/hamiltonian_dynamics_on_graph}
\end{minipage}

\caption{%
	Example trajectory of discrete Hamiltonian dynamics on a cyclic graph $\cyclicGraph$ along with associated potential energy $\{ \potential(\position) \}_{\position \in \cyclicGraph}$.
	The trajectory initially proceeds counterclockwise as directed by the velocity $\velocity_0 = -1$. 
	It does not have sufficient inertia to reach $\position_0 - 2$, however, and hence reverses its course during the second step. 
	The conservation of $\potential(\position) + | \momentum |$ implies that the trajectory then continues clockwise at least to $\position_0 + 4$. 
}
\label{fig:hamiltonian_dynamics_on_graph}
\end{figure}

\myedit{DiscreteHamiltonianDynamicsPartThree}{%
	Now consider adding a ``partial momentum refreshment'' to the above dynamics, drawing $|\momentum|(t) \sim \expDist(1)$ and flipping the sign of $\velocity(t)$ with probability $r \in [0, 1]$ at the beginning of each time step.
	Remarkably, the position-velocity marginal of this dynamics coincides with the general form of non-reversible algorithms presented in Section~5 of \cite{diaconis2000nonreversble_chain}.
	Under a ``full momentum refreshment'' of drawing $|\momentum|(t) \sim \expDist(1)$ and $\velocity(t) \sim \unifDist(\{-1, +1\})$, the dynamics reduces to a Metropolis algorithm with symmetric proposals to the neighbors.
	We can analogously construct Hamiltonian-like dynamics on a more general discrete space, adding partial momentum refreshment to which recovers the fiber algorithm of \cite{diaconis2000nonreversble_chain} and its generalization by \cite{herschlag2020nonreversible_districting_maps}.%
}

Despite using \hmc{} as an inspiration, \cite{diaconis2000nonreversble_chain} and subsequent work have failed to recognize the actual \hmc{} analogues lurking behind their non-reversible methods.
This is understandable --- the notion of momentum from smooth Hamiltonian dynamics does not easily transfer to discrete spaces or non-differentiable paths of \pdmp{}.
On the other hand, having established the explicit link between the two zigzags and having identified the dual role of momentum in providing both direction and inertia, we only need a bit of outside-the-box thinking to extend the idea to discrete spaces.

A unified framework for the \hmc{} and other non-reversible paradigms would present a number of opportunities. 
For example, the introduction of momentum to \cite{herschlag2020nonreversible_districting_maps}'s flow-based algorithm could improve its performance by reducing random walk behavior along each flow.
There could also be cross-fertilization of ideas between continuous- and discrete-space methods.
The elaborate procedures developed in the \pdmp{} literature for choosing the next direction at an event time \citep{fearnhead2018piecewise_deterministic_process, wu2020coordinate_sampler}, for example, is likely applicable to the flow-based algorithm in discrete spaces.
All in all, our revelation of the two zigzags' kinship --- and the insights generated from it --- will likely catalyze further novel developments in Monte Carlo methods.

{\spacingset{1.4}
\bibliographystyle{agsm}
\bibliography{zigzag_hmc}
}
\newpage

%

\renewcommand{\thesection}{S\arabic{section}}
\renewcommand{\thetable}{S\arabic{table}}
\renewcommand{\thefigure}{S\arabic{figure}}
\renewcommand{\theequation}{S\arabic{equation}}
\renewcommand{\thealgorithm}{S\arabic{algorithm}}
\renewcommand{\thepage}{S\arabic{page}}

\setcounter{section}{0}
\setcounter{figure}{0}
\setcounter{table}{0}
\setcounter{equation}{0}
\setcounter{algorithm}{0}
\setcounter{page}{1}

{
  \bigskip
  \bigskip
  \bigskip
  \begin{center}
    {\LARGE\bf Supplement to ``\titleString''}
  \end{center}
}

\section{Proof of Theorem~\ref{thm:existence_and_propeties_of_hamiltonian_zigzag}}
\label{sec:proof_of_hamiltonian_zigzag_properties}

In the proofs below, we use the following standard terminologies.
A \textit{solution operator} $\{ \solutionOp_t \}_{t \geq 0}$ of a differential equation is a family of maps such that $\solutionOp_0$ acts as the identity and $( \bposition_t, \bmomentum_t) = \solutionOp_t(\bposition, \bmomentum)$ for $t \geq 0$ constitutes a solution of the given equation.
The dynamics associated with a differential equation is \textit{time-reversible} if $\momentumFlipOp \circ \solutionOp_t = (\momentumFlipOp \circ \solutionOp_t)^{-1}$ where $\momentumFlipOp$ is the momentum flip operator $\momentumFlipOp(\bposition, \bmomentum) = (\bposition, - \bmomentum)$.
In other words, the forward dynamic followed by the momentum flip can be undone by the exact same operation.
Also, the dynamics is \textit{symplectic} if
\begin{equation}
\label{eq:symplecticity_definition}
\frac{
	\partial (\bposition_t, \bmomentum_t)
}{
	\partial (\bposition, \bmomentum)
}^\transpose
\begin{bmatrix}
	\bm{0} & \ \Id  \, \\
	- \Id  & \ \bm{0} \,
\end{bmatrix}
\frac{
	\partial (\bposition_t, \bmomentum_t)
}{
	\partial (\bposition, \bmomentum)
}
	=
	\begin{bmatrix}
		\bm{0} & \ \Id  \, \\
		- \Id  & \ \bm{0} \,
	\end{bmatrix}
\end{equation}
for all $t \geq 0$, where $\Id$ denote the $\nParam \times \nParam$ identity matrix.
The relation \eqref{eq:symplecticity_definition} can be equivalently expressed as \protect\citepSupp{deGosson2011symplectic_methods}:
\begin{equation}
\label{eq:symplecticity_definition_simplified}
\positionPosition^\transpose \momentumPosition, \
\positionMomentum^\transpose \momentumMomentum
\ \text{ are symmetric; } \ \
\positionPosition^\transpose \momentumMomentum - \momentumPosition^\transpose \positionMomentum = \Id.
\end{equation}
\myedit{LocallyDefinedProperties}{%
	Note that both time-reversibility and symplecticity are locally defined properties and hence make sense for dynamics defined on a subset of $\mathbb{R}^{2\nParam}$.%
}%

\begin{proof}[Proof of existence and uniqueness]
As explained in Section~\ref{sec:laplace_momentum_based_hamiltonian_dynamics}, the solution is completely characterized by \eqref{eq:hamiltonian_zigzag_position_dynamics} -- \eqref{eq:hamiltonian_zigzag_position_dynamics_after_bounce} as long as the trajectory stays away from the problematic set $\problematicSet = \bigcup_i \problematicSet_i$ for $\problematicSet_i$ as in \eqref{eq:problematic_set}.
The conclusions therefore follow once we show the existence of the measure zero set $\zeroMeasureSet$, starting away from which  the trajectory described in Section~\ref{sec:laplace_momentum_based_hamiltonian_dynamics} avoids $\problematicSet$ for all $t \geq 0$.

We construct $\zeroMeasureSet$ as a union of lower dimensional manifolds by backtracking states that may end up in $\problematicSet$ under the dynamics of \eqref{eq:hamilton_under_laplace_momentum}.
For brevity, we omit the adverb ``at most'' when referring to manifold dimensions and take the dimension of $\{ \partial_i \potential(\bposition) = 0 \}$ to be exactly $d - 1$, but the arguments equally hold when some of the manifolds are of lower dimensions.
We first observe that, being a transversal intersection of two $(\nParam - 1)$-dimensional manifolds $\{ (\bposition, \bmomentum) : \partial_i \potential(\bposition) = 0 \}$ and $\{ (\bposition, \bmomentum) : \momentum_i = 0 \}$, the set $\problematicSet_i$  is a $(\nParam - 2)$-dimensional manifold.
We then consider a dynamics similar to \eqref{eq:hamilton_under_laplace_momentum} but with a fixed velocity $\bvelocity \in \{ \pm 1 \}^\nParam$:
\begin{equation}
\label{eq:fixed_velocity_dynamics}
\frac{\diff \bposition}{\diff t}
	= \bvelocity, \quad
\frac{\diff \bmomentum}{\diff t}
	= - \nabla U(\bposition).
\end{equation}
Let $\solutionOp_{\bvelocity, -t}$ for $t \geq 0$ denote the solution operator of \eqref{eq:fixed_velocity_dynamics} backward in time i.e.\
\begin{equation*}
\solutionOp_{\bvelocity, -t}(\bposition, \bmomentum)
	= \left(
		\bposition - t \bvelocity, \ \
		\bmomentum + \int_{0}^{t} \nabla \potential(\bposition - s \bvelocity) \diff s
	\right).
\end{equation*}
By our construction, the set $\zeroMeasureSet^{(1)} = \bigcup_{i, \, \bvelocity \, \in \, \{ \pm 1 \}^\nParam} \zeroMeasureSet_{i, \bvelocity}^{(1)}$ for
\begin{equation}
\label{eq:problematic_set_under_backward_dynamics}
\zeroMeasureSet_{i, \bvelocity}^{(1)} = \textstyle \bigcup_{t \, \geq \, 0} \solutionOp_{\bvelocity, -t}(\problematicSet_i)
\end{equation}
contains all the initial states from which the dynamics of \eqref{eq:hamilton_under_laplace_momentum} reaches $\problematicSet$ at the first velocity switch event.
In other words, for all initial conditions $(\bposition(0), \bmomentum(0)) \not\in \zeroMeasureSet^{(1)}$, the dynamics of \eqref{eq:hamilton_under_laplace_momentum} circumvents the problematic set $\problematicSet$ at least until the second velocity switch event.
Moreover, each $\zeroMeasureSet_{i, \bvelocity}^{(1)}$ has measure zero because it is an image of the $(\nParam - 1)$-dimensional manifold $[0, \infty) \times \problematicSet_i$ under the continuously differentiable map $(t, \bposition, \bmomentum) \to \solutionOp_{\bvelocity, -t}(\bposition, \bmomentum)$.

We now address the question of whether the dynamics continues to avoid the problematic sets after the first event.
Even if a trajectory starts away from the set $\zeroMeasureSet^{(1)}$, it could happen that $(\bposition(\tau), \bmomentum(\tau)) \in \zeroMeasureSet_{i, \bvelocity}^{(1)}$ at some time point $\tau$, in which case the trajectory could encounter $\problematicSet_i$ at the next velocity switch event.
Such a trajectory, having started outside $\zeroMeasureSet_{i, \bvelocity}^{(1)}$, can only enter $\zeroMeasureSet_{i, \bvelocity}^{(1)}$ at the moment of velocity switch 
and hence necessarily passes through the set
\begin{equation*}
\textstyle \bigcup_{j} \left[
	\zeroMeasureSet_{i, \bvelocity}^{(1)} \cap
	\{ (\bposition, \bmomentum) : \momentum_j = 0 \}
\right].
\end{equation*}
Consequently, the set $\zeroMeasureSet^{(2)} = \bigcup_{i, \, \bm{w} \, \in \, \{ \pm 1 \}^\nParam} \zeroMeasureSet_{i, \bm{w}}^{(2)}$ for
\begin{equation*}
\zeroMeasureSet_{i, \bm{w}}^{(2)}
	= \textstyle \bigcup_{t \, \geq \, 0} \solutionOp_{\bm{w}, -t}\left(
		\bigcup_{j, \bvelocity} \left[
			\zeroMeasureSet_{i, \bvelocity}^{(1)} \cap
			\{ (\bposition, \bmomentum) : \momentum_j = 0 \}
		\right]
	\right)
\end{equation*}
contains all the initial states from which the dynamics of \eqref{eq:hamilton_under_laplace_momentum} reaches $\problematicSet$ at the second velocity switch event.
We will now show that each $\zeroMeasureSet_{i, \bvelocity}^{(1)} \cap \{ (\bposition, \bmomentum) : \momentum_j = 0 \}$ is contained in a $(\nParam - 2)$-dimensional manifold.
Following the same reasoning as for $\zeroMeasureSet_{i, \bvelocity}^{(1)}$, we can then conclude that each $\zeroMeasureSet_{i, \bm{w}}^{(2)}$ has measure zero.

To this end, as we already know $\problematicSet_j$ is a $(\nParam - 2)$-dimensional manifold, it suffices to show that the intersection of $\zeroMeasureSet_{i, \bvelocity}^{(1)} \cap \{ \momentum_j = 0 \}$ with $\problematicSet_j^\complement$ belongs to a $(\nParam - 2)$-dimensional manifold.
This follows from the fact that the two manifolds $\zeroMeasureSet_{i, \bvelocity}^{(1)}$ and $\{ \momentum_j = 0 \}$ intersect transversally away from $\problematicSet_j$.
To show this, we study the tangent spaces at $(\bposition, \bmomentum) \in \zeroMeasureSet_{i, \bvelocity}^{(1)} \cap \{ \momentum_j = 0 \} \cap \problematicSet_j^\complement$.
The tangent space of $\{ \momentum_j = 0 \}$ spans all but the direction of the standard basis $\bm{e}_{\nParam + j}$.
On the other hand, by our construction \eqref{eq:problematic_set_under_backward_dynamics}, the tangent space of $\zeroMeasureSet_{i, \bvelocity}^{(1)}$ at $(\bposition, \bmomentum)$ includes the time derivative $\partial_t \solutionOp_{\bvelocity, - t}(\bposition, \bmomentum)$ evaluated at $t = 0$, whose $(\nParam + j)$-th component is given by
\begin{equation*}
\left[ \partial_t \solutionOp_{\bvelocity, -0}(\bposition, \bmomentum) \right]_{\nParam + j}
	= \big[
		(- \bvelocity, \nabla \potential(\bposition))
	\big]_{\nParam + j}
	= \partial_j \potential(\bposition)
	\neq 0,
\end{equation*}
where the last inequality holds because $(\bposition, \bmomentum) \not\in \problematicSet_j$.
This proves that $\zeroMeasureSet_{i, \bvelocity}^{(1)} \cap \{ \momentum_j = 0 \}$ is contained in a $(\nParam - 2)$-dimensional manifold and hence that each $\zeroMeasureSet_{i, \bm{w}}^{(2)}$ has measure zero.

We have thus far shown that, when starting from the initial states outside the sets of measure zero $\zeroMeasureSet^{(1)}$ and $\zeroMeasureSet^{(2)}$, the dynamics of \eqref{eq:hamilton_under_laplace_momentum} avoids the problematic set $\problematicSet$ at least until the second velocity switch event.
We can continue the construction in the same manner and, for $k \geq 2$, quantify a set $\zeroMeasureSet^{(k + 1)}$ of measure zero that include all the initial states from which the dynamics may encounter $\problematicSet$ at the $(k + 1)$-th velocity switch event.
We obtain the set in the theorem statement by taking the union $\zeroMeasureSet = \bigcup_i \zeroMeasureSet^{(i)}$.
\end{proof}

\begin{proof}[Proof of time-reversibility and symplecticity]
Since compositions of time-reversible and symplectic maps are again time-reversible and symplectic, by dividing up the interval $[0, t]$ into smaller pieces if necessary, we can without loss of generality assume that there is at most one velocity switch event in the interval $[0, t]$.

For time-reversibility, we need to show that Hamiltonian zigzag satisfies, in the language of Algorithm~\ref{alg:hamiltonian_zigzag_simulation},
\begin{equation*}
\begin{aligned}
(\bposition, - \bmomentum)
	&= \textproc{HamiltonianZigzag}(\bposition^*, - \bmomentum^*, t) \\
	&\hspace*{-1em}\text{ where }
		(\bposition^*, \bmomentum^*) = \textproc{HamiltonianZigzag}(\bposition, \bmomentum, t).
\end{aligned}
\end{equation*}
Verifying this is straightforward, so we focus on proving symplecticity.

We establish the symplecticity by directly calculating the derivatives of the solution $(\bposition_t, \bmomentum_t)$ with respect to the initial condition $(\bposition, \bmomentum) \not\in \zeroMeasureSet$.
In case there is no velocity switch event, we have
\begin{equation*}
\bposition_t = \bposition + t \bvelocity, \quad
\bmomentum_t = \bmomentum - \int_0^t \nabla \potential(\bposition + s \bvelocity) \diff s.
\end{equation*}
Differentiating $(\bposition_t, \bmomentum_t)$ with respect to $(\bposition, \bmomentum)$, we obtain
\begin{equation}
\label{eq:symplecticity_proof_derivatives_under_no_event}
\begin{aligned}
\positionPosition = \Id, \quad
\positionMomentum = \bm{0}, \quad
\momentumPosition = \int_{0}^{t} \nabla^2 \potential(\bposition + s \bvelocity) \diff s, \quad
\momentumMomentum = \Id,
\end{aligned}
\end{equation}
where $\nabla^2 \potential$ denotes the symmetric Hessian matrix.
The symplecticity condition \eqref{eq:symplecticity_definition_simplified} is easily verified from \eqref{eq:symplecticity_proof_derivatives_under_no_event}.

We now deal with the case in which there is one velocity switch event on $[0, t]$.
In this case, we have
\begin{equation}
\label{eq:symplectic_proof_solution_map}
\begin{aligned}
\bposition_t
	&= \bposition+ t^* \bvelocity + (t - t^*) \bvelocity^*
	\ \, \text{ for } \ \,
	\bvelocity = \sign(\bmomentum), \\
\bmomentum_t
	&= \bmomentum
	+ \int_{0}^{t^*} \nabla \potential(\bposition + s \bvelocity) \diff s
	+ \int_{t^*}^{t} \nabla \potential\big( \bposition + t^* \bvelocity + (s - t^*) \bvelocity^* \big) \diff s,
\end{aligned}
\end{equation}
where $t^*$ and $\bvelocity^*$ satisfy the following relations for some index $\spEventIndex$:
\begin{gather}
|\momentum_{\spEventIndex}|
	= \int_0^{t^*} \velocity_{\spEventIndex} \partial_{\spEventIndex} \potential( \bposition + s \bvelocity) \diff s,
	\label{eq:symplecticity_proof_event_time} \\
\velocity_{\spEventIndex}^* = - \velocity_{\spEventIndex}
	\ \, \text{ and } \ \,
	\velocity_j^* = \velocity_j \text{ for all $j \neq \spEventIndex$}.
	\nonumber
\end{gather}
To simplify the calculations to follow, we re-express \eqref{eq:symplectic_proof_solution_map} as
\begin{equation}
\label{eq:symplecticity_proof_solution_map_simplified}
\begin{aligned}
\bposition_s
	&= \bposition + s \bvelocity^* + 2 t^* \velocity_{\spEventIndex} \bm{e}_{\spEventIndex}
	\ \, \text{ for } \ s \in (t^*, t],
	 \\
\bmomentum_t
	&= \bmomentum
	+ \int_{0}^{t^*} \nabla \potential(\bposition + s \bvelocity) \diff s
	+ \int_{t^*}^{t} \nabla \potential\big( \bposition_s \big) \diff s.
\end{aligned}
\end{equation}
We differentiate \eqref{eq:symplecticity_proof_solution_map_simplified} using Lemma~\ref{lem:differentiability_of_event_time} below and the chain rule.
This yields
\begin{align}
\label{eq:symplecticity_proof_position_position_simplified}
\positionPosition
	&= \positionPosition[s]
	= \Id + 2 \velocity_{\spEventIndex} \bm{e}_{\spEventIndex} \frac{\partial t^*}{\partial \bposition}
	\ \ \text{ for } \ s \in [t^*, t],
	\\
\positionMomentum
	&= 2 \velocity_{\spEventIndex} \bm{e}_{\spEventIndex} \frac{\partial t^*}{\partial \bmomentum}
	= \frac{2 \velocity_{\spEventIndex}}{\partial_{\spEventIndex} \potential(\bposition^*)}
		\bm{e}_{\spEventIndex} \bm{e}_{\spEventIndex}^\transpose,
	\label{eq:symplecticity_proof_position_momentum_simplified} \\
\momentumPosition
	&= \nabla \potential(\bposition^*) \frac{\partial t^*}{\partial \bposition}
	- \nabla \potential(\bposition^*) \frac{\partial t^*}{\partial \bposition}
	+ \int_{0}^{t^*} \nabla^2 \potential(\bposition + s \bvelocity) \diff s
	+ \int_{t^*}^{t} \nabla^2 \potential (\bposition_s) \, \positionPosition[s] \diff s
	\nonumber \\
&= \int_{0}^{t^*} \nabla^2 \potential(\bposition + s \bvelocity) \diff s
	+ \left( \int_{t^*}^{t}
			\nabla^2 \potential(\bposition_s)
		\diff s \right)
		\positionPosition,
	\label{eq:symplecticity_proof_momentum_position_simplified} \\
\momentumMomentum
	&= \Id
	+ \nabla \potential(\bposition^*) \frac{\partial t^*}{\partial \bmomentum}
	- \nabla \potential(\bposition^*) \frac{\partial t^*}{\partial \bmomentum}
	+ \int_{t^*}^{t}
			\nabla^2 \potential\big( \bposition_s \big)
			\positionMomentum[s]
		\diff s
	\nonumber \\
	&= \Id
		+ \frac{2 \velocity_{\spEventIndex}}{\partial_{\spEventIndex} \potential(\bposition^*)}
			\left( \int_{t^*}^{t}
				\nabla^2 \potential\big( \bposition_s \big)
			\diff s \right)
			\bm{e}_{\spEventIndex} \bm{e}_{\spEventIndex}^\transpose.
		\label{eq:symplecticity_proof_momentum_momentum_simplified}
\end{align}

We now check the three conditions \eqref{eq:symplecticity_definition_simplified} for symplecticity using the relations \eqref{eq:symplecticity_proof_position_position_simplified} -- \eqref{eq:symplecticity_proof_momentum_momentum_simplified} and \eqref{eq:symplecticity_proof_event_time_derivative}.
The symmetry of $(\partial \bposition_t / \partial \bmomentum)^\transpose (\partial \bmomentum_t / \partial \bmomentum)$ is trivial.
The symmetry of $(\partial \bposition_t / \partial \bposition)^\transpose (\partial \bmomentum_t / \partial \bposition)$ follows
by noting that
\begin{align*}
\positionPosition^\transpose \momentumPosition
	&= \positionPosition^\transpose
		\left( \int_{0}^{t^*} \nabla^2 \potential(\bposition + s \bvelocity) \diff s \right)
		+ \positionPosition^\transpose
			\left( \int_{t^*}^{t}
				\nabla^2 \potential\big( \bposition_s \big)
			\diff s \right)
			\positionPosition \\
	&= \int_{0}^{t^*} \! \nabla^2 \potential(\bposition + s \bvelocity) \diff s
	+ 2 \velocity_{\spEventIndex} \partial_{\spEventIndex} \potential(\bposition^*)
		\frac{\partial t^*}{\partial \bposition}^\transpose \! \frac{\partial t^*}{\partial \bposition}
	+ \positionPosition^\transpose \! \!
		\left( \int_{t^*}^{t}
			\nabla^2 \potential\big( \bposition_s \big)
		\diff s \right)
		\positionPosition,
\end{align*}
where each term is clearly symmetric in the last expression.
Finally, we observe that
\begin{equation}
\label{eq:symplecticity_proof_intermediate_1}
\begin{aligned}
\positionPosition^\transpose \momentumMomentum
	&= \positionPosition^\transpose
	\left[
		\Id + \frac{2 \velocity_{\spEventIndex}}{\partial_{\spEventIndex} \potential(\bposition^*)}
					\left( \int_{t^*}^{t}
						\nabla^2 \potential(\bposition_s)
					\diff s \right)
					\bm{e}_{\spEventIndex} \bm{e}_{\spEventIndex}^\transpose
	\right] \\
	&= \Id
	+ 2 \velocity_{\spEventIndex} \frac{\partial t^*}{\partial \bposition}^\transpose \bm{e}_{\spEventIndex}^\transpose
	+ \frac{2 \velocity_{\spEventIndex}}{\partial_{\spEventIndex} \potential(\bposition^*)}
		\positionPosition^\transpose \left( \int_{t^*}^{t}
			\nabla^2 \potential(\bposition_s)
		\diff s \right)
		\bm{e}_{\spEventIndex} \bm{e}_{\spEventIndex}^\transpose
\end{aligned}
\end{equation}
and that
\begin{equation}
\label{eq:symplecticity_proof_intermediate_2}
\begin{aligned}
\momentumPosition^\transpose \positionMomentum
	&= \left[
		\int_{0}^{t^*} \nabla^2 \potential(\bposition + s \bvelocity) \diff s
		+ \left( \int_{t^*}^{t}
				\nabla^2 \potential\big( \bposition_s \big)
			\diff s \right)
			\positionPosition
	\right]^\transpose
	\frac{2 \velocity_{\spEventIndex}}{\partial_{\spEventIndex} \potential(\bposition^*)}
			\bm{e}_{\spEventIndex} \bm{e}_{\spEventIndex}^\transpose \\
	&= 2 \velocity_{\spEventIndex} \frac{\partial t^*}{\partial \bposition}^\transpose \bm{e}_{\spEventIndex}^\transpose
		+ \frac{2 \velocity_{\spEventIndex}}{\partial_{\spEventIndex} \potential(\bposition^*)}
			\positionPosition^\transpose \left( \int_{t^*}^{t}
				\nabla^2 \potential(\bposition_s)
			\diff s \right)
			\bm{e}_{\spEventIndex} \bm{e}_{\spEventIndex}^\transpose.
\end{aligned}
\end{equation}
Subtracting \eqref{eq:symplecticity_proof_intermediate_2} from \eqref{eq:symplecticity_proof_intermediate_1} yields the identity, completing the proof.
\end{proof}

\begin{lemma}
\label{lem:differentiability_of_event_time}
If $\partial_{\spEventIndex} \potential(\bposition + t^* \sign(\bmomentum)) \neq 0$, the velocity switch even time $t^*$ as in \eqref{eq:symplecticity_proof_event_time} is a continuously differentiable function of $\bposition$ and $\bmomentum$ with the derivatives
\begin{equation}
\label{eq:symplecticity_proof_event_time_derivative}
\frac{\partial t^*}{\partial \bposition}
	= - \frac{1}{\partial_{\spEventIndex} \potential(\bposition^*)}
		\bm{e}_{\spEventIndex}^\transpose
		\int_0^{t^*} \nabla^2 \potential( \bposition + s \bvelocity) \diff s, \quad
\frac{\partial t^*}{\partial \bmomentum}
	= \frac{1}{\partial_{\spEventIndex} \potential(\bposition^*)} \bm{e}_{\spEventIndex}^\transpose,
\end{equation}
where $\bposition^* = \bposition + t^* \bvelocity$, $\nabla^2 \potential$ is the Hessian, and the derivatives are expressed as row vectors.
\end{lemma}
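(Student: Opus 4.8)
The plan is to realize $t^*$ as an implicitly defined function of the initial state and invoke the implicit function theorem. Introduce
\begin{equation*}
G(\bposition,\bmomentum,t) := \int_0^{t} \velocity_{\spEventIndex}\,\partial_{\spEventIndex}\potential(\bposition + s\bvelocity)\,\diff s - |\momentum_{\spEventIndex}|, \qquad \bvelocity = \sign(\bmomentum),
\end{equation*}
so that the event time of \eqref{eq:symplecticity_proof_event_time} is exactly the root $t^*$ of $G(\bposition,\bmomentum,t^*)=0$. First I would record the regularity of $G$: since the lemma is applied at a state where $\bvelocity=\sign(\bmomentum)$ is defined, we have $\momentum_j\neq 0$ for every $j$, so on a neighborhood of $\bmomentum$ the map $\sign(\bmomentum)$ is constant and $|\momentum_{\spEventIndex}|$ is smooth; combined with $\potential\in C^2$ (the standing assumption of Theorem~\ref{thm:existence_and_propeties_of_hamiltonian_zigzag}), this makes $G$ continuously differentiable near $(\bposition,\bmomentum,t^*)$, differentiation under the integral sign being justified by continuity of $\nabla\partial_{\spEventIndex}\potential$ on the compact segment $\{\bposition+s\bvelocity:0\le s\le t^*\}$.

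Next I would compute the transversality derivative $\partial G/\partial t = \velocity_{\spEventIndex}\,\partial_{\spEventIndex}\potential(\bposition + t\bvelocity)$, which at $t=t^*$ equals $\velocity_{\spEventIndex}\,\partial_{\spEventIndex}\potential(\bposition^*)$ and is nonzero precisely by the hypothesis $\partial_{\spEventIndex}\potential(\bposition^*)\neq 0$. The implicit function theorem then yields a $C^1$ map $(\bposition,\bmomentum)\mapsto t^*(\bposition,\bmomentum)$ solving $G=0$ locally, which is the asserted continuous differentiability.

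To read off the formulas \eqref{eq:symplecticity_proof_event_time_derivative}, I would differentiate $G$ in its remaining arguments. Using that $\nabla\big(\partial_{\spEventIndex}\potential\big)=\bm{e}_{\spEventIndex}^\transpose\nabla^2\potential$ is the $\spEventIndex$-th row of the symmetric Hessian,
\begin{equation*}
\frac{\partial G}{\partial\bposition} = \velocity_{\spEventIndex}\,\bm{e}_{\spEventIndex}^\transpose\int_0^{t^*}\nabla^2\potential(\bposition + s\bvelocity)\,\diff s, \qquad \frac{\partial G}{\partial\bmomentum} = -\sign(\momentum_{\spEventIndex})\,\bm{e}_{\spEventIndex}^\transpose = -\velocity_{\spEventIndex}\,\bm{e}_{\spEventIndex}^\transpose,
\end{equation*}
and then the implicit differentiation identity $\partial t^*/\partial(\cdot) = -(\partial G/\partial t)^{-1}\,\partial G/\partial(\cdot)$, together with $\velocity_{\spEventIndex}^{-1}=\velocity_{\spEventIndex}$ (as $\velocity_{\spEventIndex}\in\{\pm 1\}$), reproduces exactly the stated row-vector expressions. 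Continuity of the right-hand sides in $(\bposition,\bmomentum)$, hence of the derivatives, follows from continuity of $t^*$, of $\nabla^2\potential$, and of $\partial_{\spEventIndex}\potential(\bposition^*)$, which stays nonzero near the base point.

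I do not expect a substantive obstacle here: the only delicate point is that $\sign$ and $|\cdot|$ are non-smooth, but this is harmless because the lemma is invoked away from $\{\momentum_j=0\}$, where both are locally $C^{\infty}$; everything else is a routine application of the implicit function theorem and differentiation under the integral sign.
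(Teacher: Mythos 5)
Your proof is correct and follows essentially the same route as the paper: you define the implicit relation (your $G$ is the negative of the paper's $f$), verify the transversality condition $\partial_t G \neq 0$ from the hypothesis $\partial_{\spEventIndex}\potential(\bposition^*)\neq 0$, invoke the implicit function theorem, and recover \eqref{eq:symplecticity_proof_event_time_derivative} by implicit differentiation. The added remarks on local smoothness of $\sign$ and $|\cdot|$ away from $\{\momentum_j=0\}$ and on differentiation under the integral sign are consistent with what the paper leaves implicit.
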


\begin{proof}
Consider a function $f: \mathbb{R}^{2\nParam + 1} \to \mathbb{R}$ defined as
\begin{equation*}
f(t, \bposition, \bmomentum)
	= | \momentum_{\spEventIndex} | - \int_0^t \sign(\momentum_{\spEventIndex}) \spaceBeforePartial \partial_{\spEventIndex} \potential( \bposition + s \, \sign(\bmomentum)) \diff s.
\end{equation*}
Its derivative with respect to $t$ is given by
\begin{equation*}
\partial_t f(t, \bposition, \bmomentum)
	= - \, \sign(\momentum_{\spEventIndex}) \spaceBeforePartial \partial_{\spEventIndex} \potential( \bposition + t \, \sign(\bmomentum)).
\end{equation*}
Therefore, if $\partial_{\spEventIndex} \potential(\bposition + t^* \sign(\bmomentum)) \neq 0$, we have $\partial_t f(t^*, \bposition, \bmomentum) \neq 0$ and, by the implicit function theorem \protect\citepSupp{spivak1965calculus_on_manifold}, $t^*$ as defined implicitly via \eqref{eq:symplecticity_proof_event_time} is a continuously differentiable function of $\bposition$ and $\bmomentum$.
We now implicitly differentiate \eqref{eq:symplecticity_proof_event_time} in $\bposition$ and $\bmomentum$, which gives us
\begin{equation*}
\begin{aligned}
\bm{0}^\transpose
	&= \velocity_{\spEventIndex} \partial_{\spEventIndex} \potential(\bposition + t^* \bvelocity) \frac{\partial t^*}{\partial \bposition}
	+ \int_0^{t^*} \velocity_{\spEventIndex} \frac{\partial}{\partial \bposition} \partial_{\spEventIndex} \potential( \bposition + s \bvelocity) \diff s, \\
\sign(\momentum_{\spEventIndex}) \bm{e}_{\spEventIndex}^\transpose
	&= \velocity_{\spEventIndex} \partial_{\spEventIndex} \potential(\bposition + t^* \bvelocity) \frac{\partial t^*}{\partial \bmomentum}.
\end{aligned}
\end{equation*}
Rearranging terms, we obtain
\begin{equation*}
\frac{\partial t^*}{\partial \bposition}
	= - \frac{1}{\partial_{\spEventIndex} \potential(\bposition + t^* \bvelocity)}
	\int_0^{t^*} \frac{\partial}{\partial \bposition} \partial_{\spEventIndex} \potential( \bposition + s \bvelocity) \diff s, \quad
\frac{\partial t^*}{\partial \bmomentum}
	= \frac{1}{\partial_{\spEventIndex} \potential(\bposition + t^* \bvelocity)} \bm{e}_{\spEventIndex}^\transpose.
	\qedhere
\end{equation*}
\end{proof}

\section{Hamiltonian zigzag on constrained domain}
\label{sec:hamiltonian_zigzag_on_constrained_domain} 
The theory of Hamiltonian dynamics based on Laplace momentum in Section~\ref{sec:hamiltonian_zigzag} can be extended to define Hamiltonian zigzag on a constrained domain.
Our presentation of this extension follows the discontinuous Hamiltonian dynamics framework of \protect\citetSupp{nishimura2020discontinuous_hmc}, which in particular extends the technique for handling parameter constraints under Gaussian momentum \protect\citepSupp{neal2010hmc} to that under more general momentum distributions.

Essentially, Hamiltonian zigzag on a constrained domain can be derived as a limit of unconstrained dynamics by approximating the boundary with a steep change in the potential energy.
To describe the idea more precisely, we denote by $\Upsilon$ the constrained domain with piecewise smooth boundary $\partial \Upsilon$ and by $\potential(\bposition)$ the potential energy defined on $\Upsilon$.
Also, for each $\bposition \in \partial \Upsilon$, we denote by $\bm{\nu}(\bposition)$ the unit vector orthogonal to the boundary and pointing outside the constraint.
Using the notation of \protect\citetSupp{nishimura2020discontinuous_hmc}, we now consider a sequence of smooth potential energies $\{ \potential_\delta \}_{\delta > 0}$ defined on the unconstrained space such that  $\potential_\delta \equiv \potential$ on $\Upsilon$ and $\potential_\delta(\bposition) \to \infty$ as $\delta \to 0$ for all $\bposition \in \mathbb{R}^\nParam \setminus \Upsilon$.
The sequence $\{ \potential_\delta \}_{\delta > 0}$ defines the corresponding sequence of smooth Hamiltonian dynamics whose limit in turn defines the constrained dynamics.

The approximating sequence has a property $\| \nabla \potential_\delta(\bposition) \| \to \infty$ and $\nabla \potential_\delta(\bposition) / \| \nabla \potential_\delta(\bposition) \| \linebreak \to \bm{\nu}(\bposition)$ as $\delta \to 0$ for each $\bposition \in \partial \Upsilon$.
The effect of the constraint can thus be thought of as the potential energy $\potential(\bposition)$ having a gradient of infinite size in the direction of $\bm{\nu}(\bposition)$ on each boundary point $\bposition \in \partial \Upsilon$.
This implies that, when encountering the boundary at $\bposition \in \partial \Upsilon$, the constrained Hamiltonian zigzag undergoes an instantaneous change in momentum 
$$\bmomentum^+ = \bmomentum^- - \gamma \bm{\nu} \ \text{ for } \, \gamma \geq 0$$ 
where $\bm{\nu} = \bm{\nu}(\bposition)$ and $\bmomentum^+$ and $\bmomentum^-$ denote the momentum at infinitesimal moments before and after the event.
And the scalar $\gamma$ is determined through the energy conservation condition 
$$\sum_i | \momentum_i^+ | 
	= \sum_i | \momentum_i^- - \gamma \nu_i | 
	= \sum_i | \momentum_i^- |,$$ 
which admits a unique solution by convexity of the map $\bmomentum \to \sum_i | \momentum_i |$ .

The scalar $\gamma$ is implicitly defined in the above construction, but can be explicitly characterized in terms of the velocity $\bvelocity$ at the moment of hitting the boundary and of the orthogonal vector $\bm{\nu}$.
For constraints of the type $\left\{ \bposition : \sign(\bposition) = \bobservation \in \{\pm 1\}^\nParam \right\}$ as seen in our examples of Section~\ref{sec:simulation} and as common in other applications, the constrained dynamics's behavior at boundary $\{ \position_i = 0 \}$ amounts to a sign flip $\momentum_i^+ = - \momentum_i^-$ in the $i$-th momentum component, causing a trajectory to ``bounce'' against the boundary.
In other words, given the boundary vector $\bm{\nu} = \velocity_i \bm{e}_i$ for $\velocity_i  = \sign(\momentum_i)$ and the $i$-th basis vector $\bm{e}_i$, the conservation of energy dictates $\gamma = - 2 \velocity_i \momentum_i^-$. 

Under a more general constraint, we can solve the energy conservation condition $\sum_i | \momentum_i^- - \gamma \nu_i | = \sum_i | \momentum_i^- |$ for $\gamma$ as follows.
Denoting $\bvelocity^- = \sign(\bmomentum^-)$ and $\bvelocity^+ = \sign(\bmomentum^+)$, we first observe that we necessarily have $\langle \bvelocity^-, \bm{\nu} \rangle > 0$ and $\langle \bvelocity^+, \bm{\nu} \rangle < 0$ for the trajectory to have been moving toward and away from the boundary before and after the bounce event. 
In fact, given the constrained dynamics's origin as the limit of smooth ones, we can imagine the potential energy pushing the trajectory in the direction $-\bm{\nu}$ until it causes enough change in the momentum for the velocity to point inward.
The amount of such ``push'' required to change the $i$-th velocity component is given as a solution of $\momentum_i^- - \gamma \nu_i = 0$:
\begin{equation*}
\gamma_i = 
	\left\{
	\begin{array}{ll}
	\momentum_i^- / \nu_i &\text{ if } \sign(\momentum_i^-) = \sign(\nu_i) \\
	\infty &\text{ otherwise,}
	\end{array}
	\right.
\end{equation*}
where we use a placeholder value $\gamma_i = \infty$ to indicate non-existence of a positive-valued solution. 
From these $\gamma_i$'s, we can then determine in which order the velocity components would change their signs.
We denote this order of the sign changes by $i^{(k)}$, defined as
\begin{equation*}
i^{(1)} = \argmin_{i} \,  \gamma_i
\ \text{ and } \
i^{(k + 1)} = \argmin_{  i \thinnerspace \notin \thinnerspace  \{  i^{(1)}, \thinnerspace \ldots, \thinnerspace i^{(k)} \} } \gamma_i
\ \text{ for } \, k \geq 1.
\end{equation*} 
Technically, the above formula does not uniquely define $i^{(k + 1)}$ once $\min_{  i \thinnerspace \notin \thinnerspace  \{  i^{(1)}, \thinnerspace \ldots, \thinnerspace i^{(k)} \} } \gamma_i = \infty$, but this indeterminacy in $i^{(k + 1)}$ has no bearing on the rest of our discussion.
We correspondingly denote $\gamma^{(k)} = \gamma_{i^{(k)}}$ and define a sequence of velocity vectors $\bvelocity^{(k)}$ as
\begin{equation*}
\velocity^{(k)}_i =
	\left\{
	\begin{array}{cl}
	- \velocity_i^- &\text{ if } i \in \left\{ i^{(1)}, \ldots, i^{(k)} \right\} \\
	\velocity_i^- &\text{ otherwise}.
	\end{array}
	\right.
\end{equation*}

Having thus defined $\gamma^{(k)}$ and $\bvelocity^{(k)}$, we are now ready to find the explicit formula for the solution of $\sum_i | \momentum_i^- - \gamma \nu_i | = \sum_i | \momentum_i^- |$.
To this end, we first observe that, for $\gamma \in \big( \gamma^{(k)}, \gamma^{(k + 1)} \big)$,
\begin{equation*}
\frac{\diff}{\diff \gamma} \sum_i | \momentum_i^- - \gamma \nu_i | 
	= \frac{\diff}{\diff \gamma} \sum_i \velocity^{(k)}_i (\momentum_i^- - \gamma \nu_i)
	= - \langle \bvelocity^{(k)}, \bm{\nu} \rangle.
\end{equation*}
Moreover, our construction guarantees $- \langle \bvelocity^{(k)}, \bm{\nu} \rangle < - \langle \bvelocity^{(k + 1)}, \bm{\nu} \rangle$.
Therefore, the solution of $\sum_i | \momentum_i^- - \gamma \nu_i | = \sum_i | \momentum_i^- |$ must exist in the interval $\big(\gamma^{(k^*)}, \gamma^{(k^* + 1)} \big)$ where 
\begin{equation}
\label{eq:index_of_last_velocity_component_to_change_sign}
k^* = \max\!\left\{ k : \sum_i | \momentum_i^- - \gamma^{(k)} \nu_i | < \sum_i | \momentum_i^- | \right\}.
\end{equation}
And, for $\gamma \in \big(\gamma^{(k^*)}, \gamma^{(k^* + 1)} \big)$, we have
\begin{equation*}
\begin{aligned}
\sum_i | \momentum_i^- - \gamma \nu_i |
	&= \sum_i \velocity^{(k^*)}_i (\momentum_i^- - \gamma \nu_i)
	&= \left\langle \bvelocity^{(k^*)}, \bmomentum^{-} \right\rangle - \gamma \left\langle \bvelocity^{(k^*)}, \bm{\nu} \right\rangle.
\end{aligned}
\end{equation*}
The solution of $\sum_i | p_i^- - \gamma \nu_i | = \sum_i | p_i^- |$, therefore, is given by
\begin{equation*}
\gamma
	= \frac{
		\sum_i | \momentum_i | -  \left\langle \bmomentum^{-}, \bvelocity^{(k^*)} \right\rangle
	}{
		- \left\langle \bvelocity^{(k^*)}, \bm{\nu} \right\rangle
	}
	= \frac{
			2 \left( \sum_{ \, i \, \in \, \{  i^{(1)}, \thinnerspace \ldots,\thinnerspace i^{(k^*)} \} } | \momentum_i | \right)
		}{
			- \left\langle \bvelocity^{(k^*)}, \bm{\nu} \right\rangle
		}.
\end{equation*}

The theorem below establishes that Hamiltonian zigzag on a constrained domain retains the key properties of the unconstrained dynamics.
We assume the domain under linear constraints for convenience, but believe the conclusion to hold under more general piecewise smooth constraints. 
\begin{theorem}
\label{thm:symplecticity_under_linear_constraints}
Under the assumption of Theorem~\ref{thm:existence_and_propeties_of_hamiltonian_zigzag}, consider the constrained version of Hamiltonian zigzag as constructed above on a domain
\begin{equation*}
\Upsilon = \left\{ 
	\bposition : 
		\langle \bposition, \bm{\nu}_\ell \rangle \leq c_\ell, \
		\| \bm{\nu}_\ell \| = 1
		\, \text{ for } \thinnerspace
		\ell = 1, \ldots, \nLinearConstraint
\right\}.
\end{equation*}
This constrained dynamics is well-defined in the sense of Theorem~\ref{thm:existence_and_propeties_of_hamiltonian_zigzag}, away from a set of Lebesgue measure zero.
Moreover, it is time-reversible and symplectic.
\end{theorem}

\begin{proof}
The description in Supplement Section~\ref{sec:hamiltonian_zigzag_on_constrained_domain} uniquely defines the dynamics' behavior at a boundary for any trajectory whose initial position $\bposition(0)$ is within the domain's interior $\Upsilon \setminus \partial \Upsilon$ and is away from the zero-measure set $\zeroMeasureSet$ as constructed in the proof of Theorem~\ref{thm:existence_and_propeties_of_hamiltonian_zigzag}.
The dynamics is thus well-defined.
Also, constructed as a pointwise limit of a sequence of time-reversible dynamics, constrained Hamiltonian zigzag is automatically time-reversible.

To establish symplecticity, recall first that a composition of symplectic maps is again symplectic.
Since we have already established symplecticity for Hamiltonian zigzag away from the constraints, we only need to establish these properties for the portion of the dynamics in which a trajectory goes through a boundary event.
Without loss of generality, assume that a trajectory starting from $(\bposition, \bmomentum)$ encounters a single boundary event in an interval $[0, t]$ and that the corresponding boundary is given by $\{ \bposition : \langle \bposition, \bm{\nu} \rangle = c \}$.
Let $\velocityFlipIndexSet = \{  i^{(1)}, \thinnerspace \ldots, \thinnerspace i^{(k^*)} \}$ indicate the velocity components that undergo sign changes during this boundary event, where the index $k^*$ is defined as in Equation~\eqref{eq:index_of_last_velocity_component_to_change_sign}.
Then the trajectory's state at time $t$ is given by
\begin{equation}
\label{eq:constrained_dynamics_symplecticity_proof_solution_map} 
\begin{aligned}
\bposition_t
	&= \bposition+ t^* \bvelocity + (t - t^*) \bvelocity^*, \\
\bmomentum_t
	&= \bmomentum
	- \int_{0}^{t^*} \nabla \potential(\bposition + s \bvelocity) \diff s
	- \gamma \bm{\nu} 
	- \int_{t^*}^{t} \nabla \potential\big( \bposition + t^* \bvelocity + (s - t^*) \bvelocity^* \big) \diff s,
\end{aligned}
\end{equation}
where $t^*$, $\bvelocity^*$, and $\gamma$ satisfy the following relations:
\begin{gather*}
t^* 
= \frac{c - \langle \bposition, \bm{\nu} \rangle}{\langle \bvelocity, \bm{\nu} \rangle};
\ \,
\velocity_{\spEventIndex}^* = - \velocity_{\spEventIndex}
	\, \text{ if } i \in \velocityFlipIndexSet
	\, \text{ and } \,
	\velocity_j^* = \velocity_j \text{ otherwise}; \\
\gamma 
	= \frac{
		2 \sum_{ \, i \, \in \, \velocityFlipIndexSet} | \momentum_i | 
	}{
		- \left\langle \bvelocity^*, \bm{\nu} \right\rangle
	}
	= \frac{
		\langle \bmomentum^*, \bvelocity - \bvelocity^* \rangle
	}{
		- \left\langle \bvelocity^*, \bm{\nu} \right\rangle
	} 
	\ \text{ with } \,
	\bmomentum^* 
		= \bmomentum - \int_{0}^{t^*} \nabla \potential(\bposition + s \bvelocity) \diff s.
\end{gather*}
In preparation for analyzing the derivatives of \eqref{eq:constrained_dynamics_symplecticity_proof_solution_map}, we note that
\begin{gather*}
\frac{\partial t^*}{\partial \bposition}
	= - \frac{1}{\langle \bvelocity, \bm{\nu} \rangle} \bm{\nu}^\transpose, \ \
\frac{\partial \gamma}{\partial \bposition}
	= - \frac{1}{\langle \bvelocity^*, \bm{\nu} \rangle}  (\bvelocity - \bvelocity^*)^\transpose 
		\frac{\partial \bmomentum^* }{\partial \bposition}, \ \
\frac{\partial \gamma}{\partial \bmomentum}
	= - \frac{1}{\langle \bvelocity^*, \bm{\nu} \rangle}  (\bvelocity - \bvelocity^*)^\transpose, \\
\frac{\partial \bmomentum^* }{\partial \bposition}
	= - \nabla \potential(\bposition^*) \frac{\partial t^*}{\partial \bposition} - \int_{0}^{t^*} \nabla^2 \potential(\bposition + s \bvelocity) \diff s 
	= \frac{1}{\langle \bvelocity, \bm{\nu} \rangle} \nabla \potential(\bposition^*) \bm{\nu}^\transpose - \int_{0}^{t^*} \nabla^2 \potential(\bposition + s \bvelocity) \diff s.
\end{gather*}

We now differentiate \eqref{eq:constrained_dynamics_symplecticity_proof_solution_map} using the chain rule and obtain, after following calculations similar to those in the proof of Theorem~\ref{thm:existence_and_propeties_of_hamiltonian_zigzag}, 
\begin{align}
\positionPosition
	&=  \Id - \frac{1}{\langle \bvelocity, \bm{\nu} \rangle} (\bvelocity - \bvelocity^*) \bm{\nu}^\transpose, 
	\label{eq:constrained_dynamics_symplecticity_proof_first_partial_deriv} \\
\positionMomentum
	&= \bm{0}, \\
\momentumPosition
	&= 
	\int_{0}^{t^*} \nabla^2 \potential(\bposition + s \bvelocity) \diff s
	- \frac{1}{\left\langle \bvelocity^*, \bm{\nu} \right\rangle} 
		\bm{\nu} (\bvelocity - \bvelocity^*)^\transpose \frac{\partial \bmomentum^*}{\partial \bposition}
	+ \left( \int_{t^*}^{t}
			\nabla^2 \potential(\bposition_s)
		\diff s \right)
		\positionPosition, \\
\momentumMomentum
	&= \Id + \frac{1}{\langle \bvelocity^*, \bm{\nu} \rangle} \bm{\nu} (\bvelocity - \bvelocity^*)^\transpose.
	\label{eq:constrained_dynamics_symplecticity_proof_last_partial_deriv} 
\end{align}
We check the three conditions \eqref{eq:symplecticity_definition_simplified} for symplecticity from Equation~\eqref{eq:constrained_dynamics_symplecticity_proof_first_partial_deriv} -- \eqref{eq:constrained_dynamics_symplecticity_proof_last_partial_deriv}.
We have $(\partial \bposition_t / \partial \bmomentum)^\transpose (\partial \bmomentum_t / \partial \bmomentum) = \bm{0}$, which is trivially symmetric.
It is also straightforward to check
\begin{equation*}
\positionPosition^\transpose \momentumMomentum - \momentumPosition^\transpose \positionMomentum 
	= \left[ \Id - \frac{1}{\langle \bvelocity, \bm{\nu} \rangle} (\bvelocity - \bvelocity^*) \bm{\nu}^\transpose \right]
		\left[ \Id + \frac{1}{\langle \bvelocity^*, \bm{\nu} \rangle} \bm{\nu} (\bvelocity - \bvelocity^*)^\transpose \right] \\
	= \Id.
\end{equation*}
It remains to show that $(\partial \bposition_t / \partial \bposition)^\transpose (\partial \bmomentum_t / \partial \bposition)$ is symmetric. 
We start by observing that
\begin{align*}
&\positionPosition^\intercal \momentumPosition 
	- 
	\positionPosition^\intercal 
	\left( \int_{t^*}^{t}
		\nabla^2 \potential(\bposition_s)
	\diff s \right)
	\positionPosition \\
&\hspace*{1.5em}=
	\left( \Id - \frac{1}{\langle \bvelocity, \bm{\nu} \rangle} \bm{\nu} (\bvelocity - \bvelocity^*)^\transpose \right)
	\left(
		\int_{0}^{t^*} \nabla^2 \potential(\bposition + s \bvelocity) \diff s
		- \frac{1}{\left\langle \bvelocity^*, \bm{\nu} \right\rangle} 
			\bm{\nu} (\bvelocity - \bvelocity^*)^\transpose \frac{\partial \bmomentum^*}{\partial \bposition}
	\right),
	\yesnumber
	\label{eq:constrained_dynamics_symmetry_proof_intermediate_expression_2}
\end{align*}
where we have subtracted a symmetric matrix from $(\partial \bposition_t / \partial \bposition)^\transpose (\partial \bmomentum_t / \partial \bposition)$ on the left hand side.
Expanding \eqref{eq:constrained_dynamics_symmetry_proof_intermediate_expression_2}, we have
\begin{equation}
\label{eq:constrained_dynamics_symmetry_proof_intermediate_expression_1}
\begin{aligned}
&\positionPosition^\intercal \momentumPosition 
	- 
	\positionPosition^\intercal 
	\left( \int_{t^*}^{t}
		\nabla^2 \potential(\bposition_s)
	\diff s \right)
	\positionPosition \\
&\hspace*{1.5em}=
	\int_{0}^{t^*} \nabla^2 \potential(\bposition + s \bvelocity) \diff s 
	- \frac{1}{\langle \bvelocity, \bm{\nu} \rangle} \bm{\nu} (\bvelocity - \bvelocity^*)^\transpose \int_{0}^{t^*} \nabla^2 \potential(\bposition + s \bvelocity) \diff s \\
&\hspace*{3.5em}
	- \frac{1}{\left\langle \bvelocity^*, \bm{\nu} \right\rangle} 
		\bm{\nu} (\bvelocity - \bvelocity^*)^\transpose \frac{\partial \bmomentum^*}{\partial \bposition}
	+ \frac{1}{\langle \bvelocity, \bm{\nu} \rangle \left\langle \bvelocity^*, \bm{\nu} \right\rangle} \bm{\nu} (\bvelocity - \bvelocity^*)^\transpose
		\bm{\nu} (\bvelocity - \bvelocity^*)^\transpose \frac{\partial \bmomentum^*}{\partial \bposition}.
\end{aligned}
\end{equation}
We note that the last term in the above equation can be simplified as
\begin{align*}
\frac{1}{\langle \bvelocity, \bm{\nu} \rangle \left\langle \bvelocity^*, \bm{\nu} \right\rangle} 
	\bm{\nu} (\bvelocity - \bvelocity^*)^\transpose
	\bm{\nu} (\bvelocity - \bvelocity^*)^\transpose \frac{\partial \bmomentum^*}{\partial \bposition}
	&= \frac{%
			\langle \bvelocity - \bvelocity^*, \bm{\nu} \rangle
		}{%
			\langle \bvelocity, \bm{\nu} \rangle \left\langle \bvelocity^*, \bm{\nu} \right\rangle
		} 
		\bm{\nu} (\bvelocity - \bvelocity^*)^\transpose 
		\frac{\partial \bmomentum^*}{\partial \bposition} \\
	&= \left(
			\frac{1}{\langle \bvelocity^*, \bm{\nu} \rangle} - \frac{1}{\langle \bvelocity, \bm{\nu} \rangle}
		\right)
		\bm{\nu} (\bvelocity - \bvelocity^*)^\transpose 
		\frac{\partial \bmomentum^*}{\partial \bposition}.
\end{align*}
Plugging in the simplified expression back into Equation~\eqref{eq:constrained_dynamics_symmetry_proof_intermediate_expression_1}, we obtain
\begin{align*}
&\positionPosition^\intercal \momentumPosition 
	- 
	\positionPosition^\intercal 
	\left( \int_{t^*}^{t}
		\nabla^2 \potential(\bposition_s)
	\diff s \right)
	\positionPosition 
	- \int_{0}^{t^*} \nabla^2 \potential(\bposition + s \bvelocity) \diff s \\
&\hspace*{1.5em}=
	- \frac{1}{\langle \bvelocity, \bm{\nu} \rangle} \bm{\nu} (\bvelocity - \bvelocity^*)^\transpose \int_{0}^{t^*} \nabla^2 \potential(\bposition + s \bvelocity) \diff s
	- \frac{1}{\left\langle \bvelocity, \bm{\nu} \right\rangle} 
		\bm{\nu} (\bvelocity - \bvelocity^*)^\transpose \frac{\partial \bmomentum^*}{\partial \bposition} \\
&\hspace*{1.5em}=
	- \frac{1}{\langle \bvelocity, \bm{\nu} \rangle} \bm{\nu} (\bvelocity - \bvelocity^*)^\transpose \int_{0}^{t^*} \nabla^2 \potential(\bposition + s \bvelocity) \diff s \\
&\hspace*{4em}
	- \frac{1}{\left\langle \bvelocity, \bm{\nu} \right\rangle} 
		\bm{\nu} (\bvelocity - \bvelocity^*)^\transpose 
		\left(
			\frac{1}{\langle \bvelocity^*, \bm{\nu} \rangle} \nabla \potential(\bposition^*) \bm{\nu}^\transpose
			- \int_{0}^{t^*} \nabla^2 \potential(\bposition + s \bvelocity) \diff s
		\right)\\
&\hspace*{1.5em}=
	\frac{%
			\left\langle \bvelocity - \bvelocity^*, \nabla \potential(\bposition^*) \right\rangle
		}{%
			\left\langle \bvelocity, \bm{\nu} \right\rangle \left\langle \bvelocity^*, \bm{\nu} \right\rangle
		}%
		\bm{\nu} \bm{\nu}^\transpose.
\end{align*}
The last equality makes it clear that $(\partial \bposition_t / \partial \bposition)^\transpose (\partial \bmomentum_t / \partial \bposition)$ can be expressed as the sum of the three symmetric matrices and is thus symmetric itself.
This completes the proof of symplecticity and of the theorem.
\end{proof}

\section{Proof of Theorem~\ref{thm:strong_convergence_of_hamiltonian_zigzag_to_markovian}}
\label{sec:proof_of_strong_convergence_of_hamiltonian_zigzag_to_markovian}

In the proof of Theorem~\ref{thm:strong_convergence_of_hamiltonian_zigzag_to_markovian} below, we couple the two zigzags by using the same random variable $\uniformRv_{n, i} \sim \unifDist(0, 1)$ \nolinebreak for the momentum refreshment $| \momentum_i^\superH(n\dt) | = - \log(\uniformRv_{n, i})$ and simulation of the velocity switch event time \eqref{eq:markovian_zigzag_event_time} for $(\bposition_{\dt}^\superM, \bvelocity_{\dt}^\superM)$ on $[n \dt, (n + 1) \dt]$.
The critical property of our coupling is visually illustrated in Figure~\ref{fig:coupled_zigzag_trajectories}.
Our construction guarantees that, given $(\bposition_{\dt}^\superH, \bvelocity_{\dt}^\superH)(n \dt) = (\bposition_{\dt}^\superM, \bvelocity_{\dt}^\superM)(n \dt)$, the probability of the two zigzag trajectories diverging on the next time interval $[n \dt, (n + 1) \dt]$ is at most $O(\dt^2)$.
Therefore, the total probability of them diverging during any of the the $\lceil \integrationTime / \dt \rceil$ intervals is $O(\dt)$.
In other words, with probability $1 - O(\dt)$, the coupled zigzags $(\bposition_{\dt}^\superH, \bmomentum_{\dt}^\superH)$ and $(\bposition_{\dt}^\superM, \bvelocity_{\dt}^\superM)$ stay together during the entire $[0, T]$ interval and hence have the Skorokhod distance of zero.

\begin{figure}
\begin{minipage}{.35\linewidth}
\includegraphics[page=1, height=.2\textheight]{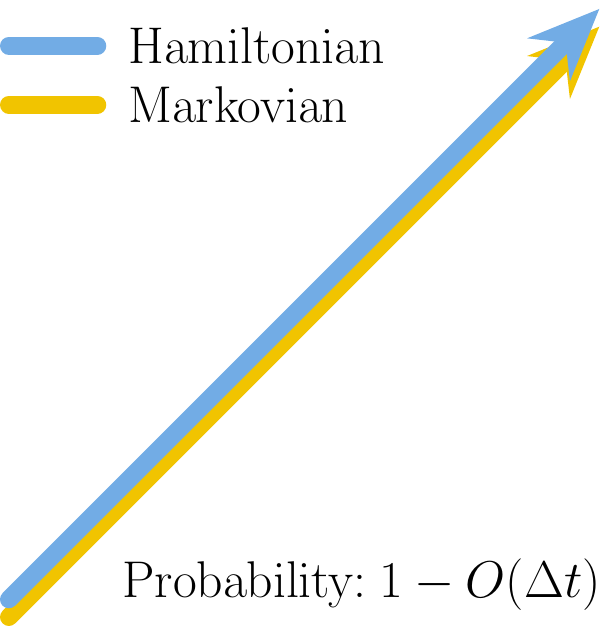}
\vspace*{\baselineskip}

\includegraphics[page=4, height=.2\textheight]{Figure/coupled_zigzag_paths}
\end{minipage}
~
\begin{minipage}{.2\linewidth}
\includegraphics[page=2, height=.2\textheight]{Figure/coupled_zigzag_paths}
\vspace*{\baselineskip}

\includegraphics[page=3, height=.2\textheight]{Figure/coupled_zigzag_paths}
\end{minipage}
\begin{minipage}{.38\linewidth}
\caption{%
	Potential trajectories of the coupled Hamiltonian (blue) and Markovian (yellow) zigzag on an interval $[n \dt, (n + 1) \dt]$ conditional on them being at the same state at time $n \dt$.
	With probability \mbox{$1 - O(\dt)$} or $O(\dt)$, the two zigzags trace the same exact path with zero or one velocity switch event (top left and top right).
	The paths diverge with probability at most $O(\dt^2)$ (bottom left and bottom right).
}
\label{fig:coupled_zigzag_trajectories}
\end{minipage}
\end{figure}

\begin{proof}
Let $\{ \buniformRv_n = (\uniformRv_{n, 1}, \ldots, \uniformRv_{n, \nParam}) \}$ and $\{ \baltUniformRv_t = (\altUniformRv_{t, 1}, \ldots, \altUniformRv_{t, \nParam}) \}$ for $n \in \mathbb{Z}^+ \cup \{0\}$ and $t \in (0, \infty)$ be a collection of independent $\unifDist(0, 1)$ random variables.
We will construct Hamiltonian zigzag out of $\buniformRv_n$ and additionally require $\baltUniformRv_t$ for Markovian zigzag.
We begin with the construction of Hamiltonian zigzag with momentum magnitude refreshment.
Given the initial state
$\bposition_{\dt}^\superH(0) = \bposition(0)$ and $\bvelocity_{\dt}^\superH(0) = \bvelocity(0)$, set $|\bmomentum_{\dt}^\superH(0)| = - \log(\buniformRv_0)$ and $\bmomentum_{\dt}^\superH(0) = |\bmomentum_{\dt}^\superH(0)| \bvelocity_{\dt}^\superH(0)$.
For $t \in [0, \dt]$, let the process evolve according to the differential equation \eqref{eq:hamilton_under_laplace_momentum}.
At time $t = \dt$, resample the momentum magnitude by setting $|\bmomentum^\superH(\dt)| \gets - \log(\buniformRv_1)$ and $\bmomentum^\superH(\dt) \gets |\bmomentum^\superH(\dt)| \bvelocity_{\dt}^\superH(\dt)$, after which we let the process evolve again for $[\dt, 2 \dt]$ according to \eqref{eq:hamilton_under_laplace_momentum}.
We define the rest of dynamics inductively on the intervals $[n \dt, (n + 1) \dt]$ until $(n + 1) \dt > T$, assigning
\begin{equation*}
|\bmomentum_{\dt}^\superH(n \dt)| = - \log(\buniformRv_{n}).
\end{equation*}

We now turn to the construction of Markovian zigzag, first focusing on the interval $[0, \dt]$.
Given the initial state $\bposition_{\dt}^\superM(0) = \bposition(0)$ and $\bvelocity_{\dt}^\superM(0) = \bvelocity(0)$, the process travels along a linear path with constant velocity
\begin{equation*}
\bposition_{\dt}^\superM(t) = \bposition(0) + t \bvelocity(0), \
\bvelocity_{\dt}^\superM(t) = \bvelocity(0)
\end{equation*}
until time $\dt$ or the first event time $\tau_0^{\lowerscript{(1)}} = \min_i \interEventTime_{0, i}^{\lowerscript{(1)}}$ where
\begin{equation*}
\interEventTime_{0, i}^{\lowerscript{(1)}} = \inf_{t > 0} \left\{
	|\momentum_{\dt, i}^\superH(0)|
		= - \log \uniformRv_{0, i}
		\leq \int_0^t \left[
		\velocity_i(0) \spaceBeforePartial \partial_i \potential( \bposition(0) + s \bvelocity(0) )
	\right]^+ \diff s
\right\}.
\end{equation*}
If $\tau_0^{\lowerscript{(1)}} \leq \dt$, the process undergoes an instantaneous velocity change at $t = \tau_0^{\lowerscript{(1)}}$:
\begin{equation}
\label{eq:velocity_change}
\velocity_{\dt, i^*}^\superM(\tau_0^{\lowerscript{(1)}})
	= - \velocity_{i^*}(0)
	\ \text{ for } \
	i^* = \argmin_i \interEventTime_{0, i}^{\lowerscript{(1)}}.
\end{equation}
The process then continues for $t \geq \tau_0^{\lowerscript{(1)}}$ along a new linear path
\begin{equation}
\label{eq:markovian_zigzag_after_first_event}
\bposition_{\dt}^\superM(t)
	= \bposition_{\dt}^\superM \big( \tau_{0}^{\lowerscript{(1)}} \big)
	+ \big( t - \tau_{0}^{\lowerscript{(1)}} \big) \bvelocity_{\dt}^\superM \big( \tau_{0}^{\lowerscript{(1)}} \big), \
\bvelocity_{\dt}^\superM(t) = \bvelocity_{\dt}^\superM\big( \tau_{0}^{\lowerscript{(1)}} \big)
\end{equation}
until time $\dt$ or the next event time $\tau_0^{\lowerscript{(2)}} = \tau_0^{\lowerscript{(1)}} + \min_i \interEventTime_{0, i}^{\lowerscript{(2)}}$ where
\begin{equation}
\label{eq:markovian_event_time_after_first}
\interEventTime_{0, i}^{\lowerscript{(2)}} = \inf_{t > 0} \left\{
	- \log \altUniformRv_{\raisebox{2pt}{\scriptsize $\tau_{0}^{(1)}$}\!, i} \leq \int_0^t \left[
		\velocity_{\dt, i}^\superM \big( \tau_{0}^{\lowerscript{(1)}} \big)  \spaceBeforePartial \partial_i \potential \big(
			\bposition_{\dt}^\superM \big( \tau_{0}^{\lowerscript{(1)}} \big) + s \bvelocity_{\dt}^\superM \big( \tau_{0}^{\lowerscript{(1)}} \big)
		\big)
	\right]^+ \diff s
\right\},
\end{equation}
undergoing instantaneous velocity change 
at $t = \tau_0^{\lowerscript{(2)}}$ if $\tau_0^{\lowerscript{(2)}} \leq \dt$:
\begin{equation*}
\velocity_{\dt, i^*}^\superM \big( \tau_{0}^{\lowerscript{(2)}} \big)
	= - \velocity_{i^*} \big( \tau_{0}^{\lowerscript{(1)}} \big)
	\ \text{ for } \
	i^* = \argmin_i \interEventTime_{0, i}^{\lowerscript{(2)}}.
\end{equation*}
Continuing this construction until $\tau_0^{(k)} > \dt$ defines the Markovian zigzag process on $[0, \dt]$ with the given initial condition \protect\citepSupp{davis1993markov_models, bierkens2019zigzag_original}.

Our construction of Markovian zigzag on $[0, \dt]$ so far follows the standard algorithm for simulating piecewise deterministic Markov processes, except for the fact that we couple the Markovian zigzag to its Hamiltonian counter-part via the random variable $\buniformRv_0 = \exp(- |\bmomentum_\dt^\superH(0)|)$.
In case $\tau_0^{\lowerscript{(1)}} \leq \dt$, however, the evolution of $(\bposition_{\dt}^\superM(t), \bvelocity_{\dt}^\superM(t))$ after $t = \tau_0^{\lowerscript{(1)}}$ would have little relation to that of $(\bposition_{\dt}^\superH(t), \bvelocity_{\dt}^\superH(t))$.
To ensure that the two processes remain closely interrelated, we ``re-couple'' them at every $\dt$ interval through the shared random variable $\buniformRv_{n}$.
More precisely, for each $n \in \mathbb{Z}^+$, we define the first potential event time on $[n \dt, (n + 1) \dt]$ as $\tau_n^{\lowerscript{(1)}} = n \dt + \min_i \interEventTime_{n, i}^{\lowerscript{(1)}}$ where
\begin{equation*}
\begin{aligned}
\interEventTime_{n, i}^{\lowerscript{(1)}} = \inf_{t > 0}
	\Big\{
		|\momentum_{\dt, i}^\superH(n \dt)|
		&= - \log \uniformRv_{n, i} \\
		&\leq \int_0^t \left[
			\velocity_{\dt, i}^\superM(n \dt)
			\spaceBeforePartial \partial_i \potential \big(
				\bposition_{\dt}^\superM(n \dt) + s \bvelocity_{\dt}^\superM(n \dt)
			\big)
		\right]^+ \diff s
	\Big\}.
\end{aligned}
\end{equation*}
If $\interEventTime_{n, i}^{\lowerscript{(1)}} < \dt$, the process continues for the rest of $t \in [n \dt, (n + 1) \dt]$ in the same manner as it did on $[0, \dt]$, according to the appropriate analogues of \eqref{eq:velocity_change}, \eqref{eq:markovian_zigzag_after_first_event}, and \eqref{eq:markovian_event_time_after_first}.
The process is thus defined inductively until $(n + 1) \dt > \integrationTime$.
The described procedure defines the same transition kernel on the intervals $[n \dt, (n + 1) \dt]$ as on $[0, \dt]$.
By the Markov property and Chapman-Kolmogorov equation \protect\citepSupp{davis1993markov_models}, therefore, the process $(\bposition_{\dt}^\superM, \bvelocity_{\dt}^\superM)$ defines Markovian zigzag on $[0, \integrationTime]$.

We now show that the coupled Hamiltonian and Markovian zigzag process follow the exact same path with high probability.
Observe that
{\renewcommand{\arraystretch}{1.25}
\begin{align}
&\probability \! \left\{
		(\bposition_{\dt}^\superH, \bvelocity_{\dt}^\superH) \equiv (\bposition_{\dt}^\superM, \bvelocity_{\dt}^\superM)
		\ \text{ on } \
		\big[ 0, \lceil \integrationTime / \dt \rceil \dt \big]
	\right\} \nonumber \\
&\hspace{.75em}
	= \prod_{n = 0}^{\lceil \integrationTime / \dt \rceil} \probability \!\left\{
			\begin{array}{r}
			(\bposition_{\dt}^\superH, \bvelocity_{\dt}^\superH) \equiv (\bposition_{\dt}^\superM, \bvelocity_{\dt}^\superM) \hspace{1em} \\
			\text{ on } \ \left[ n \dt, (n + 1) \dt \right]
			\end{array}
			\ \bigg| \
			(\bposition_{\dt}^\superH, \bvelocity_{\dt}^\superH) \equiv (\bposition_{\dt}^\superM, \bvelocity_{\dt}^\superM)
			\ \text{ on } \ \left[ 0, n \dt \right]
		\right\} \nonumber \\
&\hspace{.75em}
	= \prod_{n = 0}^{\lceil \integrationTime / \dt \rceil} \probability \!\left\{
			\begin{array}{r}
			(\bposition_{\dt}^\superH, \bvelocity_{\dt}^\superH) \equiv (\bposition_{\dt}^\superM, \bvelocity_{\dt}^\superM) \hspace{1em} \\
			\text{ on } \ \left[ n \dt, (n + 1) \dt \right]
			\end{array}
			\ \bigg| \
			(\bposition_{\dt}^\superH, \bvelocity_{\dt}^\superH)(n \dt) = (\bposition_{\dt}^\superM, \bvelocity_{\dt}^\superM)(n \dt)
		\right\}, \label{eq:prob_of_exact_coupling_as_product}
\end{align}
}%
where the second equality follows from the conditional independence of the processes on $[0, n \dt]$ and $[n \dt, \infty)$ by construction.
Applying Lemma~\ref{LEM:COUPLING_ON_EACH_INTERVAL} to each term in \eqref{eq:prob_of_exact_coupling_as_product}, we have
\begin{equation}
\label{eq:exact_coupling_prob_lower_bd}
\begin{aligned}
\probability \! \left\{
		(\bposition_{\dt}^\superH, \bvelocity_{\dt}^\superH) \equiv (\bposition_{\dt}^\superM, \bvelocity_{\dt}^\superM)
		\ \text{ on } \
		\big[ 0, \lceil \integrationTime / \dt \rceil \dt \big]
	\right\}
&\geq \left(
		1 - C_\integrationTime(\bposition(0)) \dt^2
	\right)^{\lceil \integrationTime / \dt \rceil} \\
&\geq
	1 - C_\integrationTime(\bposition(0)) (\integrationTime + \dt) \dt, 
\end{aligned}
\end{equation}
where we can take
$
C_\integrationTime(\bposition)
	= \textstyle \frac{1}{2} \gradBound_\integrationTime(\bposition)^2 \nParam^2
		+ 2 \hessianBound_\integrationTime(\bposition) \nParam^2
$
in terms of $\gradBound_{\integrationTime}(\bposition)$ and $\hessianBound_\integrationTime(\bposition)$ as in \eqref{eq:grad_and_hessian_bound_definition}
since $\bposition_{\dt}^\superH(t)$ and $\bposition_{\dt}^\superM(t)$ stay within the $L^\infty$-ball $\ball_\integrationTime$ for all $t \in [0, \integrationTime]$.
We conclude from \eqref{eq:exact_coupling_prob_lower_bd} that, as $\dt \to 0$,
\begin{equation*}
\probability\!\left\{
		\rho_\integrationTime\!\left[
			(\bposition_{\dt}^\superH, \bvelocity_{\dt}^\superH),
			(\bposition_{\dt}^\superM, \bvelocity_{\dt}^\superM)
		\right] = 0
	\right\}
	=
\probability\!\left\{
		(\bposition_{\dt}^\superH, \bvelocity_{\dt}^\superH) \equiv (\bposition_{\dt}^\superM, \bvelocity_{\dt}^\superM)
		\ \text{ on } \
		[ 0, \integrationTime ]
	\right\} \to 1. \qedhere
\end{equation*}
\end{proof}

Lemma~\ref{LEM:COUPLING_ON_EACH_INTERVAL} below is used in our proof of Theorem~\ref{thm:convergence_of_hamiltonian_zigzag_to_markovian}:
\begin{lemma}
\label{LEM:COUPLING_ON_EACH_INTERVAL}
For fixed $\bposition_n$ and $\bvelocity_n$, we have
{\renewcommand{\arraystretch}{1.25}
\begin{equation*}
\begin{aligned}
\probability \!\left\{
		\begin{array}{r}
		(\bposition_{\dt}^\superH, \bvelocity_{\dt}^\superH) \equiv (\bposition_{\dt}^\superM, \bvelocity_{\dt}^\superM) \hspace{2em} \\
		\text{ on } \ \left[ n \dt, (n + 1) \dt \right]
		\end{array}
		\ \Bigg| \
		(\bposition_{\dt}^\superH, \bvelocity_{\dt}^\superH)(n \dt)
			= (\bposition_{\dt}^\superM, \bvelocity_{\dt}^\superM)(n \dt)
			= (\bposition_n, \bvelocity_n)
	\right\} \\
	\geq 1 - \left(
			\textstyle \frac{1}{2} \gradBound_\dt(\bposition_n)^2 \nParam^2
				+ 2 \hessianBound_\dt(\bposition_n)\thinnerspace \nParam^2
	\right) \dt^2,
\end{aligned}
\end{equation*}
}%
where
\begin{equation}
\label{eq:grad_and_hessian_bound_definition}
\begin{alignedat}{3}
\gradBound_\dt(\bposition)
	&= \max\big\{ \hspace{1.5ex} | \partial_i \potential(\bposition') | &&: 1 \leq i \leq \nParam, \, \bposition' \in \ball_\dt(\bposition) \big\} \\
\hessianBound_\dt(\bposition)
	&= \max\big\{ \| \nabla \partial_i \potential(\bposition') \|_2 &&: 1 \leq i \leq \nParam, \, \bposition' \in \ball_\dt(\bposition) \big\}
\end{alignedat}
\end{equation}
with $\| \cdot \|_2$ denoting the $\ell^2$-norm and $\ball_r(\bposition) = \{ \bposition' : \| \bposition' - \bposition \|_\infty < r \}$ an $L^\infty$-ball. 
\end{lemma}

The intuition behind our proof below of Lemma~\ref{LEM:COUPLING_ON_EACH_INTERVAL} is as follows.
On the interval $[n \dt, (n + 1) \dt]$, two or more velocity switch events occur only with the negligible probability of $O(\dt^2)$, allowing us to focus on the case of zero or one event.
When starting from the same position and velocity, our coupling of $(\bposition_{\dt}^\superH, \bvelocity_{\dt}^\superH)$ and $(\bposition_{\dt}^\superM, \bvelocity_{\dt}^\superM)$ implies that, if Markovian zigzag does not experience any event on $[n \dt, (n + 1) \dt]$, neither does Hamiltonian zigzag.
Otherwise, when Markovian zigzag undergoes its first event, with high probability Hamiltonian zigzag experiences the same velocity switch at the exact same moment.

In bounding the probabilities of various events in the proof, we use the fact that zigzag velocities only take values in $\{ \pm 1\}^\nParam$ and hence, regardless of their actual trajectories, $\bposition_{\dt}^\superH(t)$ and $\bposition_{\dt}^\superM(t)$ stay within the $L^\infty$-ball $\ball_\dt$ around the initial position.

\begin{proof}[Proof of Lemma~\ref{LEM:COUPLING_ON_EACH_INTERVAL}]
We assume without loss of generality $n = 0$ as the proof is essentially identical for all $n$.
Throughout the proof, all the discussed events are conditioned on $(\bposition_{\dt}^\superH, \bvelocity_{\dt}^\superH)(0) = (\bposition_{\dt}^\superM, \bvelocity_{\dt}^\superM)(0) = (\bposition_0, \bvelocity_0)$, but we do not explicitly denote the conditioning to avoid notational clutter.
For $i = 1, \ldots, \nParam$, define indicator variables
\begin{equation*}
\eventIndicator_i
	= \mathds{1}\left\{
		|\momentum_{\dt, i}^\superH(0)|
			= - \log \uniformRv_{0, i}
			\leq \int_0^\dt \big[
			\velocity_{0, i} \spaceBeforePartial \partial_i \potential( \bposition_0 + s \bvelocity_0 )
		\big]^+ \diff s
	\right\},
\end{equation*}
where $I_i = 1$ indicates the possibility of the velocity sign change along the $i$-th coordinate during the interval $[0, \dt]$.
We partition the probability space into three types of events based on how many $\eventIndicator_i$'s equal $1$:
\begin{equation}
\label{eq:partition_via_number_of_events}
\big\{ \eventIndicator_i = 0 \, \text{ for all $i$} \big\}
	\hspace{1ex} \text{ or } \hspace{1ex}
	{\textstyle \bigcup\limits_{i = 1}^\nParam} \big\{ \eventIndicator_i = 1, \,  \eventIndicator_j = 0 \text{ for } j \neq i \big\}
	\hspace{1ex} \text{ or } \hspace{1ex}
	\mathsmaller{\bigcup}\limits_{1 \leq i < j \leq \nParam} \big\{ \eventIndicator_i = \eventIndicator_j = 1 \big\}.
\end{equation}
Below we analyze implications of each event type on the trajectory of $(\bposition_{\dt}^\superH, \bvelocity_{\dt}^\superH)$ and $(\bposition_{\dt}^\superM, \bvelocity_{\dt}^\superM)$ on $[0, \dt]$.
We first establish that
\begin{equation}
\label{eq:no_event_implication}
\big\{ \eventIndicator_i = 0 \, \text{ for all $i$} \big\}
	\subset \left\{
			(\bposition_{\dt}^\superH, \bvelocity_{\dt}^\superH) \equiv (\bposition_{\dt}^\superM, \bvelocity_{\dt}^\superM)
			\ \text{ on } \ \left[ 0, \dt \right]
		\right\},
\end{equation}
and that
\begin{equation}
\label{eq:more_than_two_events_implication}
\probability \big\{ \eventIndicator_i = \eventIndicator_j = 1 \big\}
	\leq \gradBound_\dt(\bposition_0)^2 \dt^2
	\ \text{ for } \,
	i < j.
\end{equation}
We then show that, if $| \partial_i \potential(\bposition_0) | > \hessianBound_\dt(\bposition_0) \sqrt{\nParam} \dt$,
\begin{multline}
\label{eq:one_event_with_non_neligible_gradient_implication}
\probability\big(
\big\{ \eventIndicator_i = 1, \,  \eventIndicator_j = 0 \text{ for } j \neq i \big\}
	\, \big\backslash \left\{
			(\bposition_{\dt}^\superH, \bvelocity_{\dt}^\superH) \equiv (\bposition_{\dt}^\superM, \bvelocity_{\dt}^\superM)
			\text{ on } \left[ 0, \dt \right]
		\right\}
\big) \\
	\leq \hessianBound_\dt(\bposition_0)(\nParam - 1) \dt^2 \hspace*{1.5em}
\end{multline}
and otherwise
\begin{equation}
\label{eq:one_event_with_neligible_gradient_implication}
\probability \big\{ \eventIndicator_i = 1, \,  \eventIndicator_j = 0 \text{ for } j \neq i \big\}
	\leq 2 \hessianBound_\dt(\bposition_0) \sqrt{\nParam} \dt^2.
\end{equation}
Once we establish the results \eqref{eq:no_event_implication} -- \eqref{eq:one_event_with_neligible_gradient_implication}, combined with the partition \eqref{eq:partition_via_number_of_events} of probability space, they imply that
\begin{align*}
&1 - \probability\left\{
		(\bposition_{\dt}^\superH, \bvelocity_{\dt}^\superH) \equiv (\bposition_{\dt}^\superM, \bvelocity_{\dt}^\superM)
		\ \text{ on } \ \left[ 0, \dt \right]
	\right\} \\
&\hspace{2em} \leq
	\textstyle \frac{1}{2} \gradBound_\dt(\bposition_0)^2 \nParam (\nParam - 1) \dt^2
	+ \max\!\big\{ 2 \hessianBound_\dt(\bposition_0) \nParam^{3/2}, \hessianBound_\dt(\bposition_0) \thinnerspace \nParam (\nParam - 1) \big\}  \dt^2 \\
&\hspace{2em} \leq
	\textstyle \frac{1}{2} \gradBound_\dt(\bposition_0)^2 \nParam^2 \dt^2
	+ 2 \hessianBound_\dt(\bposition_0)\thinnerspace \nParam^2 \dt^2,
\end{align*}
completing the proof.

We first deal with the event type $\{ \eventIndicator_i = 0 \, \text{ for all $i$} \}$.
Note that
\begin{equation*}
\{ \eventIndicator_i = 0 \}
	= \left\{
		\int_0^\dt
			\velocity_{0, i} \spaceBeforePartial \partial_i \potential( \bposition_0 + s \bvelocity_0 )
		\diff s
			\leq \int_0^\dt \big[
				\velocity_{0, i} \spaceBeforePartial \partial_i \potential( \bposition_0 + s \bvelocity_0 )
			\big]^+ \diff s
			< - \log \uniformRv_{0, i}
		\right\}.
\end{equation*}
In other words,  $\eventIndicator_i = 0$ implies the lack of a velocity change along the $i$-th coordinate  for both Hamiltonian and Markovian zigzag.
If $\eventIndicator_i = 0$ for all $i$, therefore, the two zigzags thus stay along the line $\bposition(t) = \bposition_0 + t \bvelocity_0$ with the constant velocity $\bvelocity(t) = \bvelocity_0$ on $[0, \dt]$.
This establishes \eqref{eq:no_event_implication}.

To study implications of the event type $\{ \eventIndicator_i = \eventIndicator_j = 1 \}$ for $i < j$, note that
\begin{equation*}
\int_0^\dt \big[
	\velocity_{0, i} \spaceBeforePartial \partial_i \potential( \bposition_0 + s \bvelocity_0 )
\big]^+ \diff s
	\leq \max_{\ball_\dt(\bposition_0)} | \spaceBeforePartial \partial_i \potential | \, \dt.
\end{equation*}
It follows that
\begin{equation*}
\begin{aligned}
\probability\{ \eventIndicator_i = \eventIndicator_j = 1 \}
	&\leq \probability\Big\{
			- \log \uniformRv_{0, i} \leq \gradBound_\dt(\bposition_0) \dt, \
			- \log \uniformRv_{0, j} \leq \gradBound_\dt(\bposition_0) \dt
		\Big\} \\
	&= \left[
		 1 - \exp \!\big( - \gradBound_\dt(\bposition_0) \dt \big)
	\right]^2,
\end{aligned}
\end{equation*}
from which \eqref{eq:more_than_two_events_implication} follows by observing that $\exp(-\epsilon) \geq 1 - \epsilon$ and hence $1 - \exp(-\epsilon) \leq \epsilon$.

We now turn to the event type $\{ \eventIndicator_i = 1, \,  \eventIndicator_j = 0 \text{ for } j \neq i \}$.
In case $| \partial_i \potential(\bposition_0) | \leq \hessianBound_\dt(\bposition_0) \sqrt{\nParam} \dt$, we have $| \partial_i \potential(\bposition_0 + s \bvelocity_0) | \leq 2 \hessianBound_\dt(\bposition_0) \sqrt{\nParam} \dt$ for $s \in [0, \dt]$ and obtain \eqref{eq:one_event_with_neligible_gradient_implication} by observing that
\begin{align*}
\probability\{ \eventIndicator_i = 1, \,  \eventIndicator_j = 0 \text{ for } j \neq i \}
	&\leq \probability\{ \eventIndicator_i = 1 \} \\
	&= \probability \!\left\{
			- \log \uniformRv_{0, i}
				\leq \int_0^\dt \big[
				\velocity_{0, i} \spaceBeforePartial \partial_i \potential( \bposition_0 + s \bvelocity_0 )
			\big]^+ \diff s
		\right\} \\
	&\leq \probability \!\left\{
			- \log \uniformRv_{0, i}
				\leq  2 \hessianBound_\dt(\bposition_0) \sqrt{\nParam} \dt^2
		\right\} \\
	&\leq 2 \hessianBound_\dt(\bposition_0) \sqrt{\nParam} \dt^2.
\end{align*}%

Finally, we consider the case $| \partial_i \potential(\bposition_0) | > \hessianBound_\dt(\bposition_0) \sqrt{\nParam} \dt$ under the event type $\{ \eventIndicator_i = 1, \,  \eventIndicator_j = 0 \text{ for } j \neq i \}$.
We first focus on behavior of the two zigzags along the $i$-th coordinate since, within the event $\{ \eventIndicator_j = 0 \text{ for }  j \neq i \}$, the first velocity changes on $[0, \dt]$ can only occur in the $i$-th component.
The condition $| \partial_i \potential(\bposition_0) | > \hessianBound_\dt(\bposition_0) \sqrt{\nParam} \dt$ guarantees that the sign of $\partial_i \potential$ is constant on $\ball_\dt(\bposition_0)$ and along both zigzags' trajectories on $[0, \dt]$.
In particular, we have either $\velocity_{0, i} \spaceBeforePartial \partial_i \potential(\bposition_0 + s \bvelocity_0) < 0$ for all $s \in [0, \dt]$ or $\velocity_{0, i} \spaceBeforePartial \partial_i \potential(\bposition_0 + s \bvelocity_0) > 0$ for all $s \in [0, \dt]$.
The former would imply, combined with the condition $\eventIndicator_i = 1$, an impossibility $
- \log \uniformRv_{0, i} 
	\leq \int_0^\dt \left[
		\velocity_{0, i} \spaceBeforePartial \partial_i \potential( \bposition_0 + s \bvelocity_0 )
	\right]^+ \diff s = 0
$.
Therefore, we must instead have $\velocity_{0, i} \spaceBeforePartial \partial_i \potential(\bposition_0 + s \bvelocity_0) > 0$ for all $s \in [0, \dt]$.
This implies that%
\begin{equation*}
\int_0^\dt
	\velocity_{0, i} \spaceBeforePartial \partial_i \potential( \bposition_0 + s \bvelocity_0 )
\diff s
	= \int_0^\dt \big[
		\velocity_{0, i} \spaceBeforePartial \partial_i \potential( \bposition_0 + s \bvelocity_0 )
	\big]^+ \diff s
\end{equation*}
and that both zigzags undergo the first velocity change $\velocity_{\dt, i}^\superH\big( \tau_{0}^{\lowerscript{(1)}} \big) = \velocity_{\dt, i}^\superM\big( \tau_{0}^{\lowerscript{(1)}} \big) = - \velocity_{0, i}$ at the same exact moment $\tau_{0}^{\lowerscript{(1)}}$.
With this new velocity, we have $- \velocity_{0, i} \spaceBeforePartial \partial_i \potential < 0$ on $\ball_\dt(\bposition_0)$ and hence there is no further change in $\velocity_{\dt, i}^\superH(t)$ or $ \velocity_{\dt, i}^\superM(t)$ during $t \in \big[ \tau_{0}^{\lowerscript{(1)}}, \tau_{0}^{\lowerscript{(1)}} + \dt \big]$.


We have so far shown that, under the assumed event type and condition on $| \partial_i \potential(\bposition_0) |$, 1) the two zigzags stay together on $[0, \dt]$ up to the first velocity change and 2) an additional velocity change, which could cause their paths to diverge, cannot occur along the $i$-th coordinate.
In particular, the two zigzag paths on $[0, \dt]$ would diverge only if there were a sign change in the $j$-th velocity component for some $j \neq i$. 
We now show that such events occur with probability at most $O(\dt^2)$ within the event $\{ \eventIndicator_j = 0 \text{ for }  j \neq i \}$, thereby establishing \eqref{eq:one_event_with_non_neligible_gradient_implication}.

We start by observing that the difference in integrals will be small on $[0, \dt]$ regardless of realized zigzag trajectories:
\begin{equation}
\label{eq:bound_on_integral_along_hamiltonian_traj}
\int_0^\dt \velocity_{\dt, j}^\superH(s) \spaceBeforePartial \partial_j \potential \big( \bposition_\dt^\superH(s) \big) \diff s
	\leq \int_0^\dt \velocity_{0, j} \spaceBeforePartial \partial_j \potential( \bposition_0 + s \bvelocity_0 ) \diff s
		+ \hessianBound_\dt(\bposition_0) \dt^2
\end{equation}
as $|\partial_j \potential(\bposition) - \partial_j \potential(\bposition')| \leq \hessianBound_\dt(\bposition_0) \dt$ for any $\bposition, \bposition' \in \ball_\dt(\bposition_0)$.
Similarly,
\begin{equation}
\label{eq:bound_on_integral_along_markovian_traj}
\int_0^\dt \left[
	\velocity_{\dt, j}^\superM(s) \spaceBeforePartial \partial_j \potential \big( \bposition_\dt^\superM(s) \big)
\right]^+ \diff s
	\leq \int_0^\dt \big[
			\velocity_{0, j} \spaceBeforePartial \partial_j \potential( \bposition_0 + s \bvelocity_0 )
		\big]^+ \diff s
		+ \hessianBound_\dt(\bposition_0) \dt^2.
\end{equation}
We then observe that
{\renewcommand{\arraystretch}{1.25}
\begin{align}
&\big\{
	\velocity_{\dt, j}^\superH \not\equiv \velocity_{\dt, j}^\superM \, \text{ on } \ [0, \dt]
\big\} \nonumber \\
&\hspace{2em}\mathop{\mathlarger{\mathlarger{\subset}}}
\left\{
	\begin{array}{l}
	- \log \uniformRv_{0, j}
		\leq \displaystyle \int_0^\dt
				\velocity_{\dt, j}^\superH(s) \spaceBeforePartial \partial_j \potential \!\left( \bposition_\dt^\superH(s) \right)
			\diff s \\
	\hspace{2em} \text{ or } \
	- \log \uniformRv_{0, j}
		\leq \displaystyle \int_0^\dt \left[
			\velocity_{\dt, j}^\superM(s) \spaceBeforePartial \partial_j \potential \!\left( \bposition_\dt^\superM(s) \right)
		\right]^+ \diff s
	\end{array}
\right\} \nonumber \\
&\hspace{2em}\mathop{\mathlarger{\mathlarger{\subset}}}
\left\{
	- \log \uniformRv_{0, j}
		\leq
			\int_0^\dt \big[
				\velocity_{0, j} \spaceBeforePartial \partial_j \potential( \bposition_0 + s \bvelocity_0 )
			\big]^+ \diff s
			+ \hessianBound_\dt(\bposition_0) \dt^2
\right\}, \label{eq:velocity_change_sufficient_condition}
\end{align}
}%
where the latter inclusion follows from \eqref{eq:bound_on_integral_along_hamiltonian_traj} and \eqref{eq:bound_on_integral_along_markovian_traj}.
The inclusion \eqref{eq:velocity_change_sufficient_condition} implies that
\begin{align*}
&\probability\big\{
	\eventIndicator_j = 0, \ \
	\velocity_{\dt, j}^\superH \not\equiv \velocity_{\dt, j}^\superM \, \text{ on } \ [0, \dt]
\big\} \\
&\hspace{2em}\leq \probability\left\{
		\begin{array}{l}
		\displaystyle \int_0^\dt \big[
				\velocity_{0, i} \spaceBeforePartial \partial_i \potential( \bposition_0 + s \bvelocity_0 )
			\big]^+ \diff s
		< - \log \uniformRv_{0, j} \\
		\hspace{4em} \leq
			\displaystyle \int_0^\dt \big[
				\velocity_{0, j} \spaceBeforePartial \partial_j \potential( \bposition_0 + s \bvelocity_0 )
			\big]^+ \diff s
			+ \hessianBound_\dt(\bposition_0) \dt^2
		\end{array}
	\right\} \\
&\hspace{2em} \leq
	\hessianBound_\dt(\bposition_0) \dt^2.
\end{align*}
We can now conclude \eqref{eq:one_event_with_non_neligible_gradient_implication} as follows:
\begin{align*}
&\probability\left\{
	\eventIndicator_i = 1, \ \,  
	\eventIndicator_j = 0 \, \text{ for all } \, j \neq i, \ \,
	(\bposition_{\dt}^\superH, \bvelocity_{\dt}^\superH) \not\equiv (\bposition_{\dt}^\superM, \bvelocity_{\dt}^\superM)
	\text{ on } \left[ 0, \dt \right]
\right\} \\
&\hspace{2em}=
	\probability\left\{
		\eventIndicator_i = 1, \ \,  
		\eventIndicator_j = 0 \, \text{ for all } \, j \neq i, \ \,
		\velocity_{\dt, j}^\superH \not\equiv \velocity_{\dt, j}^\superM \, \text{ on } \, [0, \dt] \, \text{ for some } \, j \neq i
	\right\} \\
&\hspace{2em}\leq
	\probability\left\{
		\eventIndicator_j = 0 \, \text{ for all } \, j \neq i, \ \,
		\velocity_{\dt, j}^\superH \not\equiv \velocity_{\dt, j}^\superM \, \text{ on } \, [0, \dt] \, \text{ for some } \, j \neq i
	\right\} \\
&\hspace{2em}\leq
	\textstyle \sum_{j \neq i}
	\probability\left\{
		\eventIndicator_j = 0, \ \, 
		\velocity_{\dt, j}^\superH \not\equiv \velocity_{\dt, j}^\superM \, \text{ on } \, [0, \dt]
	\right\} \\
&\hspace{2em}\leq
	(\nParam - 1) \hessianBound_\dt(\bposition_0) \dt^2. \qedhere
\end{align*} 
\end{proof}

\section{Mid-point integrator for Hamiltonian zigzag}
\label{sec:midpoint_integrator}

In terms of a practical use of Hamiltonian zigzag and performance comparison against Markovian zigzag, this article focuses exclusively on truncated Gaussian targets where both dynamics allow exact simulations. 
It is possible, however, to apply zigzag \hmc{} to more general target distributions by numerically approximating Hamiltonian zigzag through a reversible integrator. 
Numerical approximation of discontinuous dynamics remains a frontier of research in computational mathematics \protect\citepSupp{fetecau2003nonsmooth_mechanics}, but \protect\citetSupp{nishimura2020discontinuous_hmc} presents one method to qualitatively approximate Hamiltonian zigzag by solving the differential equation \eqref{eq:hamilton_under_laplace_momentum} one coordinate at a time.
This approach can be efficient when the model parameters have a conditional independence structure, which can be exploited to reduce computational costs during the coordinate-wise updates.
Here we present alternative approach that forgoes the coordinate-wise solution and may prove more efficient when no such conditional independence exists.

Our integrator, described in Algorithm~\ref{alg:midpoint_integrator}, is inspired by the mid-point approximation of Hamiltonian-like discontinuous dynamics by \protect\citetSupp{chin2023bouncy_hmc} and is a deterministic analogue of the numerical approximation scheme for Markovian zigzag proposed by \protect\citetSupp{bertazzi2023pdmp_splitting}. 
The idea is to approximate the required integral for updating momentum via the mid-point method
$\int_0^\dt \nabla \potential(\bposition(\tau + s) \diff s) \approx \dt \thinnerspace \nabla \potential\!\left( \bposition\!\left(\tau + \frac{\dt}{2} \right) \right)$. 
The integrator can be easily verified to be volume-preserving and reversible and hence generates a valid Metropolis proposal \protect\citepSupp{fang2014compressible_hmc}. 
Zigzag \hmc{} based on the mid-point integrator is implemented in the \texttt{laplace} branch of \url{https://github.com/aki-nishimura/pynuts}.

\myeditNoPrint{MidpointAlgCounter}{\arabic{algorithm}}
{\spacingset{1.1}
	\setlength{\intextsep}{\baselineskip}
	\algrenewcommand\algorithmicindent{1.5em}
	\begin{algorithm}[H]
		\caption{Numerical approximation of Hamiltonian zigzag with stepsize $\dt$}
		\label{alg:midpoint_integrator}
		\begin{algorithmic}[1]
			\Function{MidpointIntegrator}{$\bposition, \bmomentum, \dt$}
			\State $\bposition \gets \bposition + \frac{\dt}{2} \sign(\bmomentum)$
			\State $\bmomentum' \gets \bmomentum - \dt \thinnerspace \nabla \potential(\bposition)$
			\For{$i = 1, \dots, \nParam$}
			\If{$\sign(\momentum_i) = \sign(\momentum_i')$}
			\State $\momentum_i \gets \momentum_i'$ 
			\Else
			\State $\momentum_i \gets - \momentum_i$ 
			\EndIf
			\EndFor
			\State $\bposition \gets \bposition + \frac{\dt}{2} \sign(\bmomentum)$
			\EndFunction
		\end{algorithmic}
	\end{algorithm}
}

\section{Efficient zigzags on truncated Gaussians}
\label{sec:zigzags_on_truncated_gaussians}
Here we describe how to simulate Hamiltonian and Markovian zigzag on a target distribution $\bposition \sim \normalDist(\bmu, \bPhi^{-1})$ truncated to $\left\{ \sign(\bposition) = \bobservation\in \{\pm 1\}^\nParam \right\}$.
On the support of the target, we have $\nabla \potential(\bposition) = \bPhi (\bposition - \bmu)$ and hence
\begin{alignat*}{2}
\int_0^t \nabla \potential\left( \bposition + s \bvelocity \right) \diff s
	&= \int_0^t \left( \bPhiPosition + s \bPhiVelocity \right) \diff s &\\
	&= t \bPhiPosition + \frac{t^2}{2} \bPhiVelocity
	&\ \text{ where } \
	\bPhiPosition = \bPhi (\bposition - \bmu)
	\ \text{ and } \
	\bPhiVelocity = \bPhi \bvelocity.
	\yesnumber
	\label{eq:line_integral_for_truncated_normal}
\end{alignat*}
In particular, the formula for the velocity switch event of Line~\ref{line:coord_wise_event_time} in Algorithm~\ref{alg:hamiltonian_zigzag_simulation} becomes
\begin{equation}
\label{eq:hamiltonian_event_time_for_truncated_normal}
t_i
	= \inf_{t > 0} \left\{
		|\momentum_i| = \velocity_i \int_0^t \left( \varphi_{\bposition, i} + s \varphi_{\bvelocity, i} \right) \diff s
	\right\}
	= \inf_{t > 0} \left\{
		\momentum_i
			= t \varphi_{\bposition, i} + \frac{t^2}{2} \varphi_{\bvelocity, i}
	\right\}.
\end{equation}
Determining $t_i$ thus amounts to solving a quadratic equation with constraint.

We can handle the truncation via the technique of \protect\citeSupp{neal2010hmc}.
Essentially, when the trajectory reaches a boundary $\{x_i = 0\}$, Hamiltonian zigzag bounces off against it, causing an instantaneous momentum and velocity flip $\momentum_i \gets - \momentum_i$ and $\velocity_i \gets - \velocity_i$.
The trajectory then continues with these new momentum and velocity.
In an algorithm implementation, this amounts to checking for velocity switch events caused by domain boundaries in addition to the native ones caused by the gradient $\nabla \potential$ guiding the dynamics.

Algorithm~\ref{alg:hamiltonian_zigzag_for_truncated_normal} provides pseudo code for simulating Hamiltonian zigzag on a multivariate Gaussian truncated to $\left\{ \sign(\bposition) = \bobservation\in \{\pm 1\}^\nParam \right\}$.
Line~\ref{eq:hamiltonian_boundary_event_determination_start} -- \ref{eq:hamiltonian_boundary_event_determination_end} deal with the truncation, while the rest is essentially a target-specific version of Algorithm~\ref{alg:hamiltonian_zigzag_simulation} with the analytical calculations \eqref{eq:line_integral_for_truncated_normal} and \eqref{eq:hamiltonian_event_time_for_truncated_normal} incorporated.
The function \textproc{minPositiveRoot}$(a, b, c)$, vectorized in the pseudo code, returns the (smaller) positive root of the quadratic equation $a t^2 + b t + c$ if it exists and $\infty$ otherwise.
After updating $\bposition$ and $\bvelocity$ in Line~\ref{line:position_update} and \ref{line:velocity_update}, we can exploit the relations of Line~\ref{line:precision_position_update} and \ref{line:precision_velocity_update} to avoid recomputing the matrix-vector products $\bPhi \bposition$ and $\bPhi \bvelocity$ from scratch.
Note that, while the velocity update $\bvelocity_{\rm new} = \bvelocity_{\rm old} - 2 \velocity_{{\rm old}, i^*} \bm{e}_{i^*}$ corresponds to $\bPhi  \bvelocity_{\rm new} = \bPhi \bvelocity_{\rm old} - 2 \velocity_{{\rm old}, i^*} \bPhi \bm{e}_{i^*}$, the sign in front of $\velocity_{i^*}$ in Line~\ref{line:precision_velocity_update} appears flipped as the velocity has already been updated.

{\spacingset{1.1}
\setlength{\intextsep}{\baselineskip}
\hspace*{-.04\linewidth}
\begin{minipage}{.5\linewidth}
	\begin{algorithm}[H]
	    \caption{Hamiltonian zigzag on \\truncated Gaussian simulated for $t \in [0, \integrationTime]$}
	    \label{alg:hamiltonian_zigzag_for_truncated_normal}
	    \begin{algorithmic}[1]
	        \Function{HamiltonianZigzag}{$\bposition, \hspace*{-.12em} \bmomentum, \hspace*{-.12em} \integrationTime$}
	        	\State $\tau \gets 0$
	            \State $\bvelocity \gets \textrm{sign}(\bmomentum)$
	            \State $\bPhiPosition \gets \bPhi \bposition$ \label{line:Phi_x}
	            \State $\bPhiVelocity \gets \bPhi \bvelocity$ \label{line:Phi_v}
	            \While{$\tau < \integrationTime$}
            		\State \vphantom{$\bm{\uniformRv} \iidSim \unifDist(0, 1)$} 
            		\State \vphantom{$\bm{\interEventTime}^\dagger \gets \Call{firstPositiveTime}{\bPhiVelocity, \bPhiPosition}$}
            		\State $\bm{c} \gets - \bmomentum$
            		\State\begin{varwidth}[t]{\linewidth}
            			$\bm{\interEventTime}_{\rm grad} \gets$\\	
            				\hspace*{1em} $\textproc{minPositiveRoot} (\frac{1}{2} \bPhiVelocity, \bPhiPosition, \bm{c})$
            			\end{varwidth}
	            	\State $\interEventTime_{\rm grad}^* \gets \min_i \interEventTime_{{\rm grad}, i}$
	            	\State $\bm{\interEventTime}_{\rm bdry} = - \bposition / \bvelocity$
	            		\label{eq:hamiltonian_boundary_event_determination_start}
	            	\State $\bm{\interEventTime}_{\rm bdry}[ \bm{\interEventTime}_{\rm bdry} < 0 ] \gets \infty$
	            	\State $\interEventTime_{\rm bdry}^* \gets \min_i \interEventTime_{{\rm bdry}, i}$
	            	\State $\interEventTime^* \gets \min\{ \interEventTime_{\rm grad}^*, \interEventTime_{\rm bdry}^* \}$
	            		\label{eq:hamiltonian_boundary_event_determination_end}
	            	\If{$\tau + \interEventTime^* > \integrationTime$} 
	            		\State $\bposition \gets \bposition + (\integrationTime - \tau) \bvelocity$
	            		\State $\bmomentum \gets \bmomentum - (\integrationTime - \tau) \bPhiPosition - (\integrationTime - \tau)^2 \bPhiVelocity / 2$
	            		\State $\tau \gets \integrationTime$
	            	\Else
	            		\State $\bposition \gets \bposition + \interEventTime^* \bvelocity$
	            			\label{line:position_update}
	            		\State $\bmomentum \gets \bmomentum - t^* \bPhiPosition - t^{*2} \bPhiVelocity / 2$
	            		\State $i^* \gets \argmin_i \interEventTime_i$
	            		\State $\velocity_{i^*} \gets - \velocity_{i^*}$
	            			\label{line:velocity_update}
	            		\State $\bPhiPosition \gets \bPhiPosition + \interEventTime^* \bPhiVelocity$
	            			\label{line:precision_position_update}
	            		\State $\bPhiVelocity \gets \bPhiVelocity + 2 \velocity_{i^*} \bPhi \bm{e}_{i^*}$
	            			\label{line:precision_velocity_update}
	            		\State $\tau \gets \tau + \interEventTime^*$
	            	\EndIf
	            \EndWhile
	            \State \textbf{return} $(\bposition, \bmomentum)$
	        \EndFunction
	    \end{algorithmic}
	\end{algorithm}
\end{minipage}
\nobreak\hspace{.05em} 
\begin{minipage}{.485\linewidth}
	\begin{algorithm}[H]
	    \caption{Markovian zigzag on \\truncated Gaussian simulated for $t \in [0, \integrationTime]$}
	    \label{alg:markovian_zigzag_for_truncated_normal}
	    \begin{algorithmic}[1]
	         \Function{MarkovianZigzag}{$\bposition, \hspace*{-.12em}  \bvelocity, \hspace*{-.12em}  \integrationTime$}
	        	\State $\tau \gets 0$
	            \State
	            \State $\bPhiPosition \gets \bPhi \bposition$
	            \State $\bPhiVelocity \gets \bPhi \bvelocity$
	            \While{$\tau < \integrationTime$}
            		\State $\bm{\uniformRv} \iidSim \unifDist(0, 1)$ \label{line::uniform_rv_for_truncated_gaussian_markovian_zigzag}
            		\State $\bm{\interEventTime}^\dagger \gets \Call{firstPositiveTime}{\bPhiVelocity, \bPhiPosition}$%
					\State $\bm{c} \gets \log \bm{\uniformRv}
									+ \bm{\interEventTime}^\dagger \bvelocity \thinnerspace \bPhiVelocity
									+ \frac{1}{2} \bm{\interEventTime}^{\dagger 2} \bvelocity \thinnerspace \bPhiPosition$%
							\label{line:elemwise_prod_used}
					\State \begin{varwidth}[t]{\linewidth}
			            			$\bm{\interEventTime}_{\rm grad} \gets$ \\
			            			\hspace*{.5em} $\textproc{minPositiveRoot} \raisebox{-.2ex}{$\scriptstyle \geq \bm{\interEventTime}^\dagger$}(\frac{1}{2} \bPhiVelocity, \bPhiPosition, \bm{c})$
			            		\end{varwidth}
	            	\State $\interEventTime_{\rm grad}^* \gets \min_i \interEventTime_{{\rm grad}, i}$
	            	\State $\bm{\interEventTime}_{\rm bdry} = - \bposition / \bvelocity$
	            		\label{eq:markovian_boundary_event_determination_start}
	            	\State $\bm{\interEventTime}_{\rm bdry}[ \bm{\interEventTime}_{\rm bdry} < 0 ] \gets \infty$
	            	\State $\interEventTime_{\rm bdry}^* \gets \min_i \interEventTime_{{\rm bdry}, i}$
	            	\State $\interEventTime^* \gets \min\{ \interEventTime_{\rm grad}^*, \interEventTime_{\rm bdry}^* \}$
	            		\label{eq:markovian_boundary_event_determination_end}
	            	\If{$\tau + \interEventTime^* > \integrationTime$}
	            		\State $\bposition \gets \bposition + (\integrationTime - \tau) \bvelocity$
	            		\State \vphantom{$\bmomentum \gets \bmomentum - \int_0^{\integrationTime - \tau} \nabla \potential\big( \bposition \! + s \bvelocity \big) \diff s.$} 
	            		\State $\tau \gets \integrationTime$
	            	\Else
	            		\State $\bposition \gets \bposition + \interEventTime^* \bvelocity$
	            		\State
	            		\State $i^* \gets \argmin_i \interEventTime_i$
	            		\State $\velocity_{i^*} \gets - \velocity_{i^*}$
	            		\State $\bPhiPosition \gets \bPhiPosition + \interEventTime^* \bPhiVelocity$
	            		\State $\bPhiVelocity \gets \bPhiVelocity + 2 \velocity_{i^*} \bPhi \bm{e}_{i^*}$
	            		\State $\tau \gets \tau + \interEventTime^*$
	            	\EndIf
	            \EndWhile
	            \State \textbf{return} $(\bposition, \bvelocity)$
	        \EndFunction
	    \end{algorithmic}
	\end{algorithm}
\end{minipage}
}

For Markovian zigzag, we can take advantage of the following analytical expressions.
Observe that
\begin{align*}
\int_0^t \left[ \velocity_i \spaceBeforePartial \partial_i \potential \!\left( \bposition + s \bvelocity \right) \right]^+ \diff s
	&= \int_{\interEventTime_i^\dagger}^t \left(
			\velocity_i \varphi_{\bposition, i} + s \velocity_i \varphi_{\bvelocity, i}
		\right) \diff s \\
	&= t \velocity_i \varphi_{\bposition, i}
		+ \frac{t^2}{2} \velocity_i \varphi_{\bvelocity, i}
		- \interEventTime_i^{\dagger} \velocity_i \varphi_{\bposition, i}
		- \frac{\interEventTime_i^{\dagger 2}}{2} \velocity_i \varphi_{\bvelocity, i}
		\yesnumber
		\label{eq:markovian_zigzag_quadratic_equation}
\end{align*}
provided $t \geq \interEventTime_i^\dagger$, where $\interEventTime_i^\dagger$ denotes the time, if it exists, at which the linear function $s \to \velocity_i \varphi_{\bposition, i} + s \velocity_i \varphi_{\bvelocity, i}$ on $s \geq 0$ attains a positive value for the first time.
More precisely, we define
\begin{equation}
\label{eq:first_positive_time}
\begin{aligned}
\interEventTime_i^\dagger
	:= \left\{
		\begin{array}{ll}
		- \varphi_{\bvelocity, i} / \varphi_{\bposition, i}
		&\text{ if } \
			\velocity_i \varphi_{\bposition, i} < 0 
			\ \text{ and } \
			\velocity_i \varphi_{\bvelocity, i} \geq 0, \\
		\infty
		&\text{ if } \
			\velocity_i \varphi_{\bposition, i} < 0 
			\ \text{ and } \
			\velocity_i \varphi_{\bvelocity, i} < 0, \\
		0
		&\text{ otherwise}.
		\end{array}
	\right.
\end{aligned}
\end{equation}
Using \eqref{eq:markovian_zigzag_quadratic_equation}, we can express the velocity switch event formula of Line~\ref{line:coord_wise_event_time} in Algorithm~\ref{alg:markovian_zigzag_simulation} as
\begin{equation*}
t_i
	= \inf_{t > \interEventTime_i^\dagger} \left\{
		- \log \uniformRv_i
			= t \velocity_i \varphi_{\bposition, i}
			+ \frac{t^2}{2} \velocity_i \varphi_{\bvelocity, i}
			- \interEventTime_i^{\dagger} \velocity_i \varphi_{\bposition, i}
			- \frac{\interEventTime_i^{\dagger 2}}{2} \velocity_i \varphi_{\bvelocity, i}
	\right\},
\end{equation*}
if the infimum exists and $t_i = \infty$ otherwise.

Algorithm~\ref{alg:markovian_zigzag_for_truncated_normal} provides pseudo code for simulating Markovian zigzag on a multivariate Gaussian truncated to $\left\{ \sign(\bposition) = \bobservation\in \{\pm 1\}^\nParam \right\}$.
The $\textproc{firstPositiveTime}(\bPhiVelocity, \bPhiPosition)$ function computes $\interEventTime_i^\dagger$'s as in \eqref{eq:first_positive_time}.
The expressions $\bm{\interEventTime}^\dagger \bvelocity \thinnerspace \bPhiVelocity$ and $\bm{\interEventTime}^{\dagger 2} \bvelocity \thinnerspace \bPhiPosition$ in Line~\ref{line:elemwise_prod_used} denote element-wise multiplied vectors.
The $\textproc{minPositiveRoot} \raisebox{-.15ex}{$\scriptstyle \geq \interEventTime_i^\dagger$}(a, b, c)$
function returns the (smaller) positive root as in in Algorithm~\ref{alg:hamiltonian_zigzag_for_truncated_normal} but also requires the root to be greater than $\interEventTime_i^\dagger$.
The input $c_i$ can be infinity here, in which case we require the function to return $\infty$.

\section{Empirical validation of zigzag {\large HMC}'s ergodicity}
\label{sec:empirical_valication_of_hzz_ergodicity}

\myedit{EmpiricalValicationOfErgodicity}{%
	Establishing rigorous ergodicity results for an \mcmc{} algorithm is often challenging \protect\citepSupp{jones2001honest_exploration}.
	For algorithms with deterministic components, such as \hmc{} and \pdmp{}, even a property as basic as irreducibility poses serious barriers to the analysis \protect\citepSupp{durmus2020irreducibility_of_hmc, bierkens2019ergodicity};
	in fact, it was years after \hmc{}'s rise to the prominence and its wide adoption before irreducibility was established under the sufficiently general condition. 
	We therefore leave theoretical investigation of zigzag \hmc{}s ergodicity to future work and, instead, empirically validate its ergodicity in the truncated Gaussian case.
	
	To this end, we take a low-dimensional truncated Gaussian with randomly chosen parameters, run a long chain of zigzag-\Nuts{} on the target, and compare its output to ``ground truth'' reference samples.
	The low-dimensionality allows us to obtain the ground truth by a simple rejection algorithm, which first generate a sample from multivariate Gaussian with the given parameters and then check whether the sample satisfies the constraint. 
	We construct a target of dimension $\nParam = 16$ by truncating a multivariate Gaussian  $\normalDist\!\left( \bmu, \bm{\Sigma} \right)$ to the positive orthant $\{ x_i \geq 0 \}$. 
	The mean parameter is generated as $\mu_i \sim \normalDist(0.25, 0.1^2)$ and covariance as $\bm{\Sigma} \sim \operatorname{Wishart}(\textrm{df} = 2 \nParam, \textrm{mean} = \bm{\Sigma}_0)$, with $\bm{\Sigma}_0 = (1 - \rho) \Id + \rho \bm{1} \mathbf{1}^\transpose$ for $\rho = 0.9$ having the compound symmetric structure of~\eqref{eq:compound_symmetry_covariance}.
	We obtain $10^6$ zigzag-\Nuts{} samples from the corresponding number of iterations with base integration time chosen as described in Section~\ref{sec:zigzag_nuts}. 
	We obtain $3{,}487{,}760$ reference samples after acceptance-rejection of $10^7$~proposals made from the untruncated multivariate Gaussian.
	
	Figure~\ref{fig:hzz_vs_rejection_mean_and_cov_comparison} and \ref{fig:hzz_vs_rejection_marginal_density_comparison} show results of the empirical validation.
	The two sets of Monte Carlo samples yields essentially identical estimates of the target's means and pairwise covariances (Figure~\ref{fig:hzz_vs_rejection_mean_and_cov_comparison}).
	Similarly, the univariate and bivariate marginals along the two randomly chosen coordinates are visually indistinguishable between the two sets of samples (Figure~\ref{fig:hzz_vs_rejection_marginal_density_comparison}).
	We also find the two algorithms to yield visually indistinguishable samples under other target parameter settings.
	Overall, our empirical validation leaves little doubt about zigzag \hmc{}'s ergodicity, at least on truncated Gaussians.%
}

\begin{figure}
\centering
\includegraphics[height=.4\linewidth]{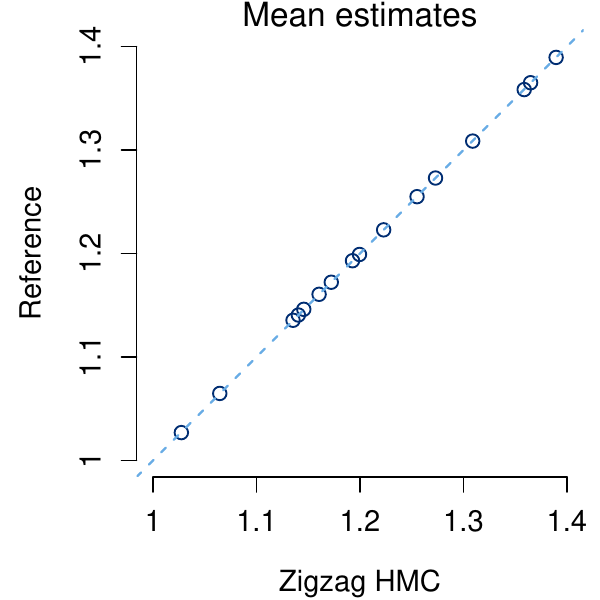}
\includegraphics[height=.4\linewidth]{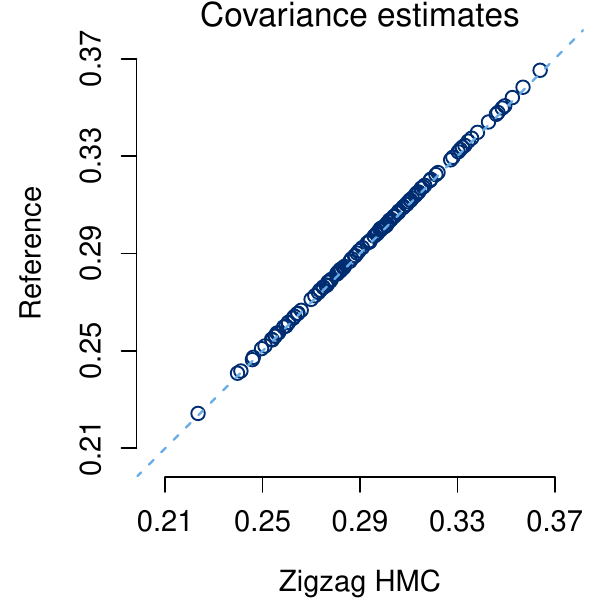}
\caption{%
	Comparison of zigzag-\Nuts{} against reference samples in terms of their estimates of the target's means and pairwise covariances.
}
\label{fig:hzz_vs_rejection_mean_and_cov_comparison}
\end{figure}

\begin{figure}
	\begin{subfigure}{\linewidth}
	\centering
	\includegraphics[height=.33\linewidth, trim=0 {.04\linewidth} 0 {.1\linewidth}, clip]{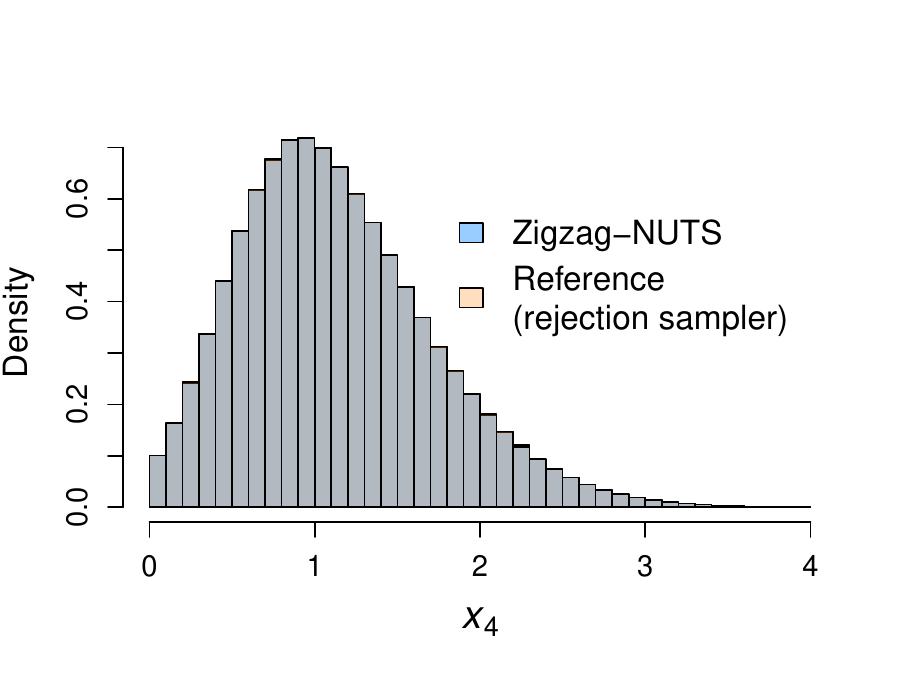}
	\hspace*{-.05\linewidth}
	\includegraphics[height=.33\linewidth, trim={.13\linewidth} {.04\linewidth} 0 {.1\linewidth}, clip]{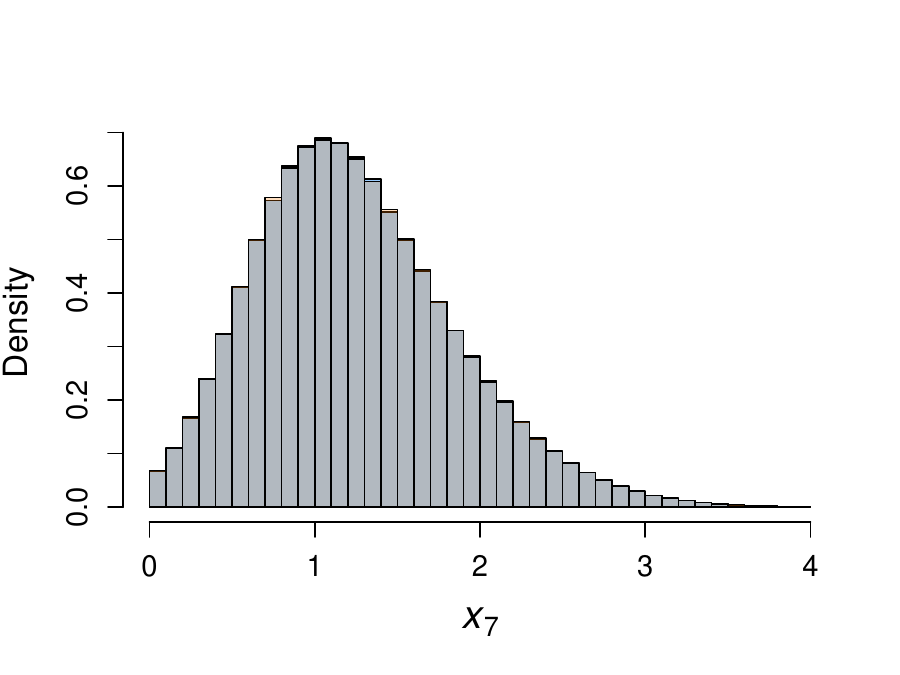}
	\subcaption{%
		Comparison in terms of univariate marginals.
	}
	\label{fig:hzz_vs_rejection_marginal_density_comparison_univariate}
	\end{subfigure}

	\vspace*{.5\baselineskip}
	\begin{subfigure}{\linewidth}
	\centering
	\hspace*{-.04\linewidth}
	\includegraphics[height=.43\linewidth, trim=0 {.07\linewidth} 0 {.05\linewidth}, clip]{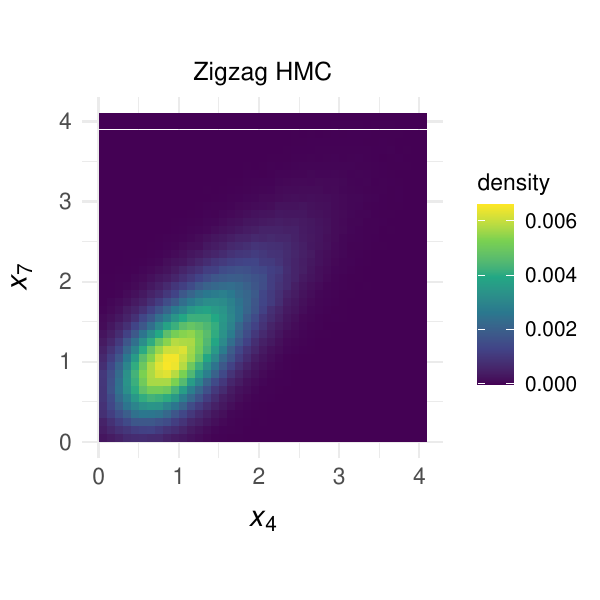}
	\includegraphics[height=.43\linewidth, trim={.085\linewidth} {.07\linewidth} {.15\linewidth} {.05\linewidth}, clip]{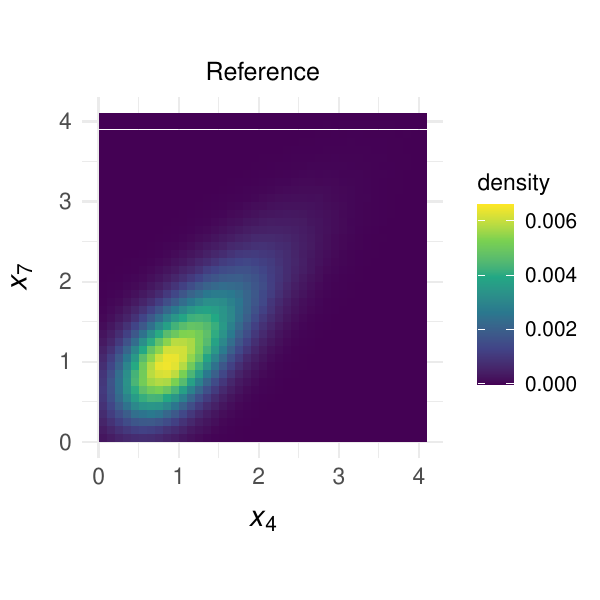}
	\subcaption{%
		Comparison in terms of bivariate marginals.
	}
	\end{subfigure}
\caption{%
	Comparison of zigzag-\Nuts{} against reference samples.
	In the upper subfigure, the two individual colors are hardly visible as the two histograms overlap almost perfectly.
}
\label{fig:hzz_vs_rejection_marginal_density_comparison}
\end{figure}

\FloatBarrier

\newcommand{\nEvent}{n}
\section{Comparison of zigzag {\large HMC}/{\large NUTS}  with existing samplers for truncated Gaussians}
\label{sec:existing_sampler_from_truncated_gaussians} 

\myedit{ComparisonAgainstOtherSamplersIntro}{%
	While this article's main focus is on the connection between the two zigzags and their relative performance, here we study how zigzag \hmc{}/\nuts{} compares against two other state-of-the-art samplers for truncated Gaussians: the algorithms of \protect\citetSupp{pakman2014truncated_normal_hmc} and of \protect\citetSupp{botev2017truncated_normal}.
	In practice, a truncated Gaussian often appears as a conditional distribution within a Gibbs sampler for a more complex joint target distribution \protect\citepSupp{chib1998multivariate_probit, zhang2021phylo_multi_probit}.
	The underlying mean and covariance parameters change from one Gibbs iteration to another in such applications.
	In comparing efficiencies of truncated Gaussian samplers, therefore, it is of interest to consider a cost of pre-processing, which must be repeated when the parameters change, as well as a cost of subsequent computations.
	The discussions and numerical results below complement those in \protect\citetSupp{zhang2022hdtg} and \protect\citetSupp{zhang2022laplace_gause_hmc}.%
}

\subsection{Computational complexity}
\label{sec:existing_sampler_complexity} 

\myedit{ComparisonAgainstOtherSamplersComplexity}{%
	The algorithm of \protect\citetSupp{botev2017truncated_normal} generates independent samples, but both zigzag \hmc{} and \protect\citetSupp{pakman2014truncated_normal_hmc}'s are \mcmc{} algorithms.
	An \mcmc{} algorithm's efficiency is determined by both its per-iteration cost and mixing rate.
	Obtaining quantitative results on an algorithm's mixing rate is challenging, however, even on a restricted family of targets;
	see, for example, \protect\citetSupp{chen2020mixing_of_hmc, lu2020L2_convergence_rate_for_PDMPs} and \protect\citetSupp{bierkens2018scaling_of_pdmp_sampler} for latest developments in quantifying mixing rates of \hmc{} and \pdmp{} algorithms.
	Given this difficulty, here we restrict ourselves to characterizing the per-iteration costs of the three algorithms being considered. 
	Overall sampling efficiencies, which account for mixing rates, are numerically studied in next section.
	
	In considering the per-iteration cost of zigzag \hmc{}, we continue our approach in Section~\ref{sec:simulation} and focus on Zigzag-\Nuts{}, the version of zigzag \hmc{} with the choice of integration time automated through the no-U-turn algorithm.
	As explained in Section~\ref{sec:zigzag_nuts}, the use of the no-U-turn algorithm makes Zigzag-\Nuts{} mostly tuning-free.
	The subsequent simulation of Hamiltonian zigzag requires the one-time $O(\nParam^2)$ cost of computing $\bPhi \bposition$ and $\bPhi \bvelocity$ (Line~\ref{line:Phi_x} and \ref{line:Phi_v} of Algorithm~\ref{alg:hamiltonian_zigzag_for_truncated_normal}), where $\nParam$ as before denotes the number of parameters. 
	Afterward, all the calculations involved are $O(\nParam)$ per velocity switch event.
	Denoting the number of velocity switch events for time duration $\integrationTime$ by $\nEvent_\integrationTime$, the computational cost of each Zigzag-\Nuts{} iteration can therefore be expressed as $O(\nParam \nEvent_\integrationTime)$. 
	With the no-U-turn algorithm, the integration time $\integrationTime$ is chosen dynamically and can vary from one iteration to another.
	The corresponding number of velocity switch events $\nEvent_\integrationTime$ is difficult to quantify but the result of \protect\citetSupp{bierkens2018scaling_of_pdmp_sampler} suggests $\nEvent_\integrationTime = O(\nParam \thinnerspace \integrationTime)$ for Markovian zigzag.
	Assuming that a similar result holds for Hamiltonian zigzag, we expect an $O(\nParam^2 \integrationTime)$ cost for each Zigzag-\Nuts{} iteration.
	
	The algorithm of \protect\citetSupp{pakman2014truncated_normal_hmc} is a version of \hmc{} based on the standard Hamiltonian dynamics with Gaussian momentum. 
	This Gaussian momentum dynamics is simulated exactly as independent harmonic oscillations along the principal components of the covariance matrix, with truncation causing elastic bounces against the parameter space boundaries.
	The simulation requires a one-time calculation of the Cholesky factor $\bm{L}$ of the covariance (or precision) matrix, which incurs an $O(\nParam^3)$ cost. 
	Subsequently, each calculation of next bounce time requires a couple of matrix-vector multiplications by $\bm{V}^\transpose \bm{L}^\transpose$, 
	where the columns of $\bm{V}$ consists of vectors orthogonal to the boundaries, and incurs an $O(\nParam^2)$ cost under the orthant constraint. 
	The dynamics is simulated for duration $\integrationTime$, with a value of $\integrationTime \in [\pi / 8, \pi / 2]$ typically recommended for efficient mixing \protect\citepSupp{pakman2014truncated_normal_hmc, zhang2022hdtg}.
	Denoting the resulting number of boundary events by $\nEvent_{\textrm{bd}}$, the total cost for each iteration of the algorithm is $O(\nParam^3 + \nParam^2 \nEvent_{\textrm{bd}})$.
	For a multivariate Gaussian target with independent coordinates and orthant constraint, the expected number of boundary events is given by $\nParam \thinspace \integrationTime / \pi$ at stationarity.
	We therefore take $\nEvent_{\textrm{bd}} \approx \nParam \thinspace \integrationTime / \pi$ as a generally reasonable approximation, which we in fact observe to be the case when applying the algorithm to the phylogenetic posterior of Section~\ref{sec:phylogenetic_probit} in the numerical result below.
	With this approximation, the cost of each iteration can be expressed as $O(\nParam^3 + \nParam^3 \integrationTime / \pi)$.
	
	The algorithm of \protect\citetSupp{botev2017truncated_normal} is a rejection sampler based on a minimax tilted proposal density. 
	Finding the minimax tilting requires a constrained convex optimization, each iteration of which requires $O(\nParam^3)$ operations.
	The subsequent rejection sampling requires only $O(\nParam^2)$ operations per attempt; 
	however, it suffers from the usual curse of dimensionality in rejection sampling, with the acceptance rate in general decaying exponentially in the number of parameters.
	In fact, \protect\citetSupp{botev2017truncated_normal} cautions that the algorithm may not work well beyond the dimension $d > 100$ except in limited situations.%
}

\subsection{Numerical results}
\label{sec:existing_sampler_numerical_results} 

\myedit{ComparisonAgainstOtherSamplersNumericalResults}{ \phantom{}%
	\protect\citetSupp{zhang2022hdtg} compare the three algorithms on truncated Gaussians with dimension $d = 100, 400, 800, \thinnerspace \text{and}\ 1{,}600$ with covariance matrices taken from
	i) the uniform \textsc{lkj} distribution \protect\citepSupp{lewandowski2009LKJ_prior}, 
	ii) the compound symmetric example of Section~\ref{sec:synthetic_examples}, and 
	iii) the phylogenetic posterior of Section~\ref{sec:phylogenetic_probit} with sub-sampling to reduce the dimension.
	Their general finding is that Zigzag-\Nuts{} has a clear advantage in higher dimensions, except in the compound symmetric case where the pre-processing step of \protect\citetSupp{pakman2014truncated_normal_hmc} to decorrelate the parameters leads to better performances.
	Additionally, \protect\citetSupp{zhang2022laplace_gause_hmc} demonstrate the use of Hamiltonian zigzag and the splitting technique \protect\citepSupp{shahbaba2014split_hmc, nishimura2020discontinuous_hmc} to improve efficiency by jointly updating highly correlated parameters, 
	an opportunity not offered by the other two algorithms. 
	
	Here we provide an additional insight on the algorithms' relative performance by benchmarking them on the full 11,235-dimensional posterior from the phylogenetic example of Section~\ref{sec:phylogenetic_probit}. 
	For the algorithm of \protect\citetSupp{pakman2014truncated_normal_hmc}, we use an implementation from the \texttt{hdtg} package \protect\citepSupp{zhang2022hdtg} optimized with \texttt{Rcpp} \protect\citepSupp{eddelbuettel2011rcpp} since the original implementation from the \texttt{tmg} package has been removed from \textsc{cran} on May 22nd, 2021.
	We measure its performance under three tuning parameter settings;
	from the recommended range $\integrationTime \in [\pi / 8, \pi / 2]$, we take a shorter, medium, and longer range $\integrationTime \sim \unifDist(\pi / 8, \pi / 4)$, $\sim \unifDist(3 \pi / 16, 3 \pi / 8)$, and $\sim \unifDist(\pi / 8, \pi / 4)$ while incorporating the jittering to prevent periodic behaviors \protect\citepSupp{neal2010hmc, zhang2022hdtg}.
	For the algorithm of \protect\citetSupp{botev2017truncated_normal}, we use the implementation from the \texttt{TruncatedNormal} package on \textsc{cran}.
	
	Table~\ref{tab:comparison_against_other_samplers_ess_per_time} summarizes the numerical results.
	The table shows the absolute \ess{} per hour, instead of the relative ones shown in Table~\ref{tab:ess_per_time_phylo_probit}, to provide a better sense of the actual computational burdens. 
	Zigzag-\Nuts{} outperforms the optimally tuned version of \protect\citetSupp{pakman2014truncated_normal_hmc}'s by a factor of 262 and of 59.6 in terms of the minimum \ess{} and of the \ess{} along the principal component.
	\protect\citetSupp{botev2017truncated_normal}'s fails to generate any sample after 3 days, so we only report the corresponding upper-bound on its \ess{}.
	As in Section~\ref{sec:phylogenetic_probit}, the \ess{}'s for Zigzag-\Nuts{} are calculated and averaged from five replicates of 1,500 iterations from stationarity.
	The \ess{}'s for \protect\citetSupp{pakman2014truncated_normal_hmc}'s are calculated in the same manner, using five replicates of 100 iterations.
	The reported \ess{}'s account for the pre-processing costs, though they contribute little to the overall costs in this example since the subsequent computations take far longer under the specified chain lengths.%
}

\begin{table}
\centering
\caption{%
	\Ess{} per hour under the phylogenetic probit posterior of Section~\ref{sec:phylogenetic_probit}.
	The simulation settings and \ess{} calculations are identical to those in Section~\ref{sec:phylogenetic_probit} except for two minor differences: 
	1) to simplify implementation of the additional algorithms and ensure their fair comparison against zigzag-\Nuts{}, we use the R implementation of zigzag-\Nuts{} available in the \texttt{hdtg} package \protect\citepSupp{zhang2022hdtg} instead of  the Java implementation; and
	2) to avoid incurring unnecessary expense on Amazon Elastic Compute Cloud, we carried out the computations on the Joint High Performance Computing Exchange (\jhpce{}) cluster at Johns Hopkins University (\url{https://jhpce.jhu.edu/}), allocating an \textsc{amd epyc} 7702 processor and 32 gigabytes of memory to each chain.
	Neither of these modifications should have material effects on the relative performances of the two algorithms.
}
\vspace*{.3\baselineskip}
\label{tab:comparison_against_other_samplers_ess_per_time}
\begin{tabular}[t]{lcccc}
\toprule
\multirow{2}{*}{Phylogenetic probit \hspace*{-1ex} \rule{0pt}{12pt}} 
& \multicolumn{2}{c}{\Ess{} per hour} \\
\cmidrule(l{3pt}r{3pt}){2-3} 
\cmidrule(l{3pt}r{3pt}){4-5}
& \hspace{2ex} min \hspace{2ex} & PC\\
\midrule
\rowcolor{defaultNutsColor} Zigzag-\Nuts{}
& 160 & 398 \\
\rowcolor{harmonicHmcColor} \protect\citetSupp{pakman2014truncated_normal_hmc} with $\integrationTime \sim \unifDist(\pi / 8, \pi / 4 )$ & 0.244 & 2.46  \\
\rowcolor{harmonicHmcColor} \protect\citetSupp{pakman2014truncated_normal_hmc} with $\integrationTime \sim \unifDist(3 \pi / 16, 3 \pi / 8 )$ & 0.286 & 4.03 \\
\rowcolor{harmonicHmcColor} \protect\citetSupp{pakman2014truncated_normal_hmc} with $\integrationTime \sim \unifDist(\pi / 4, \pi / 2 )$ & 0.610 & 6.68 \\
\rowcolor{botevColor} \protect\citetSupp{botev2017truncated_normal} & $\leq 0.0139$ & $\leq 0.0139$ \\
\bottomrule
\end{tabular}
\end{table}


\newcommand{\acceptThreshold}{\texttt{accept\_thresh}}
\newcommand{\nAcceptableStates}{\texttt{n\_acceptable}}
\newcommand{\acceptProb}{\texttt{accept\_prob}}
\newcommand{\uturned}{\texttt{u\_turned}}
\newcommand{\height}{\texttt{height}}
\newcommand{\direction}{\texttt{direc}}
\newcommand{\bpositionInput}{\bposition_{\textrm{curr}}}
\newcommand{\bpositionFront}{\bposition_{\textrm{front}}}
\newcommand{\bpositionRear}{\bposition_{\textrm{rear}}}
\newcommand{\bpositionSample}{\bposition_{\textrm{next}}}
\newcommand{\bmomentumInput}{\bmomentum_{\textrm{curr}}}
\newcommand{\bmomentumFront}{\bmomentum_{\textrm{front}}}
\newcommand{\bmomentumRear}{\bmomentum_{\textrm{rear}}}
\newcommand{\bmomentumSample}{\bmomentum_{\textrm{next}}}

\section{No-U-turn algorithm for reversible dynamics}
\label{sec:nuts_with_reversible_dynamics}

\myedit{NutsForReversibleDynamics}{%
	In presenting the no-U-turn algorithm, \protect\citetSupp{hoffman2014nuts} focus on its use with the leapfrog integrator for simulating Hamiltonian dynamics based on Gaussian momentum.
	The algorithm in fact applies more generally, however, to any \hmc{}-like sampler based on reversible volume-preserving dynamics.
	In essence, out of any given reversible map, the no-U-turn algorithm constructs a transition kernel that preserves the joint position-momentum distribution while avoiding computationally wasteful simulation of dynamics beyond a U-turn in the simulated trajectory. 
	The auxiliary variable technically need not have the interpretation as actual momentum in Hamiltonian dynamics; 
	see \protect\citetSupp{fang2014compressible_hmc} for a more general notion of reversible dynamics and their use in Monte Carlo simulation.
	
	To describe the no-U-turn algorithm in this more general setting, let $\solutionOp(\bposition, \bmomentum)$ denote a reversible volume-preserving map; 
	i.e.\ $\momentumFlipOp \circ \solutionOp = (\momentumFlipOp \circ \solutionOp)^{-1}$ for the momentum flip operator $\momentumFlipOp(\bposition, \bmomentum) = (\bposition, - \bmomentum)$ and the Jacobian of $\solutionOp$ has its determinant equal to $\pm 1$.
	The map $\solutionOp$ corresponds to the leapfrog integrator in the context of \protect\citetSupp{hoffman2014nuts} and to Hamiltonian zigzag simulated for $\baseIntegrationTime$ duration in the context of Section~\ref{sec:zigzag_nuts}. 
	Algorithm~\ref{alg:no_u_turn} and \ref{alg:no_u_turn_sub_functions} describes the no-U-turn transition kernel in terms of a generic map $\solutionOp$. 
	When used with the leapfrog integrator, the algorithm coincides with the transition kernel in Algorithm~3 of \protect\citetSupp{hoffman2014nuts}. 
	
	Given a current state $(\bpositionInput, \bmomentumInput)$, the no-U-turn transition kernel searches for an appropriate next state along a trajectory $\{\solutionOp^n(\bpositionInput, \bmomentumInput) \text{ for } n = 0, \pm 1, \pm 2, \ldots\}$ endowed with a binary tree structure.
	To ensure the transition kernel's symmetry, the algorithm doubles the trajectory either forward or backward according to $\direction \gets \unifDist(\{-1, + 1\})$ and checks for U-turn occurrences between the front- and rear-most states of every subtrees within the binary tree. 
	The trajectory simulation is terminated when a U-turn behavior is detected.
	Constructed out of a deterministic map, the no-U-turn transition kernel is reducible on its own; 
	each iteration therefore must be accompanied by a momentum refreshment step $\bmomentum \sim \momentumMarginal(\cdot)$ as in \hmc{}.
	For further details on the no-U-turn algorithm, we refer readers to \protect\citetSupp{hoffman2014nuts}.%
}

\myeditNoPrint{NutsAlgCounter}{\arabic{algorithm}}
{\spacingset{1.1}
\setlength{\intextsep}{\baselineskip}
\algrenewcommand\algorithmicindent{1.5em}
\begin{algorithm}[H]
	\caption[caption]{%
		Construction, based on the no-U-turn criterion, of a transition kernel from a reversible volume-preserving map $\solutionOp(\cdot)$ using sub-routines described in Algorithm~\ref{alg:no_u_turn_sub_functions}
	}
	\label{alg:no_u_turn}
	\begin{algorithmic}[1]
		\Function{noUturnTransition}{$\solutionOp(\cdot), \bpositionInput, \bmomentumInput$}
			\State $\acceptThreshold \sim \unifDist(0, \pi(\bpositionInput, \bmomentumInput))$ 
			\State $\nAcceptableStates \gets 1$
        	\State $\uturned \gets \mathtt{False}$
        	\State $\height \gets 0$
        	\State $(\bpositionFront, \bmomentumFront) 
        		\gets (\bpositionRear, \bmomentumRear) 
        		\gets (\bpositionSample, \bmomentumSample) 
        		\gets(\bpositionInput, \bmomentumInput)$
        	\While{not $\uturned$}
        		\State $\direction \gets \unifDist(\{-1, + 1\})$
        		\State \begin{varwidth}[t]{\linewidth}
        			$(\bpositionFront, \thinnerspace \bmomentumFront), \thinnerspace (\bpositionRear, \bmomentumRear), \thinnerspace (\bpositionSample, \bmomentumSample), \thinnerspace \uturned, \thinnerspace \nAcceptableStates$ \\
	            	\hspace*{\algorithmicindent} $\gets \textproc{doubleTree}$%
	            	  	\begin{varwidth}[t]{\linewidth}
	            		$(\solutionOp(\cdot),\thinnerspace \direction, \thinnerspace \bpositionFront, \thinnerspace \bmomentumFront, \thinnerspace \bpositionRear, \thinnerspace \bmomentumRear, \thinnerspace \bpositionSample, \thinnerspace \bmomentumSample,$\\ 
	            		$\hphantom{(}\thinnerspace \uturned, \thinnerspace \nAcceptableStates, \thinnerspace \height, \thinnerspace \acceptThreshold)$
	            	 	\end{varwidth}
           			\end{varwidth}
           		\State $\height \gets \height + 1$
        	\EndWhile
        	\State \Return $(\bpositionSample, \bmomentumSample)$
	    \EndFunction
	    \\

\algblockdefx{DoubleTrajectoryFunction}{EndDoubleTrajectoryFunction}{%
   	\algorithmicfunction\ \textproc{doubleTree}%
  	\begin{varwidth}[t]{\linewidth}
	$(\solutionOp(\cdot),\thinnerspace \direction, \thinnerspace \bpositionFront, \thinnerspace \bmomentumFront, \thinnerspace \bpositionRear, \thinnerspace \bmomentumRear, \thinnerspace \bpositionSample, \thinnerspace \bmomentumSample,$\\ 
	$\hphantom{(}\thinnerspace \uturned, \thinnerspace \nAcceptableStates, \thinnerspace \height, \thinnerspace \acceptThreshold)$
 	\end{varwidth}
}{end function}
\makeatletter
\ifthenelse{\equal{\ALG@noend}{t}}%
  {\algtext*{EndDoubleTrajectoryFunction}}
  {}%
\makeatother

	    \DoubleTrajectoryFunction
       		\If{$\direction == + 1$}
       			\State \begin{varwidth}[t]{\linewidth}
        			$(\bpositionFront, \thinnerspace \bmomentumFront), \thinnerspace \underline{\hphantom{(\bpositionRear, \bmomentumRear)}}, \thinnerspace (\bpositionSample', \bmomentumSample'), \thinnerspace \uturned', \thinnerspace \nAcceptableStates'$ \\
	            	\hspace*{\algorithmicindent} $\gets \textproc{buildNextTree}(\solutionOp(\cdot), \thinnerspace +1, \thinnerspace \bpositionFront, \thinnerspace \bmomentumFront, \thinnerspace \height, \thinnerspace \acceptThreshold)$
           			\end{varwidth} 
       		\Else
       			\State \begin{varwidth}[t]{\linewidth}
        			$\underline{\hphantom{(\bpositionFront, \thinnerspace \bmomentumFront)}}, \thinnerspace (\bpositionRear, \bmomentumRear), \thinnerspace (\bpositionSample', \bmomentumSample'), \thinnerspace \uturned', \thinnerspace \nAcceptableStates'$ \\
	            	\hspace*{\algorithmicindent} $\gets \textproc{buildNextTree}(\solutionOp(\cdot), \thinnerspace - 1, \thinnerspace \bpositionRear, \thinnerspace \bmomentumRear, \thinnerspace \height, \thinnerspace \acceptThreshold)$
           			\end{varwidth} 
       		\EndIf
       		\State $\uturned \gets \uturned \text{ or } \uturned' \text{ or }  \textproc{checkUturn}(\bpositionFront, \thinnerspace \bmomentumFront, \thinnerspace \bpositionRear, \thinnerspace \bmomentumRear)$ 
       		\If{not $\uturned'$} 
       			\State \commentMarker\ Merge current and next trees provided no U-turn occurred within the next
       			\If{$\min\{1, \thinnerspace \nAcceptableStates' \thinnerspace / \,  \nAcceptableStates\} > u \sim \unifDist(0, 1)$}
       				\State $(\bpositionSample, \bmomentumSample) \gets (\bpositionSample', \bmomentumSample')$
       			\EndIf 
        		\State $\nAcceptableStates \gets \nAcceptableStates + \nAcceptableStates'$
       		\EndIf
       		\State \Return $(\bpositionFront, \thinnerspace \bmomentumFront), \thinnerspace (\bpositionRear, \bmomentumRear), \thinnerspace (\bpositionSample, \bmomentumSample), \thinnerspace \uturned, \thinnerspace \nAcceptableStates$
       	\EndDoubleTrajectoryFunction
       	\\
   	   	\Function{checkUturn}{$\bpositionFront, \thinnerspace \bmomentumFront, \thinnerspace \bpositionRear, \thinnerspace \bmomentumRear$}
   	   		\State \Return $\langle \bpositionFront - \bpositionRear, \thinnerspace \bmomentumFront \rangle < 0$ \,or\, $\langle \bpositionRear - \bpositionFront, \thinnerspace - \bmomentumRear \rangle < 0$
   		\EndFunction
	\end{algorithmic}
\end{algorithm}

\begin{algorithm}[H]
	\caption[caption]{Sub-routines for the no-U-turn transition of Algorithm~\ref{alg:no_u_turn}}
	\label{alg:no_u_turn_sub_functions}
	\begin{algorithmic}[1]
	\Function{buildNextTree}{$\solutionOp(\cdot), \thinnerspace \direction, \thinnerspace \bposition_0, \bmomentum_0, \thinnerspace \height, \thinnerspace \acceptThreshold$}
    	\If{$\height == 0$} 
	    	\State \Return $\textproc{buildSingletonTree}(\solutionOp(\cdot), \thinnerspace \direction, \thinnerspace \bposition_0, \bmomentum_0, \thinnerspace \acceptThreshold)$
    	\EndIf
    	\State \commentMarker\ Build the first half of the tree
    	\State \begin{varwidth}[t]{\linewidth}
      			$(\bpositionFront, \thinnerspace \bmomentumFront), \thinnerspace (\bpositionRear, \bmomentumRear), \thinnerspace (\bpositionSample, \bmomentumSample), \thinnerspace \uturned, \thinnerspace \nAcceptableStates$ \\
           	\hspace*{\algorithmicindent} $\gets \textproc{buildNextTree}(\solutionOp(\cdot), \bposition_0, \bmomentum_0, \thinnerspace \direction, \thinnerspace \height - 1, \thinnerspace \acceptThreshold)$
         	\end{varwidth}
		\If{$\uturned$} 
	       	\State \commentMarker\ No point of simulating dynamics any further if U-turn has already occurred
	       	\State \Return $(\bpositionFront, \thinnerspace \bmomentumFront), \thinnerspace (\bpositionRear, \bmomentumRear), \thinnerspace (\bpositionSample, \bmomentumSample), \thinnerspace \uturned, \thinnerspace \nAcceptableStates$
		\EndIf
		\State \commentMarker\ Build the latter half of the tree and merge with the first half
   		\If{$\direction == + 1$}
 			\State \begin{varwidth}[t]{\linewidth}
  			$(\bpositionFront, \thinnerspace \bmomentumFront), \thinnerspace \underline{\hphantom{(\bpositionRear, \bmomentumRear)}}, \thinnerspace (\bpositionSample', \bmomentumSample'), \thinnerspace \uturned', \thinnerspace \nAcceptableStates'$ \\
       	\hspace*{\algorithmicindent} $\gets \textproc{buildNextTree}(\solutionOp(\cdot), \thinnerspace +1, \thinnerspace \bpositionFront, \thinnerspace \bmomentumFront, \thinnerspace \height - 1, \thinnerspace \acceptThreshold)$
     			\end{varwidth} 
 		\Else
 			\State \begin{varwidth}[t]{\linewidth}
  			$\underline{\hphantom{(\bpositionFront, \thinnerspace \bmomentumFront)}}, \thinnerspace (\bpositionRear, \bmomentumRear), \thinnerspace (\bpositionSample', \bmomentumSample'), \thinnerspace \uturned', \thinnerspace \nAcceptableStates'$ \\
       	\hspace*{\algorithmicindent} $\gets \textproc{buildNextTree}(\solutionOp(\cdot), \thinnerspace -1, \thinnerspace \bpositionRear, \thinnerspace \bmomentumRear, \thinnerspace \height - 1, \thinnerspace \acceptThreshold)$
     			\end{varwidth} 
   		\EndIf
      	\State $\uturned \gets \uturned \text{ or } \uturned' \text{ or }  \textproc{checkUturn}(\bpositionFront, \thinnerspace \bmomentumFront, \thinnerspace \bpositionRear, \thinnerspace \bmomentumRear)$ 
   		\If{not $\uturned'$}
 			\If{$\thinnerspace \nAcceptableStates' \thinnerspace / \,  (\nAcceptableStates + \nAcceptableStates') > u \sim \unifDist(0, 1)$}
 				\State $(\bpositionSample, \bmomentumSample) \gets (\bpositionSample', \bmomentumSample')$
 			\EndIf 
  		\State $\nAcceptableStates \gets \nAcceptableStates + \nAcceptableStates'$
 		\EndIf
      	\State \Return $(\bpositionFront, \thinnerspace \bmomentumFront), \thinnerspace (\bpositionRear, \bmomentumRear), \thinnerspace (\bpositionSample, \bmomentumSample), \thinnerspace \uturned, \thinnerspace \nAcceptableStates$
   	\EndFunction \\
   	\Function{buildSingletonTree}{$\solutionOp(\cdot), \thinnerspace \direction, \thinnerspace \bposition_0, \bmomentum_0, \thinnerspace \acceptThreshold$}
   		\State $(\bposition_1, \thinnerspace \direction * \bmomentum_1) \gets \solutionOp(\bposition_0, \thinnerspace \direction * \bmomentum_0)$
   		\State $\nAcceptableStates \gets \indicator\{ \pi(\bposition_1, \bmomentum_1) > \acceptThreshold \}$
   		\State $\uturned \gets \mathtt{False}$
   		\State $(\bpositionFront, \bmomentumFront) 
       		\gets (\bpositionRear, \bmomentumRear) 
       		\gets (\bpositionSample, \bmomentumSample) 
       		\gets (\bposition_1, \bmomentum_1)$
       	\State \Return $(\bpositionFront, \thinnerspace \bmomentumFront), \thinnerspace (\bpositionRear, \bmomentumRear), \thinnerspace (\bpositionSample, \bmomentumSample), \thinnerspace \uturned, \thinnerspace \nAcceptableStates$
   	\EndFunction 
	\end{algorithmic}
\end{algorithm}
}

\section{Choosing base integration time for Zigzag-\Nuts{}}
\label{sec:base_integration_time_for_nuts}
In Section~\ref{sec:zigzag_nuts}, we discuss how to automatically tune an integration time for Hamiltonian zigzag via the no-U-turn algorithm. 
As we mention there, the base integration time $\baseIntegrationTime = \eigenValue_{\min}^{-1/2}(\bPhi) \baseIntegrationTimeMultiplier$ for $\baseIntegrationTimeMultiplier = 0.1$ works well in a broad range of problems but is not always optimal.
We explain and illustrate this behavior using the posteriors of Section~\ref{sec:simulation}.

As discussed in Section~\ref{sec:synthetic_examples}, the compound symmetric posterior has its probability tightly concentrated along the principal component and is otherwise symmetric in all the other directions.
Due to this extreme structure, it takes a while even for Hamiltonian zigzag to gain momentum and start traveling along the least constrained direction. 
In particular, while Hamiltonian zigzag eventually finds its way over a sufficiently long time period, it also undergoes many local U-turns at a shorter time scale (Figure~\ref{fig:hamiltonian_traj_and_sq_traveled_dist_on_cs_target}).
Such local behavior can cause the no-U-turn algorithm to terminate its trajectory simulation prematurely, leading to an integration time that is too small and sub-optimal \protect\citepSupp{neal2012nuts_blog_post}. 
Since the algorithm checks for U-turns only at discrete time points $t = 2^\ell \baseIntegrationTime$ for $\ell \geq 0$, however,
we can to a reasonably extent prevent 
premature trajectory terminations in Zigzag-\Nuts{} by choosing $\baseIntegrationTime$ larger than time scales of local U-turns. 
For this reason, Zigzag-\Nuts{} performs better with the choice $\baseIntegrationTimeMultiplier > 0.1$ for these compound symmetric posteriors.
As seen in Table~\ref{tab:additional_ess_per_time_compound_symmetric}, we observe $4$ to $7$-fold increase in \ess{} for the $\rho = 0.9$ case when moving from $\baseIntegrationTimeMultiplier = 0.1$ to $\baseIntegrationTimeMultiplier = 1$.
\begin{figure}
\centering
	\begin{minipage}{.425\linewidth}
	\includegraphics[width=\linewidth]{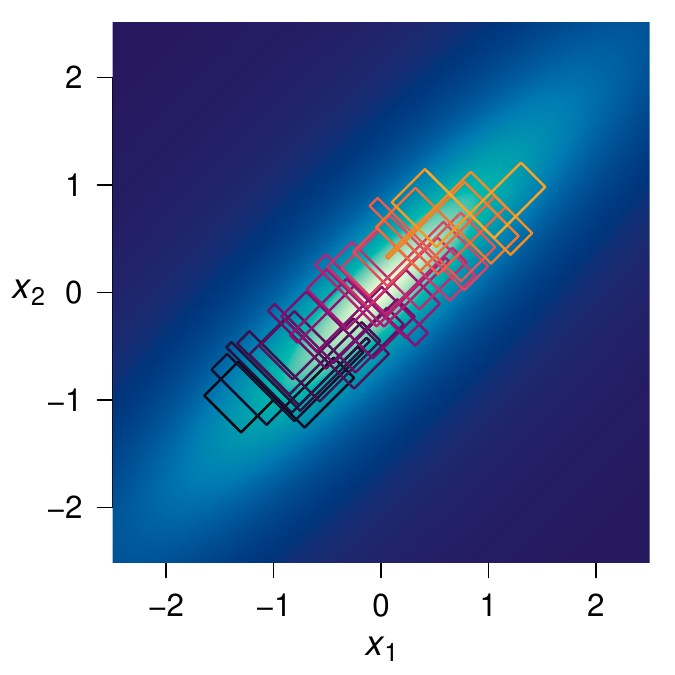}
	\end{minipage}
	\nobreak\hspace{.1em} 
	\begin{minipage}{.555\linewidth}
	\includegraphics[width=\linewidth]{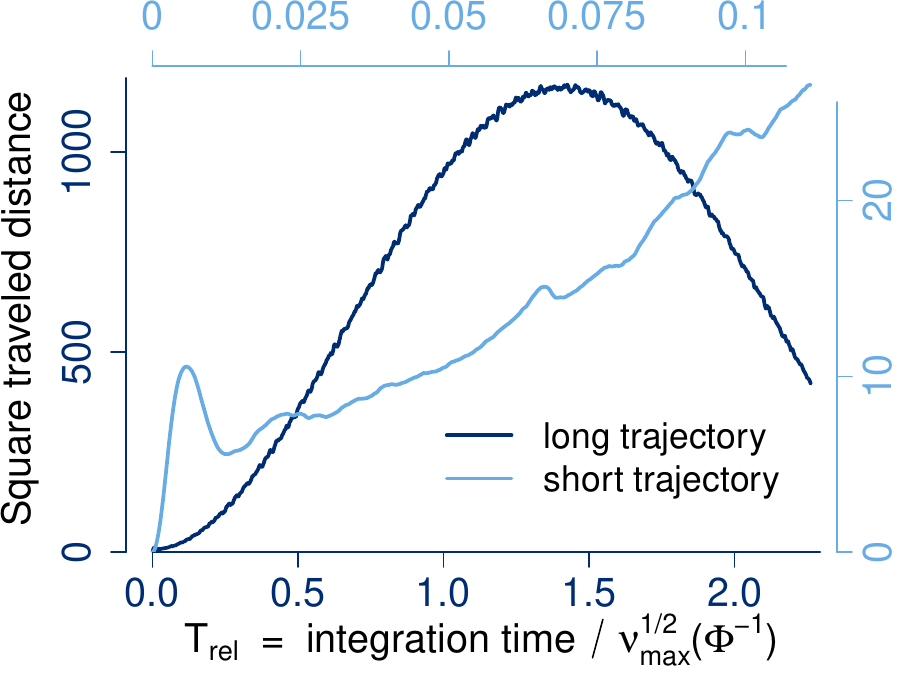}
	\end{minipage}
\caption{%
	Trajectory of the first two position coordinates (left) and squared distance $\| \bposition(t) - \bposition_0 \|^2$ (right) of Hamiltonian zigzag without momentum refreshment on the $1{,}024$-dimensional truncated compound symmetric target with $\rho = 0.99$.
	The dynamics is simulated for $4.5 \times 10^5$ linear segments, starting from a pre-specified position $\position_i = 0.5$ and random momentum $\momentum_i \sim \laplaceDist(\textrm{scale} = 1)$.
	In the left figure, the line segment colors change from darkest to lightest as the dynamics evolves.
	To the right, the squared traveled distance is plotted as a function of the relative integration time. 
	To help examine the wiggliness at shorter time scales, we zoom into the first $5\%$ of the trajectory and plot it in the lighter blue color.
}
\label{fig:hamiltonian_traj_and_sq_traveled_dist_on_cs_target}
\end{figure}

\begin{table}
\caption{%
	Relative \ess{} per computing time under the compound symmetric posteriors.
	The table follows the same format as Table~\ref{tab:ess_per_time_compound_symmetric} but has additional rows for the results based on Zigzag-\Nuts{} with $\baseIntegrationTimeMultiplier = 1$.
	\label{tab:additional_ess_per_time_compound_symmetric}
}
\centering
\begin{tabular}[t]{lcccc}
\toprule
& \multicolumn{4}{c}{Relative \Ess{} per time}\\
\cmidrule(l{3pt}r{3pt}){2-5}
\multicolumn{1}{c}{Compound symmetric} 
& \multicolumn{2}{c}{$\rho = 0.9$} & \multicolumn{2}{c}{$\rho = 0.99$} \\
\cmidrule(l{3pt}r{3pt}){2-3} \cmidrule(l{3pt}r{3pt}){4-5}
\multicolumn{1}{c}{} & $x_1$ & PC & $x_1$ & PC \\
\midrule
\multicolumn{1}{c}{Case: $d = 256$} & \multicolumn{4}{c}{} \\
\rowcolor{markovColor} Markovian & 1 & 1 & 1 & 1\\
\rowcolor{defaultNutsColor} Zigzag-\Nuts{} $\left( \baseIntegrationTimeMultiplier = 0.1 \right)$ 
& 4.5 & 4.6 & 41 & 40 \\
\rowcolor{altNutsColor} Zigzag-\Nuts{} $\left( \baseIntegrationTimeMultiplier = 1 \right)$
& 15 & 18 & 55 & 55 \\
\rowcolor{white} 
\multicolumn{1}{c}{Case: $d = 1{,}024$} & \multicolumn{4}{c}{} \rule{0pt}{12pt}\\
\rowcolor{defaultNutsColor} Zigzag-\Nuts{} $\left( \baseIntegrationTimeMultiplier = 0.1 \right)$ 
& 4.7 & 4.5 & 54 & 54 \\
\rowcolor{altNutsColor} Zigzag-\Nuts{} $\left( \baseIntegrationTimeMultiplier = 1 \right)$ 
& 27 & 33 & 78 & 78 \\
\bottomrule
\end{tabular}
\end{table}

\FloatBarrier
\section{Performance comparison of two zigzags in platform-independent metric}
\label{sec:ess_per_event}

As pointed out in Section~\ref{sec:simulation_setup_and_efficiency_metric}, it is a considerable challenge to account for computational costs of different \mcmc{} algorithms in a practically informative manner. 
As one alternative to \ess{} per time, here we consider \ess{} per velocity switch event as a platform-independent performance metric.
This metric is motivated by the similarity in required calculations for the two zigzags' inter-event simulations (Algorithm~\ref{alg:hamiltonian_zigzag_simulation} and \ref{alg:markovian_zigzag_simulation}). 
\Ess{} per event is also of theoretical interest since, theoretically speaking, the main difference between the two zigzags arise from the difference in their inter-event time distributions (Line~\ref{line:coord_wise_event_time} of Algorithm~\ref{alg:hamiltonian_zigzag_simulation} and \ref{alg:markovian_zigzag_simulation}).

In case of truncated Gaussians (Algorithm~\ref{alg:hamiltonian_zigzag_for_truncated_normal} and \ref{alg:markovian_zigzag_for_truncated_normal}), our code profiling reveals generating a large number of pseudo-random numbers (Line~\ref{line::uniform_rv_for_truncated_gaussian_markovian_zigzag}) for Markovian zigzag to be the most computationally expensive step.
With the trajectory simulation optimized for truncated Gaussian targets, the rest of the algorithms requires less computational efforts. 
This fact gives Hamiltonian zigzag an additional advantage in terms of computational speed and hence of \ess{} per time as presented in Section~\ref{sec:simulation}.
In contrast, our use of \ess{} per event here tilts the comparison in favor of Markovian zigzag.

The numerical results of Section~\ref{sec:simulation} are presented, now in terms of the alternative performance metric, in Table~\ref{tab:ess_per_event_compound_symmetric} and \ref{tab:ess_per_event_phylo_probit}. 
Overall, we observe the same pattern here as in Section~\ref{sec:simulation} ---
Hamiltonian zigzag outperforms Markovian one except in the i.i.d.\ Gaussian case, with greater efficiency gain at stronger correlations.
The main difference, for the reason we describe above, is that efficiency gains are smaller when measured in \ess{} per event rather than in \ess{} per time (cf.\ Table~\ref{tab:ess_per_time_compound_symmetric} and \ref{tab:ess_per_time_phylo_probit}).
We see that relative \ess{} per time is typically greater than that per event by a factor of $3$ to $5$.\footnote{%
	This factor varies more in the i.i.d.\ Gaussian case (under $\rho = 0$ in Table~\ref{tab:ess_per_time_compound_symmetric} and \ref{tab:ess_per_event_compound_symmetric}), ranging from $2.0$ --- for the case of Zigzag-\Nuts{} $\left( \baseIntegrationTimeMultiplier = 0.1 \right)$ and $\nParam = 1{,}024$ --- to $8.2$ --- for the case of Zigzag-\Hmc{} and $\nParam = 256$. 
	This greater variations is due to the exceptional simplicity of the i.i.d.\ Gaussian target;
	this particular target is so simple that aspects of the algorithms, which ordinarily contribute little to overall computational costs, becomes significant.
}

\begin{table}
	\caption{%
		\Ess{} per velocity switch event --- relative to that of Markovian zigzag --- under the compound symmetric posteriors.
		Except for the difference in $\ess{}$ standardizations, the table follows the same format as Table~\ref{tab:ess_per_time_compound_symmetric}.
		\label{tab:ess_per_event_compound_symmetric}
	}
	\centering
	\begin{tabular}[t]{lccccccc}
		\toprule
		& \multicolumn{5}{c}{Relative \Ess{} per event}\\
		\cmidrule(l{3pt}r{3pt}){2-6}
		\multicolumn{1}{c}{Compound symmetric} 
		& \multicolumn{1}{c}{$\rho = 0$} & \multicolumn{2}{c}{$\rho = 0.9$} & \multicolumn{2}{c}{$\rho = 0.99$} \\
		\cmidrule(l{3pt}r{3pt}){2-2} \cmidrule(l{3pt}r{3pt}){3-4} \cmidrule(l{3pt}r{3pt}){5-6}
		\multicolumn{1}{c}{} & $x_1$  & $x_1$ & PC & $x_1$ & PC \\
		\midrule
		\multicolumn{1}{c}{Case: $d = 256$} & \multicolumn{5}{c}{} \\
		\rowcolor{markovColor} Markovian & 1 & 1 & 1 & 1 & 1\\
		\rowcolor{defaultNutsColor} Zigzag-\Nuts{} $\left( \baseIntegrationTimeMultiplier = 0.1 \right)$ 
		& 0.27 & 1.2 & 1.3 & 8.0 & 8.0 \\ 
		\rowcolor{altNutsColor} Zigzag-\Nuts{} $\left( \baseIntegrationTimeMultiplier = 1 \right)$
		& 0.37 & 3.2 & 3.8 & 9.6 & 9.7 \\
		\rowcolor{hzzManualColor} Zigzag-\Hmc{} $\left( \integrationTime_\textrm{rel} = \sqrt{2} \right)$ 
		& 0.67 & 8.3 & 12 & 34 & 34 \\ 
		\rowcolor{white} 
		\multicolumn{1}{c}{Case: $d = 1{,}024$} & \multicolumn{5}{c}{} \rule{0pt}{12pt}\\
		\rowcolor{defaultNutsColor} Zigzag-\Nuts{} $\left( \baseIntegrationTimeMultiplier = 0.1 \right)$ 
		& 0.29 & 1.9 & 1.8 & 15 & 15 \\
		\rowcolor{altNutsColor} Zigzag-\Nuts{} $\left( \baseIntegrationTimeMultiplier = 1 \right)$ 
		& 0.33 & 7.5 & 9.1 & 19 & 19 \\ 
		\rowcolor{hzzManualColor} Zigzag-\Hmc{} $\left( \integrationTime_\textrm{rel}  = \sqrt{2} \right)$ 
		& 0.68 & 16 & 24 & 71 & 71 \\
		\bottomrule
	\end{tabular}
\end{table}
\begin{table}
	\caption{%
		Relative \ess{} per velocity switch event under the phylogenetic probit posterior.
		Except for the difference in $\ess{}$ standardizations, the table follows the same format as Table~\ref{tab:ess_per_time_phylo_probit}.
		\label{tab:ess_per_event_phylo_probit}
	}
	\centering
	\begin{tabular}[t]{lcccc}
		\toprule
		\multirow{2}{*}{\hspace{2em} Phylogenetic probit \rule{0pt}{12pt}} 
		& \multicolumn{2}{c}{Relative \Ess{} per event} \\
		\cmidrule(l{3pt}r{3pt}){2-3} 
		\cmidrule(l{3pt}r{3pt}){4-5}
		& \hspace{2ex} min \hspace{2ex} & PC\\
		\midrule
		\rowcolor{markovColor} Markovian & 1 & 1\\
		\rowcolor{defaultNutsColor} Zigzag-\Nuts{} $\left( \baseIntegrationTimeMultiplier = 0.1 \right)$ 
		& 2.0 & 5.9\\
		\bottomrule
	\end{tabular}
\end{table}

\section{Two zigzags' performance on rotated versions of compound symmetric targets}
\label{sec:rotated_compound_symmetric}

\myedit{ZigzagUnderRotatedCSSupplement}{%
	Here we answer the question posed in the last paragraph of Section~\ref{sec:synthetic_examples} by comparing the two zigzags' performances on rotated version of the compound symmetric targets. 
	We consider Gaussian targets with covariance matrices $\bPhi^{-1} = (1 - \rho) \Id + \rho \thinnerspace \principalComp \principalComp^\transpose$ whose principal components $\principalComp \in \mathbb{R}^\nParam$ are drawn uniformly from the $(\nParam - 1)$-dimensional sphere.
	We opt not to truncate the targets in this simulation since truncations can change targets correlation structures and get in the way of studying how the orientations of targets affect the zigzags' performances. 
	We compare the two zigzags under three different rotations with randomly generated $\principalComp$'s as well as under no rotation with $\principalComp = (1, \ldots, 1) / \sqrt{\nParam}$.
	
	The results are summarized in Table~\ref{tab:ess_per_time_rotated_compound_symmetric} and \ref{tab:ess_per_time_rotated_compound_symmetric_comparison_across_rotations}, focusing respectively on how the samplers' efficiencies vary across the different correlation strengths and across the different rotations of the compound symmetric Gaussian.
	Table~\ref{tab:ess_per_time_rotated_compound_symmetric} report \ess{}'s relative to Markovian zigzag to facilitate comparison among the algorithms.
	Table~\ref{tab:ess_per_time_rotated_compound_symmetric_comparison_across_rotations} on the other hand shows \ess{} in absolute terms to provide better sense of the computational challenge posed by the extreme correlation.
	We report \ess{}'s of zigzag \nuts{} under the same relative base integration time $\baseIntegrationTime_\textrm{rel} = 0.1$ as in Section~\ref{sec:synthetic_examples} as well as under $\baseIntegrationTime_\textrm{rel} = 1$, where the latter choice can yield significantly improved performance in the presence of such extreme correlation as in the compound symmetric case (Supplement~\ref{sec:base_integration_time_for_nuts}).
	For zigzag \hmc{}, we attempt $\integrationTime_\textrm{rel} = 2^{k / 2}$ for $k = -2, -1, 0, 1, 2$ as in Section~\ref{sec:synthetic_examples}.
	We report \ess{}'s under the same relative integration time $\integrationTime_\textrm{rel} = \sqrt{2}$ as in Section~\ref{sec:synthetic_examples} as well as under $\integrationTime_\textrm{rel} = 2$, where the latter choice delivers better performance here, likely due to the untruncated nature of the targets warranting slightly longer travel times for comprehensive exploration.
	
	Table~\ref{tab:ess_per_time_rotated_compound_symmetric} shows an essentially identical pattern as in Table~\ref{tab:ess_per_time_compound_symmetric} of Section~\ref{sec:synthetic_examples}, with Hamiltonian zigzag demonstrating increasingly superior performance over its Markovian counterpart as the correlation increases.
	Table~\ref{tab:ess_per_time_rotated_compound_symmetric_comparison_across_rotations} shows that each sampler's performance change little across different rotations of the target. 
	In summary, the results here indicate that the advantage of Hamiltonian zigzag does not depend on a specific orientation of a given target.%
}

\begin{table}
\centering
\caption{%
	\Ess{} per computing time, relative to that of Markovian zigzag, under rotated compound symmetric targets.
	The results are summarized in the same manner as in Table~\ref{tab:ess_per_time_compound_symmetric} of Section~\ref{sec:synthetic_examples}.
	Instead of the first coordinate \ess{}, however, we report the minimum \ess{} across the coordinates under the label ``min'' since the coordinates are no longer exchangeable after rotation.
	Under each target-sampler combination, the ``min'' and ``PC'' \ess{} are similar to each other, suggesting the principal component direction indeed constitutes the bottleneck in the chains' explorations.
	The results are obtained, for the same reasons as explained in Table~\ref{tab:comparison_against_other_samplers_ess_per_time}, using the algorithms' implementations provided by the \texttt{hdtg} package \protect\citepSupp{zhang2022hdtg} and using computing resources from Johns Hopkins' \jhpce{} cluster.
}
\label{tab:ess_per_time_rotated_compound_symmetric}
\begin{tabular}[t]{lcccc}
\toprule
\multirow{3}{*}{
	\hspace{1ex}
	\begin{tabular}[c]{@{}c@{}}
	\\[-20pt]
	Rotated \\[-8pt]
	compound symmetric
	\end{tabular}
}
& \multicolumn{4}{c}{Relative \Ess{} per time}\\
\cmidrule(l{3pt}r{3pt}){2-5}
& \multicolumn{2}{c}{$\rho = 0.9$} & \multicolumn{2}{c}{$\rho = 0.99$} \\
\cmidrule(l{3pt}r{3pt}){2-3} \cmidrule(l{3pt}r{3pt}){4-5}
\multicolumn{1}{c}{} & min & PC & min & PC \\
\midrule
\multicolumn{1}{c}{Case: $d = 256$} & \multicolumn{4}{c}{} \\
\rowcolor{markovColor} Markovian & 1 & 1 & 1 & 1 \\
\rowcolor{defaultNutsColor} Zigzag-\Nuts{} $\left( \baseIntegrationTime_\textrm{rel}  = 0.1 \right)$
& 3.7 & 3.7 & 63 & 63 \\ 
\rowcolor{defaultNutsColor} Zigzag-\Nuts{} $\left( \baseIntegrationTime_\textrm{rel}  = 1 \right)$
& 25 & 26 & 75 & 74 \\ 
\rowcolor{hzzManualColor} Zigzag-\Hmc{} $\left( \integrationTime_\textrm{rel} = \sqrt{2} \right)$
& 42 & 43 & 140 & 140 \\
\rowcolor{hzzManualColor} Zigzag-\Hmc{} $\left( \integrationTime_\textrm{rel} = 2 \right)$
& 75 & 77 & 250 & 250 \\
\rowcolor{white} 
\multicolumn{1}{c}{Case: $d = 1{,}024$} & \multicolumn{4}{c}{} \rule{0pt}{12pt}\\
\rowcolor{defaultNutsColor} Zigzag-\Nuts{} $\left( \baseIntegrationTime_\textrm{rel}  = 0.1 \right)$
& 6.5 & 6.7 & 130 & 140 \\ 
\rowcolor{defaultNutsColor} Zigzag-\Nuts{} $\left( \baseIntegrationTime_\textrm{rel}  = 1 \right)$
& 63 & 63 & 180 & 180 \\ 
\rowcolor{hzzManualColor} Zigzag-\Hmc{} $\left( \integrationTime_\textrm{rel} = \sqrt{2} \right)$
& 100 & 100 & 310 & 310 \\
\rowcolor{hzzManualColor} Zigzag-\Hmc{} $\left( \integrationTime_\textrm{rel} = 2 \right)$
& 180 & 180 & 560 & 570 \\
\bottomrule
\end{tabular}
\end{table}

\begin{table}
\centering
\caption{%
	\Ess{} per minute along principal component, compared across different rotations of the compound symmetric target with $\rho = 0.99$.
	The coordinate-wise minimum \ess{}'s are similar to those along PC and are hence omitted for space.
	The ``none'' label indicates the unrotated one with principal component $\principalComp = (1, \ldots, 1) / \sqrt{\nParam}$ and corresponding covariance structure as in Eq.~\eqref{eq:compound_symmetry_covariance}.
	The results are obtained, for the same reasons as explained in Table~\ref{tab:comparison_against_other_samplers_ess_per_time}, using the algorithms' implementations in the \texttt{hdtg} package \protect\citepSupp{zhang2022hdtg} and using computing resources from Johns Hopkins' \jhpce{} cluster.
}
\label{tab:ess_per_time_rotated_compound_symmetric_comparison_across_rotations}
\begin{tabular}[t]{lcccc}
\toprule
\multirow{3}{*}{
	\hspace{1ex}
	\begin{tabular}[c]{@{}c@{}}
	Rotated \\[-8pt]
	compound symmetric
	\end{tabular}
}
& \multicolumn{4}{c}{\Ess{} per minute along PC}\\
\cmidrule(l{3pt}r{3pt}){2-5}
& \multicolumn{4}{c}{
	Rotation type
} \\[-6pt]
& none & ver.~1 & ver.\ 2 & ver.\ 3 \\
\midrule
\multicolumn{1}{c}{Case: $d = 256$, $\rho = 0.99$} & \multicolumn{4}{c}{} \\
\rowcolor{markovColor} Markovian & 2.9 & 2.8 & 2.7 & 2.8 \\
\rowcolor{defaultNutsColor} Zigzag-\Nuts{} $\left( \baseIntegrationTime_\textrm{rel}  = 0.1 \right)$\hspace*{0ex}
& 180 & 180 & 180 & 190 \\ 
\rowcolor{defaultNutsColor} Zigzag-\Nuts{} $\left( \baseIntegrationTime_\textrm{rel}  = 1 \right)$\hspace*{0ex}
& 220 & 210 & 210 & 210 \\ 
\rowcolor{hzzManualColor} Zigzag-\Hmc{} $\left( \integrationTime_\textrm{rel} = \sqrt{2} \right)$\hspace*{-1.5ex}
& 410 & 390 & 410 & 400 \\
\rowcolor{hzzManualColor} Zigzag-\Hmc{} $\left( \integrationTime_\textrm{rel} = 2 \right)$\hspace*{-1.5ex}
& 710 & 700 & 690 & 690 \\
\rowcolor{white} 
\multicolumn{1}{c}{Case: $d = 1{,}024$, $\rho = 0.99$} & \multicolumn{4}{c}{} \rule{0pt}{12pt}\\
\rowcolor{markovColor} Markovian & 0.045 & 0.047 & 0.054 & 0.043 \\
\rowcolor{defaultNutsColor} Zigzag-\Nuts{} $\left( \baseIntegrationTime_\textrm{rel}  = 0.1 \right)$\hspace*{0ex}
& 6.2 & 6.4 & 6.2 & 5.9 \\ 
\rowcolor{defaultNutsColor} Zigzag-\Nuts{} $\left( \baseIntegrationTime_\textrm{rel}  = 1 \right)$\hspace*{0ex}
& 8.3 & 8.5 & 8.3 & 8.2 \\ 
\rowcolor{hzzManualColor} Zigzag-\Hmc{} $\left( \integrationTime_\textrm{rel} = \sqrt{2} \right)$\hspace*{-1.5ex}
& 15 & 15 & 15 & 15 \\
\rowcolor{hzzManualColor} Zigzag-\Hmc{} $\left( \integrationTime_\textrm{rel} = 2 \right)$\hspace*{-1.5ex}
& 26 & 27 & 26 & 27 \\
\bottomrule
\end{tabular}
\end{table}

\FloatBarrier

\section{Preconditioning target under Hamiltonian zigzag}
\label{sec:preconditioning}
\myedit{PreconditioningSimulation}{%
	Performances of \hmc{} algorithms can often be improved by \textit{preconditioning}; 
	i.e.\ targeting the density of a linearly transformed parameter $\tilde{\bposition} = \bm{A} \bposition$ for an invertible matrix $\bm{A}$ \protect\citepSupp{stan_manual}. 
	Incidentally, this can be shown as equivalent to specifying a momentum distribution with non-identity covariance instead of actually transforming the parameter of interest \protect\citepSupp{neal2010hmc, nishimura2020discontinuous_hmc}.
	Empirical and theoretical evidence support the use of transformation $\bm{A} = \bm{\Sigma}^{-1/2}$ or $\diag\!\left( \bm{\Sigma} \right)^{-1/2}$ for $\bm{\Sigma} = \textrm{Var}(\bposition)$ to improve efficiency \protect\citepSupp{girolami2011riemann_manifold_hmc, stan_manual}.
	
	In our phylogenetic application, we might take $\bPhi^{-1} = \bPhi(\traitCovariance, \phylogeny)^{-1}$ as an approximation of $\textrm{Var}(\bposition)$ by ignoring effects of the truncation and consider preconditioning via $\bm{A} = \bPhi^{1/2}$ or $\diag\!\left( \bPhi^{-1}  \right)^{-1/2}$.
	Neither choice is practical, however, as both require expensive matrix computations on $\bPhi$ to obtain necessary quantities. 
	Moreover, the diagonal preconditioning would have only a minor effect anyway because the phylogenetic model is parametrized in a way that the diagonals of $\bPhi^{-1}$ are on similar scales.
	In fact, by explicitly computing $\bPhi^{-1}$, we find that its diagonals range between $1.44$ and $2.00$ in its values.
	
	More generally for a (truncated) Gaussian target parameterized by a precision matrix $\bPhi$, $\diag(\bPhi)^{-1}$ could be a reasonable enough approximation of $\diag(\bPhi^{-1})$ to improve the efficiency of zigzag \hmc{} and \nuts{} through preconditioning.
	On the other hands, the diagonals of $\bPhi$ represent the conditional variances while the diagonals of $\bPhi^{-1}$ represent the marginal variances, so the two diagonals differ substantially in the presence of substantial correlation among parameters.
	For the purpose of preconditioning, therefore, $\diag(\bPhi^{-1})$ can be a poor enough approximation of $\diag(\bPhi)^{-1}$ to actually hurt the samplers' efficiency.
	We in fact observe the preconditioning via $\bm{A} = \diag\!\left( \bPhi  \right)^{1/2}$ to be more harmful rather than helpful in our phylogenetic application (Table~\ref{tab:ess_of_zigzag_with_preconditioning}).
	We thus caution against the default use of $\bm{A} = \diag\!\left( \bPhi  \right)^{1/2}$ for preconditioning;
	it is important that $\bm{A}$ actually approximates $\bm{\Sigma}^{-1/2}$ or $\diag\!\left( \bm{\Sigma} \right)^{-1/2}$.%
}

\FloatBarrier
\begin{table}[htb]
\centering
\caption{%
	Relative \ess{} per computing time under the phylogenetic probit posterior of Section~\ref{sec:phylogenetic_probit}. 
	``Zigzag-\Nuts{} (w/o preconditioning)'' is identical to the algorithm compared to Markovian zigzag in Table~\ref{tab:ess_per_time_phylo_probit}.
	The preconditioned version uses the linear transformation $\bm{A} = \diag\!\left( \bPhi  \right)^{1/2}$ based on the conditional variances, which actually hurts the sampler's efficiency. 
	The results are obtained, for the same reasons as explained in Table~\ref{tab:comparison_against_other_samplers_ess_per_time}, using the algorithms' implementations in the \texttt{hdtg} package \protect\citepSupp{zhang2022hdtg} and using computing resources from Johns Hopkins' \jhpce{} cluster. 
	The simulation settings and \ess{} calculations are otherwise identical to those in Section~\ref{sec:phylogenetic_probit}.
}
\label{tab:ess_of_zigzag_with_preconditioning}
\begin{tabular}[t]{lcccc}
\toprule
\multirow{2}{*}{Phylogenetic probit \hspace*{-1ex} \rule{0pt}{12pt}} 
& \multicolumn{2}{c}{Relative \Ess{} per time} \\
\cmidrule(l{3pt}r{3pt}){2-3} 
\cmidrule(l{3pt}r{3pt}){4-5}
& \hspace{2ex} min \hspace{2ex} & PC\\
\midrule
\rowcolor{defaultNutsColor} Zigzag-\Nuts{} (w/o preconditioning)
& 2.4 & 2.1\\
\rowcolor{altNutsColor} Zigzag-\Nuts{} (with preconditioning) & 1 & 1\\
\bottomrule\\
\end{tabular}
\end{table}

{\spacingset{1.4}
\bibliographystyleSupp{agsm}
\bibliographySupp{zigzag_hmc}
}

\end{document}